\newcommand{\TODO}[1]{\textcolor{red}{\textbf{TODO: #1}} }
\theoremstyle{plain}
\newtheorem{thm}{Theorem}[section]
\newtheorem{lem}[thm]{Lemma}
\newtheorem{defi}[thm]{Definition}
\newtheorem*{defi*}{Definition}
\newtheorem{cor}[thm]{Corollary}
\newtheorem{claim}[thm]{Claim}
\theoremstyle{remark}
\newtheorem{rem}{Remark}
\newtheorem*{exmp*}{Example}
\newcommand{\N}{\mathbb{N}}
\newcommand{\Z}{\mathbb{Z}}
\newcommand{\m}{\mathfrak{m}}
\renewcommand{\O}{\mathcal{O}}
\newcommand{\F}{\mathbb{F}}
\newcommand{\btt}{\mathrm{BTT}}
\newcommand{\rs}{\mathrm{RS}}
\newcommand{\frs}{\mathrm{FRS}}
\newcommand{\ag}{\mathrm{AG}}
\newcommand{\dist}{\mathrm{dist}}
\newcommand{\rank}{\mathrm{rank}}
\newcommand{\poly}{\mathrm{poly}}
\renewcommand{\div}{\mathrm{div}}
\newcommand{\gal}{\mathrm{Gal}}
\newcommand{\Div}{\mathrm{Div}}
\newcommand{\supp}{\mathrm{Supp}}
\newcommand{\spn}{\ensuremath{\operatorname{span}}}
\newcommand{\img}{\ensuremath{\operatorname{Image}}}
\newcommand{\eps}{\epsilon}
\title{Efficient List-Decoding \\ with Constant Alphabet and List Sizes}
\author{
Zeyu Guo {\hskip 5em\relax}
Noga Ron-Zewi\thanks{Research supported in part by  ISF grant 735/20.} \\ 
Department of Computer Science, University of Haifa \\ \texttt{zguotcs@gmail.com}\qquad  \texttt{noga@cs.haifa.ac.il}
}
\date{}
\begin{document}
 
\maketitle

\begin{abstract}

We present an explicit and efficient algebraic construction of capacity-achieving list decodable codes with \emph{both} constant alphabet and constant list sizes. 
More specifically, for any $R \in (0,1)$ and $\epsilon >0$, we give an algebraic construction of
an infinite family of error-correcting codes of rate $R$, over an alphabet of size $(1/\epsilon)^{O(1/\epsilon^2)}$, that can be list decoded from  a $(1-R-\epsilon)$-fraction of errors with list size at most $\exp(\poly(1/\eps))$. Moreover, the codes can be encoded in time $\poly(1/\eps,n)$, the output list is contained in a linear subspace of dimension at most 
$\poly(1/\eps)$, and a basis for this subspace can be found in time $\poly(1 /\epsilon, n)$. Thus, both encoding and list decoding can be performed in \emph{fully polynomial-time} $\poly(1/\eps,n)$, except for pruning the subspace and outputting the final list which takes time $\exp(\poly(1/\eps)) \cdot \poly(n)$. In contrast, prior explicit and efficient constructions of capacity-achieving list decodable codes either required a much higher complexity in terms of $1/\eps$ (and were additionally much less structured), or had super-constant alphabet or list sizes.

Our codes are quite natural and structured. Specifically, we use algebraic-geometric 
($\ag$) codes with evaluation points restricted to a subfield, and with the message space restricted to a (carefully chosen) linear subspace.
Our main observation is that the output list of AG codes with subfield evaluation points
 is contained in an affine shift of the image of a \emph{block-triangular-Toeplitz $(\btt)$ matrix}, and that the list size can potentially be reduced to a constant by restricting the message space to a \emph{$\btt$ evasive subspace}, which is a large subspace that intersects the image of any $\btt$ matrix in a constant number of points. We further show how to explicitly construct such $\btt$ evasive subspaces, based on the explicit subspace designs of Guruswami and Kopparty (\emph{Combinatorica}, 2016), and composition.

\end{abstract}

\section{Introduction}

 An \emph{error-correcting code} is a map
$C: \Sigma^k \to \Sigma^n$, which encodes a $k$-symbol
\emph{message} over an \emph{alphabet} $\Sigma$ into an $n$-symbol \emph{codeword} over $\Sigma$. 
One main parameter of interest of an error-correcting code is the \emph{rate} $R = k/ n$, which measures
the amount of redundancy in the encoding.  
Naturally, it is desirable that the rate $R$ is as large as possible to minimize the overhead in encoding. Another important parameter is the
 \emph{ (relative) distance} $\delta$, defined as the smallest  \em (relative) Hamming distance\footnote{The \emph{(relative) Hamming distance} $\dist(z,w)$ between a pair of strings $z,w \in \Sigma^n$ is the fraction of coordinates on which $z$ and $w$ differ.}  \em
 $\dist( C(x), C(y))$ between the encodings of any pair of distinct messages $x, y \in \Sigma^k$. 
  The importance of the distance parameter
arises from the following observation:
if we are given $w \in \Sigma^n$ such that $\dist(w, C(x)) < \frac \delta 2$ for
some message $x \in \Sigma^k$, then this $x$ is \emph{uniquely determined}.  
Thus, a large distance allows one to \emph{unambiguously} retrieve the original message in the presence of some error or corruption. 
Other desirable properties of an error-correcting code are that its alphabet size would be small (ideally, a \emph{constant}, independent of the codeword length), and that it admits efficient ($\poly(n)$-time) encoding and decoding algorithms.

Clearly, there is a qualitative trade-off between the above parameters: the largest the distance $\delta$ is, the smallest the rate $R$ must be. Quantitatively, the \em Singleton bound \em states that any code must satisfy that $\delta \leq 1-R$. This bound is precisely matched by the classical family of \em Reed-Solomon $(\rs)$ codes \em \cite{RS60}. Given a finite field $\F_q$, and $n$ distinct elements $\alpha_1, \alpha_2, \ldots, \alpha_n \in \F_q$, the \em Reed-Solomon code $\mathsf{RS}_q(n,k)$ with evaluation points $\alpha_1, \ldots, \alpha_n$ \em maps 
a message $(f_0,f_1, \ldots f_{k-1}) \in \F_q^k$, viewed as the coefficients of a polynomial $f = \sum_{i=0}^{k-1} f_i X^i \in \F_q[X]_{<k}$, to the evaluation table
$(f(\alpha_1), \ldots, f(\alpha_n)) \in \F_q^n$. 

    A disadvantage of $\rs$ codes is that by definition, their alphabet size $q$ must be at least the  codeword length $n$. To match the Singelton bound over a \emph{constant-size} alphabet, independent of the codeword length $n$, one can resort to \em algebraic-geometric $(\ag)$ codes \em that achieve a distance of $\delta=1-R-\epsilon$ over a constant-size alphabet (depending on $\epsilon$) \cite{Sti09}. Moreover, both $\rs$ and $\ag$ codes can be efficiently encoded and decoded up to half their minimum distance \cite{Peterson60,BW,JLJHH89}.

\paragraph{List decoding.} In \em list decoding, \em the fraction of errors $\alpha$ is large enough so that unique recovery of the message $x$ is impossible (that is, $\alpha > \frac \delta 2$). Instead, the goal is, given a received word $w$, to return a short list $\mathcal{L}$ with the guarantee that $x \in \mathcal{L}$ for any message $x$ with $\dist(w,C(x))\leq \alpha$.  Besides being a fundamental concept in coding theory, list decoding has found diverse applications in theoretical computer science, for example in cryptography \cite{GL89}, learning theory \cite{KM93}, average-to-worst-case reductions \cite{CPS99, GRS00}, hardness amplification \cite{BFNW93, STV01, Tre03}, and pseudo-randomness \cite{TZ04, GUV09, DKSS13, TU12,GRX18}.

The \em list-decoding capacity theorem \em  states that the maximal fraction of errors for which list decoding with non-trivial list sizes is possible is $\alpha \leq 1-R$. Moreover, it is not hard to show that a random code of rate $R$ and alphabet-size $\exp(1/\epsilon)$ is with high probability list decodable from a $(1-R-\epsilon)$-fraction of errors with list size as small as $O(1/\epsilon)$.
So in principle, by allowing a small (constant-size) list, one can correct twice as many errors than in the unique decoding setting! However, matching these bounds with an explicit and efficient construction (ideally, encodable and list decodable in \emph{fully polynomial-time}\footnote{Note that in the list decoding setting, at least $\Omega(n/\eps)$ time is required to output the list. Moreover, the alphabet size must be at least $\exp(\Omega(1/\eps))$, and so the bit-length of the input is at least $\Omega(n/\eps)$. } $\poly(1/\eps, n)$)
turned out to be more challenging than in the unique decoding setting.

\paragraph{Capacity-achieving list decodable codes.}
The celebrated work of Guruswami and Sudan~\cite{Sudan97, GS-list-dec} showed that $\rs$ codes can be efficiently list decoded beyond half their minimum distance (up to the so-called \emph{Johnson bound}), which gave the first family of error-correcting codes that are efficiently list decodable beyond the unique decoding radius. Only a decade later,  the seminal work of Guruswami and Rudra \cite{GR08_folded_RS} showed that
\em folded Reed-Solomon $(\frs)$ codes \em -- a remarkably simple variant of $\rs$ codes -- can be efficiently list decoded up to list-decoding capacity. $\frs$ codes are obtained from $\rs$ codes (with the evaluation points ordered according to their power in the multiplicative group of the field) by dividing the codewords coordinates in the latter code 
 into consecutive blocks of length $m = \Theta(1/\epsilon^2)$, and then viewing each such block of coordinates as a single symbol over a larger alphabet. 

Once more, a disadvantage of $\frs$ codes is their large alphabet --
on the order of $n^{\Theta(1/\epsilon^2)}$ -- which is even larger  than that of the corresponding $\rs$ codes. 
Moreover, the list size obtained by the algorithm of Guruswami and Rudra was also a very large polynomial on the order of $n^{\Theta(1/\epsilon)}$, and this also dictated a similar running time for the list decoding algorithm. Starting with the breakthrough result of Guruswami and Rudra \cite{GR08_folded_RS}, there has been a long line of work attempting to construct explicit and efficient capacity-achieving list decodable codes with smaller alphabet and list sizes. Next, we briefly describe the main results of this line of work, and we refer the reader to Table \ref{tab:listdec} below for a summary of parameters.

\paragraph{Reducing list size.}
Towards reducing the list size, Guruswami and Wang \cite{GW13} devised a new ``linear-algebraic'' list decoding algorithm for $\frs$ codes, with the surprising property that the output list is contained in a low-dimensional subspace of \emph{constant} dimension $O(1/\epsilon)$. In the same work, Guruswami and Wang further
observed that, utilizing this property, one can potentially reduce the list size to a constant by restricting the message space of $\frs$ codes to a \emph{subspace evasive set}, which is a large set that intersects any constant dimensional subspace in a constant number of points. Guruswami and Wang showed that such objects exist probabilistically, and raised the question of searching for an explicit construction.  

The above program was subsequently carried out by Dvir and Lovet \cite{DL12}, who gave an algebraic construction of subspace evasive sets with the required properties. Combined with the linear-algebraic list decoding algorithm of \cite{GW13}, this resulted in a subcode of $\frs$ codes that can be efficiently encoded (in time $\poly(1/\eps, n)$), and efficiently list decoded up to capacity with constant list size $L=(1/\epsilon)^{O(1/\eps)}$ (in time $\poly(L,n)$). Lastly, Kopparty, Ron-Zewi, Saraf, and Wootters \cite{KRSW} have recently shown that in fact \emph{any} linear\footnote{A code $C:\Sigma^k \to \Sigma^n$ is \emph{linear} if $\Sigma= \F_q$ for some finite field $\F_q$, and the map $C: \F_q^k \to \F_q^n$ is linear.} code of constant distance $\delta$ that is list decodable from a $(\delta-\epsilon)$-fraction of errors with output list of constant dimension $d$ has constant list sizes (depending on $d$, $\delta$, and $\epsilon$). This shows that, perhaps surprisingly, $\frs$ codes themselves have constant list size (in fact the same list size of $L=(1/\epsilon)^{O(1/\eps)}$), without the need to pass to a subcode (list decoding can be performed \emph{probabilistically} in time $\poly(L,n)$).  

\paragraph{Reducing alphabet size.}
Similarly to the unique decoding setting, one can reduce the alphabet size to a constant (depending on $\eps$) by considering suitable versions of ``folded'' $\ag$ codes  \cite{Gur09, GX12, GX14, GX15}. 
However, in this setting, the dimension of the output list was too large to apply the above subspace evasive machinery and obtain small list sizes.
To overcome this, Guruswami and Xing \cite{GX13} came-up with an alternative approach for constructing capacity-achieving list decodable codes  that is based on restricting the evaluation points of ``plain'' (unfolded) versions of $\rs$ or $\ag$ codes to a subfield.\footnote{Guruswami and Xing first came-up with a similar approach in the folded setting \cite{GX12}, and only later observed in \cite{GX13} that it also applies to unfolded versions.
For simplicity, we only discuss the latter more basic approach.}

Specifically, Guruswami and Xing first observed that while $\rs$ codes are generally \emph{not} list decodable up to capacity with non-trivial list sizes \cite{BKR10}, for the special case of $\rs$ codes with evaluation points restricted to a subfield, it is possible to obtain slightly non-trivial list sizes, and furthermore, the lists satisfy a certain \emph{periodic} structure.
In more detail, consider the $\rs$ code $\mathsf{RS}_{q,m}(n,k)$, defined over a large extension field $\F_{q^m}$, with evaluation points coming from a small subfield $\F_q$, for $m = \Theta(1/\epsilon^2)$.   Guruswami and Xing showed that in this setting, there exists an $\F_q$-linear subspace $\hat V \subseteq \F_{q^m}$ of constant dimension $O(1/\epsilon)$ so that any message $(f_0,f_1, \ldots f_{k-1}) \in \F_{q^m}^k$ in the output list satisfies that once the first $i$ coefficients $f_0,f_1,\ldots,f_{i-1} \in \F_{q^m}$ are fixed, the next coefficient $f_i$ belongs to an affine shift of $\hat V$.

Note that, indeed, the above structure does not a priori guarantee a small list size. In fact, the only bound on the list  size that is implied by the above structure is $q^{O(k/\epsilon)} = q^{O(\epsilon m k)}$, which is only slightly smaller than the number of possible messages which is $q^{km}$. 
However, Guruswami and Xing noticed that, interestingly, the above periodic structure can lead to a list of constant dimension (which 
also leads in turn to constant list sizes using the machinery of \cite{DL12} or \cite{KRSW} described above) when the message space is restricted to a \emph{subspace design}, and once more, suggested to construct such objects explicitly.
In a follow-up work \cite{GK16}, Guruswami and Kopparty explicitly constructed such objects, and combined with the approach of Guruswami and Xing, this had the surprising consequence that a subcode of ``plain'' $\rs$ codes (with subfield evaluation points) is list decodable up to capacity, with constant list sizes (see Table \ref{tab:listdec} for exact parameters). 

Guruswami and Xing further observed that a similar periodic structure occurs in the $\ag$ code setting. However, over constant-size fields, it is impossible to construct subspace designs that 
 lead to constant list sizes.\footnote{In \cite{GX12}, Guruswami and Xing suggested alternatively using \emph{hierarchical subspace evasive sets} and showed that utilizing the above periodic structure, these could potentially lead to constant list sizes over constant-size alphabets. However, it is currently unknown how to explicitly construct such objects.} Nevertheless, Guruswami and Xing showed  
 that one can iteratively compose together subspace designs of exponentially increasing lengths to obtain extremely slowly growing list sizes (depending on $\log^*n$) over constant-size alphabets (see more discussion in Section \ref{subsec:tech} below). 
 This led in turn to capacity-achieving list decodable codes with constant alphabet size $(1/\eps)^{O(1/\eps^2)}$,  extremely slowly growing list size  $\exp(\poly(1/\eps)) \cdot \exp\exp\exp(\log^*n)$, and efficient encoding and list decoding algorithms (running in time $\poly(1/\eps, n)$ and $\poly(L,n)$, respectively).
 
Finally, we mention that in  \cite{KRSW}, a different approach was given for obtaining \emph{both} constant list and constant alphabet sizes, based on multi-level concatenation of $\frs$ codes, and expander-based amplification. However, the resulting code is arguably more complicated and less natural and structured than the aforementioned algebraic constructions, and moreover, has a much higher complexity in terms of $1/\eps$. Specifically, the list size was \emph{quadruply-exponential} in $\poly(1/\eps)$, which also dictated a similar running time for list decoding,\footnote{The reason for the large list size is that the construction roughly uses four levels of encodings, two of these via $\frs$ codes, and two other via random linear codes, and for both codes, the best-known list size is exponential in $1/\eps$. It may be possible to reduce the list size by replacing the random linear codes with other codes of smaller list size and succinct representation, e.g., the \emph{pseudo-linear codes} of  \cite{GI01}. However, the list size would still be at least doubly-exponential in $1/\eps$.}, and the encoding time was also pretty large $\exp(\poly(1/\eps)) \cdot \poly(n)$ due to the need to brute-force search for the inner codes.

\subsection{Our results}
A main question left open by the above line of work is whether one can come up with constructions of capacity-achieving list decodable codes with \emph{both} constant alphabet and constant list sizes, and admitting \emph{fully polynomial-time} $\poly(1/\eps,n)$ encoding and list-decoding. Our main result (almost) answers this question in the affirmative.

\begin{thm}\label{thm:main}
For any $R \in (0,1)$ and $\epsilon >0$, there is an infinite family of error-correcting codes of rate  at least  $R$ over an alphabet of size $(1/\epsilon)^{O(1/\epsilon^2)}$ that can be encoded in time $\poly(1/\eps,n)$, and can be list decoded from  a $(1-R-\epsilon)$-fraction of errors with list size at most $L=\exp(\poly(1/\epsilon))$ in time $\poly(L,n)$. Moreover, the codes, defined over an alphabet $\F_{q^m}$, are $\F_q$-linear, the output list is contained in an $\F_q$-linear subspace of dimension at most $\poly(1/\epsilon)$, and a basis for this subspace can be found in time $\poly(1 /\epsilon, n)$. 
\end{thm}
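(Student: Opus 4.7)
The plan is to follow the algebraic strategy outlined in the abstract: start with an AG code whose evaluation points lie in a subfield, exploit the structural description of its list-decoding output, and then cut the message space down with an explicit $\btt$ evasive subspace.

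First I would fix a constant prime power $q$ and an extension degree $m = \Theta(1/\eps^2)$, so that the alphabet size is $q^m = (1/\eps)^{O(1/\eps^2)}$, and take $C_0$ to be an AG code of length $n$ and rate at least $R + \eps/2$ over $\F_{q^m}$, evaluated on an $\F_q$-rational set of points of a curve of bounded genus. Running a Guruswami--Wang--style linear-algebraic list decoder \cite{GW13, GX13} on $C_0$ at error radius $1 - R - \eps$ produces an $\F_{q^m}$-linear system of rank deficit $O(1/\eps)$ that every candidate message must satisfy. The key structural claim (as highlighted in the paper) is that, because the evaluation points live in $\F_q$, this system, viewed as an $\F_q$-linear system on the coefficients $f_0, \ldots, f_{k-1} \in \F_{q^m}$, forces every solution into an affine shift $x_0 + \img(M)$ where $M$ is a \emph{block-triangular-Toeplitz} matrix with blocks of size $O(1/\eps)$: once $f_0, \ldots, f_{i-1}$ are fixed, $f_i$ is confined to an $\F_q$-affine subspace determined by a fixed Toeplitz pattern.

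Next, I would introduce the notion of a $\btt$ \emph{evasive subspace} $V \subseteq \F_{q^m}^k$: an $\F_q$-linear subspace of codimension at most $\eps k m / 2$ (so the rate loss is at most $\eps/2$) such that for every $\btt$ matrix $M$ of the above block size and every affine shift $x_0$, the intersection $V \cap (x_0 + \img(M))$ lies in an $\F_q$-affine subspace of dimension $\poly(1/\eps)$. To build $V$ explicitly, I would iteratively compose the subspace designs of Guruswami--Kopparty \cite{GK16} along the block structure: at each block the triangular property of $M$ restricts the image to a constant-dimensional $\F_q$-subspace parameterized by earlier blocks, and a subspace-design constraint bounds its intersection with the evasive space; composition across the $O(\eps k)$ blocks gives the desired uniform intersection bound while keeping the codimension linear in $\eps k m$.

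Assembling the pieces, restricting $C_0$ to messages in $V$ yields a code $C$ of rate at least $R$ over the alphabet $\F_{q^m}$. Encoding takes $\poly(1/\eps, n)$ time given the precomputed basis of $V$, which itself is found in $\poly(1/\eps, n)$ time. To list decode, run the linear-algebraic decoder on $C_0$ to recover $(x_0, M)$ in $\poly(1/\eps, n)$ time, intersect with $V$ by linear algebra in $\poly(1/\eps, n)$ time to obtain an $\F_q$-linear subspace of dimension $\poly(1/\eps)$ containing the output list, and finally enumerate this subspace in $\exp(\poly(1/\eps)) \cdot \poly(n)$ time to produce a list of size $L = \exp(\poly(1/\eps))$. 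I expect the main obstacle to be constructing the $\btt$ evasive subspace with a \emph{uniform} intersection bound against all $\btt$ matrices simultaneously: evading a fixed matrix is easy, but the composition must exploit the triangular structure blockwise to avoid a multiplicative explosion in the intersection dimension and to preserve a codimension that is linear (rather than super-linear) in $\eps k m$.
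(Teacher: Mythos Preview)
Your high-level plan is right: take an AG code with subfield evaluation points, observe that the list sits in an affine shift of a $\btt$ subspace, and intersect with a $\btt$ evasive subspace. This matches the paper's architecture, and your description of the decoding pipeline and the rate accounting is essentially correct.

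The genuine gap is in the construction of the $\btt$ evasive subspace. What you propose---``iteratively compose the subspace designs of Guruswami--Kopparty along the block structure''---is exactly the GX13 approach, and it does \emph{not} give a constant intersection bound. The Guruswami--Kopparty design (Theorem~\ref{thm:subspace-design-explicit}) requires $q^m \geq k^{\Theta(r/\epsilon)}$, so over a constant-size alphabet $q^m = (1/\epsilon)^{O(1/\epsilon^2)}$ a single design can cover only a constant number of blocks. Iterating as in GX13 needs $\approx \log^* k$ levels, and since the dimension bound squares at each application of the design, the final intersection dimension is $\exp\exp(O(\log^* k))$ rather than $\poly(1/\epsilon)$. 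Your last paragraph correctly flags this ``multiplicative explosion'' as the obstacle, but the plan does not contain the idea that overcomes it.

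The paper's new ingredient, which you are missing, is twofold. First, because a $(k,m,r)$-$\btt$ matrix is determined by its first $r$ columns, there are only $q^{rkm}$ such matrices; this sharper count (versus periodic matrices) makes a probabilistic argument go through for \emph{any} $q$ and $m$, yielding a non-explicit $(k,m,r,O(r/\epsilon))$-$\btt$ evasive subspace with no field-size constraint (Lemma~\ref{lem:btt_non_explicit}). Second, one finds this inner evasive subspace by brute force for a tiny number of blocks $k_1 \approx \log\log k$ (so the search costs $\poly(k)$), and then composes with explicit periodic designs only \emph{twice} (block lengths growing $k_1 \to \log k \to k$). Two composition steps keep the intersection dimension at $\poly(r/\epsilon)$. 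Without the brute-force inner piece exploiting the $\btt$ count, iterating GK designs alone cannot avoid the $\log^* k$ dependence.
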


Note that our codes achieve list-decoding capacity with \emph{both} constant alphabet and constant list sizes, and both encoding and list decoding can be performed in \emph{fully polynomial-time} $\poly(1/\eps,n)$, except for pruning the subspace and outputting the final list which takes time $\exp(\poly(1/\eps)) \cdot \poly(n)$. Our codes are quite natural and structured, specifically, we use
$\ag$ codes with evaluation points restricted to a subfield, and with the message space restricted to a (carefully chosen) $\F_q$-linear subspace. It is our hope that this relatively natural and simple structure will prove useful in future applications.

A barrier to improving the general running time of list-decoding to $\poly(1/\eps,n)$
is the exponential dependency of the list size on $1/\eps$ since, at the very least, such amount of time is required to output the whole list. However, currently the smallest known list size for explicit capacity-achieving list decodable codes is $(1/\epsilon)^{O(1/\epsilon)}$, achieved by $\frs$ codes \cite{KRSW}. We leave it as an interesting open problem to search for explicit capacity-achieving list decodable codes (even over large super-constant alphabet) with optimal list size $O(1/\epsilon)$, or even $\poly(1/\epsilon)$. 
We further mention that the alphabet size we obtain, on the other hand, is not much worse than the lower bound of $\exp( \Omega(1/\epsilon))$, and is generally the smallest known alphabet size for explicit capacity-achieving list decodable codes \cite{GX12, GX13, GX14, GX15}.

Finally, we note that using the machinery of \cite{HRW, KRRSS} (specifically, taking a high-order tensor product of the codes given by Theorem \ref{thm:main}, combined with an expander-based amplification), it is possible to bring down the dependency on $n$ in the running time of both encoding and list-decoding to \emph{nearly-linear}, say $n^{1.01}$. However, similarly to the multi-level construction of \cite{KRSW} mentioned above, the resulting codes become 
more complicated and
less natural and structured, and also have a much higher complexity in terms of $1/\eps$.  
Obtaining a \emph{truly-linear}  dependency on $n$ in the running time of either encoding or list-decoding for capacity-achieving list decodable codes seems to require, as in the unique decoding setting, completely different non-algebraic techniques. 

Table \ref{tab:listdec} below summarizes the above discussion. In the next section, we shall give an overview of our techniques.
\begin{center}
\begin{table}
\begin{tabular}{|m{4.5cm}|c|c|m{3cm}|}
\hline 
\center{\textbf{Code}} &\textbf{Alphabet size $|\Sigma|$} & \textbf{List size $L$}& \textbf{Notes} \\ \hline  \hline
Random codes & $2^{O(1/\epsilon)}$&  $O(1/\eps)$ &   Non-constructive \\ \hline \hline
$\frs$ codes \cite{GR08_folded_RS, GW13} & $n^{O(1/\eps^2)}$&  $n^{O(1/\eps)}$ &   \\ \hline 
Previous codes  \cite{KRSW} & $n^{O(1/\eps^2)}$& $\left(\frac{1}{\eps}\right)^{O(1/\eps)}$ &   Randomized list-decoding \\ \hline
Previous codes + subspace evasive set \cite{DL12} & $n^{O(1/\eps^2)}$ & $\left(\frac{1}{\eps}\right)^{O(1/\eps)}$ &   \\ \hline 
Multi-level concatenation of previous codes + expander amplification \cite{KRSW} & $2^{\poly(1/\eps)}$& $2^{2^{2^{2^{\poly(1/\epsilon)}}}}$ &   
Encoding time $2^{\poly(1/\eps)} \cdot \poly(n)$
\\ \hline \hline
$\rs$ codes with subsfield evaluation points + subspace design \cite{GX13, GK16} &  $n^{O(1/\eps^2)}$ & $n^{O(1/\epsilon^3)}$ &  \\ \hline
Previous codes \cite{KRSW} &  $n^{O(1/\eps^2)}$ & $\left(\frac{1}{\eps}\right)^{O(1/\eps^4)}$ &   Randomized list-decoding\\ \hline
Previous codes + subspace evasive set \cite{DL12} &  $n^{O(1/\eps^2)}$ & $\left(\frac{1}{\eps}\right)^{O(1/\eps^3)}$ &   \\ \hline \hline
$\ag$ codes with subfield evaluation points + subspace design \cite{GX13, GK16} &$\left(\frac{1}{\eps}\right)^{O(1/\eps^2)}$ & $2^{\poly(1/\eps)} \cdot 2^{2^{2^{O(\log^*n)}}}$ &  \\ \hline
Tensor product of previous codes + expander amplification \cite{HRW,KRRSS}  & $2^{\poly(1/\eps)}$ & $2^{2^{2^{2^{^{\poly(1/\eps)}}}}}\cdot 2^{2^{2^{2^{O(\log^*n)}}}}$ &  Encoding time $2^{\poly(1/\eps)} \cdot n^{1.01}$, list-decoding time $\poly(L) \cdot n^{1.01}$   \\ 
\hline \hline
\textbf{This work:} $\ag$ codes with subfield evaluation points + $\btt$ evasive subspace & $\left(\frac{1}{\eps}\right)^{O(1/\eps^2)}$& $2^{\poly(1/\eps)}$ &   
Basis for subspace containing list can be found in time $\poly(1/\eps,n)$
\\ \hline
%
\end{tabular}

\caption{Capacity-achieving list decodable codes $C: \Sigma^k \to \Sigma^n$ of rate $R$ that are list decodable from a $(1-R-\epsilon)$-fraction of errors with list size $L$. 
All codes can be deterministically encoded in time $\poly(1/\eps,n)$ and deterministically list decoded in time $\poly(L,n)$ unless otherwise noted.}
 \label{tab:listdec}
\end{table}
\end{center}

\section{Techniques}\label{subsec:tech}

The starting point for our construction is the aforementioned work of Guruswami of Xing \cite{GX13}. As described above, in this work it was observed that the output list of $\rs$ or $\ag$ codes with subfield evaluation points satisfy a special \emph{periodic} structure. Namely, there exists an $\F_q$-linear subspace $\hat V \subseteq \F_{q^m}$ of constant dimension $r=O(1/\epsilon)$ so that any message $(f_0,f_1, \ldots f_{k-1}) \in \F_{q^m}^k$ in the output list satisfies that given the first $i$ coefficients $f_0,f_1,\ldots,f_{i-1} \in \F_{q^m}$, the next coefficient $f_i$ belongs to an affine shift of $\hat V$.
Moreover, it was shown that one can exploit this structure and reduce the output list size by restricting the message space to a \emph{subspace design}. 

An \emph{$(r,s)$-subspace design over $\F_{q^m}$ of cardinality $k$} is a collection of $k$ $\F_q$-linear subspaces $H_1, \ldots, H_k \subseteq \F_{q^m}$ so that $\sum_{i=1}^{k} \dim(\hat V \cap H_i)\leq s$ for any $\F_q$-linear subspace  $\hat V \subseteq \F_{q^m}$ of dimension at most $r$. It follows by definition that, assuming the above periodic structure, when restricting each coefficient $f_i$ to the subspace $H_i$, the resulting output list has dimension at most $\sum_{i=1}^{k} \dim(\hat V \cap H_i) \leq s$. It can be shown, using the probabilistic method, that there exists an $(r,s)$-subspace design $H_1, \ldots, H_k$ over $\F_{q^m}$ with $k  = q^{\Omega(\epsilon m)}$ and $s = O(r/\epsilon)$, where each subspace $H_i$ has co-dimension at most $\epsilon m$ in $\F_{q^m}$. In \cite{GK16}, Guruswami and Kopparty gave an explicit construction of 
a subspace design with similar parameters.  

\begin{restatable}[Explicit subspace design, \cite{GK16}, Theorem 6]{thm}{sdexp}\label{thm:subspace-design-explicit}
There exists an absolute constant $c >1$, so that 
for every $\epsilon >0$, positive integers $k,m,r$ with $r < \frac{\epsilon m} {4}$, and a prime power $q$ satisfying
$q^m \geq \max\left\{k^{c \cdot r /\epsilon}, \left(\frac {2r} {\eps}\right) ^{2 r /\epsilon}\right\}$,
 there exists an $(r,s)$-subspace design $H_1, \ldots, H_k$ over $\F_{q^m}$ for $s= \frac{2r^2} {\epsilon}$, where each $H_i$ has co-dimension at most $\epsilon m$ in $\F_{q^m}$.
  Moreover, bases for  $H_1, \ldots, H_k$ can be found in time $\poly(q,k,m)$.
\end{restatable}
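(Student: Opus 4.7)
The plan is to give an explicit algebraic construction using Reed-Solomon style subspaces cut out by divisibility conditions, and then verify the subspace-design inequality via a polynomial-method counting argument. Identify $\F_{q^m}$ with $U := \F_q[Y]_{<m}$ as $\F_q$-vector spaces (fix an irreducible $p$ of degree $m$ and use $\F_{q^m} \cong \F_q[Y]/(p)$, giving $U$ as a set of coset representatives). Set $t := \lceil \epsilon m\rceil$. The number of monic irreducible polynomials of degree $t$ over $\F_q$ is at least $\frac{q^t - q^{t/2+1}}{t} \geq q^{\epsilon m}/(2\epsilon m)$, which exceeds $k$ under the hypothesis $q^m \geq k^{c r/\epsilon}$ for a suitable absolute constant $c$. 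Pick $k$ distinct such irreducibles $p_1, \ldots, p_k \in \F_q[Y]$, and define
$$H_i \;:=\; \{\, f \in U \;:\; p_i(Y) \mid f(Y)\,\} \;=\; p_i \cdot \F_q[Y]_{<m-t}.$$
This is the kernel of reduction modulo $p_i$, so it is an $\F_q$-subspace of $U$ of codimension exactly $t \leq \epsilon m$.

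For the design property, the guiding observations are: (i) any nonzero $f \in U$ has degree $<m$ and so is divisible by at most $\lfloor m/t\rfloor \leq 1/\epsilon$ of the pairwise coprime irreducibles $p_i$; and (ii) for an $r$-dimensional $\F_q$-subspace $\hat V \subseteq U$, the quantity $\dim(\hat V \cap H_i)$ equals $r - \rank(\pi_i|_{\hat V})$, where $\pi_i : U \to U/p_iU$ is the $\F_q$-linear reduction map. I would combine these via a polynomial-method averaging on a basis $v_1, \ldots, v_r$ of $\hat V$: the $r \times r$ minors of the $m \times r$ coefficient matrix of the $v_j$'s are polynomials in the ``evaluation coordinate'' of degree less than $rm$ whose vanishing modulo $p_i$ witnesses when $\pi_i|_{\hat V}$ drops rank. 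An accounting of common roots of these Plücker-type polynomials with the $p_i$'s, combined with observation (i) applied to the nonzero elements of $\hat V$, yields the bound $\sum_{i=1}^k \dim(\hat V \cap H_i) \leq 2r^2/\epsilon$ under the hypotheses $r < \epsilon m/4$ and $q^m \geq (2r/\epsilon)^{2r/\epsilon}$.

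The algorithmic content reduces to (a) enumerating $k$ monic irreducibles of degree $t$ over $\F_q$ by brute-force irreducibility testing on the $q^t \leq q^m$ monic degree-$t$ polynomials, each test running in $\poly(t, \log q)$ time via the Rabin/distinct-degree factorization criterion, and (b) reading off the basis $\{p_i Y^j : 0 \leq j < m-t\}$ of $H_i$ and expressing it in the fixed $\F_q$-basis of $U \cong \F_{q^m}$. Both steps run in $\poly(q, k, m)$ time. The main obstacle I anticipate is the counting step for the design inequality: pushing it beyond the easy bound ``each $f$ lies in $\leq 1/\epsilon$ of the $H_i$'s'' to the desired $\sum_i \dim(\hat V \cap H_i) \leq 2r^2/\epsilon$ requires the delicate polynomial-method averaging hinted at above. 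This is precisely where the hypothesis $q^m \geq (2r/\epsilon)^{2r/\epsilon}$ enters, ensuring that the determinantal polynomials of degree $<rm$ have enough room to avoid simultaneous vanishing on many of the distinct $p_i$'s, so that rank deficiency is forced to concentrate on few indices $i$.
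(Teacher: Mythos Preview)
There is a genuine gap at the key step. Your subspaces $H_i=\{f: p_i\mid f\}$ are not the Guruswami--Kopparty subspaces the paper invokes: there one takes $H_\alpha=\{P\in\F_q[X]_{<m}: P(\alpha\gamma^i)=0\text{ for }i=0,\ldots,t-1\}$ with $\gamma$ a generator of $\F_q^\times$ and $\alpha$ ranging over a carefully chosen subset of $\F_{q^d}$. The design bound $\sum_\alpha\dim(\hat V\cap H_\alpha)\le\frac{(m-1)r}{d(t-r+1)}$ is proved via the \emph{folded Wronskian} $W(X)=\det\bigl(v_j(\gamma^{i-1}X)\bigr)_{i,j\in[r]}$: this is a nonzero polynomial of degree $<r(m-1)$, and a rank drop at $\alpha$ forces $W(\alpha\gamma^\ell)=0$ for $\ell=0,\ldots,t-r$, contributing $t-r+1$ distinct roots per bad $\alpha$. (A separate field-extension trick then removes the constraint $q>m$, at the cost of an extra factor of $r$ in $s$.) Your construction is the GK template with a single evaluation point in $\F_{q^t}$, i.e.\ $t_{\mathrm{GK}}=1$, and there the root-counting argument gives nothing once $r\ge 2$. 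Your proposed substitute---``$r\times r$ minors of the $m\times r$ coefficient matrix'' as polynomials in an evaluation coordinate---is not well-defined: that coefficient matrix has scalar entries in $\F_q$. The genuine witness for $\F_q$-linear dependence of $v_1(\alpha_i),\ldots,v_r(\alpha_i)\in\F_{q^t}$ is a Moore determinant $\det\bigl(v_j(X)^{q^{\ell-1}}\bigr)_{\ell,j}$, but as a polynomial in $X$ this has degree on the order of $m\,q^{r-1}$, far too large to yield a bound independent of $q$.

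Concretely, the design bound can fail for your construction. With $r=2$ and $N\approx 1/\epsilon$, set $\hat V=p_2\cdots p_N\cdot\spn_{\F_q}\{p_1,p_{N+1}\}\subseteq\F_q[Y]_{<m}$. Then $\dim(\hat V\cap H_i)=2$ for $i=2,\ldots,N$, and additionally $\dim(\hat V\cap H_i)\ge 1$ for every index $i$ with $p_i$ lying in the pencil $\{\lambda p_1+(1-\lambda)p_{N+1}:\lambda\in\F_q\}$. That pencil contains up to $q-2$ further monic degree-$t$ polynomials, of which one heuristically expects $\Theta(q/t)$ to be irreducible; if your list of $k$ irreducibles includes them (you give no rule that precludes this), then $\sum_i\dim(\hat V\cap H_i)\ge 2(N-1)+\Theta(q/t)$, which exceeds $2r^2/\epsilon=8/\epsilon$ whenever $q\gg t/\epsilon$. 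Avoiding such configurations would require the $p_i$ to be in a strong ``general position'', an explicit requirement you do not address and which is itself nontrivial. The geometric-progression structure in the GK construction is precisely what manufactures a \emph{low-degree} auxiliary polynomial and sidesteps this obstruction.
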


Thus, by restricting the message coefficients in $\rs$ codes with subfield evaluation points to the subspace design given by the above theorem, one can reduce the dimension of the output list to $O(1/\epsilon^3)$ (and by \cite{KRSW}, this in fact implies that the list is of constant size). 
Note however that  the above theorem could not be applied to $\ag$ codes, as it requires the number of subspaces $k$ to be smaller than $q^m$, whereas for $\ag$ codes the number of message coordinates $k$ (which grows to infinity) is much larger than $q^m$ (which is constant in the $\ag$ setting).  

To overcome this, Guruswami and Xing suggested the following iterative construction. Suppose that the message space has the periodic structure described above, and that $k \gg q^m$. Then Guruswami and Xing suggested to first divide the $k$ coordinates
into $\frac k {k_1}$ blocks of $k_1$ coordinates each, where $k_1 \approx q^m$, and restrict each such block separately to an identical $(r,s)$-subspace design over $\F_{q^m}$ of cardinality $k_1$
that is guaranteed by the above Theorem \ref{thm:subspace-design-explicit}. The main observation is that, when viewing each block of length $k_1$ as a single coordinate, the resulting subspace also has a periodic structure, however, with exponentially larger alphabet size $q^{mk_1}$. 

Thus, one can once more divide the resulting  $\frac k {k_1}$ coordinates into $\frac k {k_1 k_2}$ blocks of length $k_2$ each, where now $k_2 \approx q^{m k_1}$, and restrict to an identical subspace design on each block separately. 
Continuing this way, and noting that the alphabet size increases exponentially in each iteration, after $\approx \log^*k$ iterations, we arrive at alphabet size $k$, which is sufficiently large for restricting to a single subspace design. Since the dimension squares on each invocation of Theorem \ref{thm:subspace-design-explicit}, the final dimension is doubly-exponential in $\log^*k$, and the resulting output list size is triply-exponential $\log^*k$.


Our main observation that allows us to obtain \emph{both} constant alphabet and constant list sizes is that $\ag$ (or $\rs$) codes with subfield evaluation points satisfy yet an even more refined structure, namely, the output list is contained in an affine shift of the image of a  \emph{block-triangular-Toeplitz $(\btt)$ matrix}. We further observe that this structure can potentially lead to constant list sizes over a constant-size alphabet if the message space is restricted to an appropriate pseudo-random object that we call a \emph{$\btt$ evasive subspace}. In what follows we elaborate on these two ingredients. 

\subsection{Block-triangular-Toeplitz matrix}

We start by formally defining the notion of a block-triangular-Toeplitz matrix  (see Figure \ref{fig_btt} below for an illustration).
\begin{defi}[Block-triangular-Toeplitz ($\btt$) matrix]
A \emph{$(k,m,r)$-block-triangular-Toeplitz ($\btt$) matrix over $\F_q$} is a $(km) \times (kr)$ matrix $M$ over $\F_q$ so that $M=(M_{i,j})_{i,j\in [k]}$, as a $(k\times k)$-block matrix with $m \times r$ blocks $M_{i,j}$, satisfies the following conditions:
\begin{enumerate}
\item $M$ is block-lower-triangular, i.e., $M_{i,j}=0$ for $i,j\in [k]$ with $i<j$.
\item $M$ is block-Toeplitz, i.e.,  $M_{i,j}=M_{i',j'}$ for $i,j,i',j'\in [k]$ with $i-j=i'-j'$.
\item $M$ has maximal rank.  By the two conditions above, this is equivalent to the statement that $M_{1,1}$ has rank $\min\{r,m\}$. 
\end{enumerate} 
We say that  $M$ is $(k,m,r)$-\emph{periodic} if only blocks on the main diagonal are required to be identical, i.e., 
the second condition above is weakened to  $M_{i,i}=M_{i',i'}$ for all $i,i'\in [k]$.
\end{defi}

 \begin{figure}[htb]
\centering
\[
\begin{pmatrix}
M_{1} & 0 & 0 & \cdots & 0  \\
M_{2} & M_{1} & 0 & \cdots & 0 \\
M_{3} & M_{2} & M_{1} & \cdots & 0 \\
\vdots  & \vdots  & \vdots & \ddots & \vdots  \\
M_{k} & M_{k-1} & M_{k-2} & \cdots & M_{1} 
\end{pmatrix}
\]
\caption{A $(k, m, r)$-$\btt$ matrix, where each $M_i$ is an $m\times r$ matrix and $M_1$ has maximal rank.}
\label{fig_btt}
\end{figure}

We say that $V\subseteq \F_q^{k m}$  is a \emph{$(k,m,r)$-$\btt$ subspace}
if $V=\img(M)$ for some $(k,m,r)$-$\btt$ matrix $M$ (where $M$ is viewed as a linear map from $\F_q^{kr}$ to $\F_q^{km}$, and $\img(M)$ denotes its image). Similarly, we say that $V\subseteq \F_q^{k m}$  is a \emph{$(k,m,r)$-periodic subspace}
if $V=\img(M)$ for some $(k,m,r)$-periodic matrix $M$.
Note that in this terminology, the periodic structure described above corresponds to the special case of a $(k,m,r)$-periodic subspace, where $\hat V = \img(M_1)$.
Our first main observation is that the output list of $\ag$ (or $\rs$) codes with subfield evaluation points is in fact contained in an affine shift of
a $(k,m,r)$-$\btt$ subspace for $r = O(1/\epsilon)$ (under a suitable linear map).

\begin{restatable}[Output list contained in a $\btt$ subspace]{thm}{agbtt}\label{thm:intro_ag_btt}
There exists an absolute constant $c >1$ so that the following holds for any  $R \in (0,1)$,  $\epsilon >0$, $q \geq 1/\epsilon^c$ that is an even power of a prime, and $m \geq 1/\epsilon^2$. There is an infinite family of error-correcting codes $\{C_n\}_{n}$, where $C_n$ satisfies the following properties:
\begin{enumerate}
\item $C_n: \F_{q^m}^k \to \F_{q^m}^n$ is a linear code of rate at least $R$ that can be encoded in time $\poly(\log q, m,n)$.
\item There exists an injective $\F_q$-linear map $\phi: \F_{q^m}^k \to \F_{q^m}^{\hat k}$, where $\hat k \leq n$, so that $C_n$ can be list decoded from a $(1-R-\epsilon)$-fraction of errors, pinning down the images of the candidate messages under $\phi$ (viewed as length $\hat k m$ vectors over $\F_q$) to an affine shift of a $(\hat k,m,\epsilon m)$-$\btt$ subspace $V$ over $\F_q$. Moreover, the map $\phi$, a basis for $V$, and the affine shift can be computed in time $\poly(\log q, m,n)$.
  \end{enumerate}
\end{restatable}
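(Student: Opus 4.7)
The plan is to apply the linear-algebraic list decoder of Guruswami--Xing to $\ag$ codes with subfield evaluation points, and then to expose the induced $\btt$ structure on the output list via local expansion at a chosen place. Fix an explicit family of curves $\{\mathcal{X}_n\}$ over $\F_q$ from the Garcia--Stichtenoth tower (which requires $q$ to be an even prime power and yields genus $g \leq \eps n$ once $m \geq 1/\eps^2$), with $n$ $\F_q$-rational points $P_1, \ldots, P_n$ and an additional $\F_q$-rational place $P_\infty$ equipped with an $\F_q$-rational uniformizer $t$. Let $D = d \cdot P_\infty$ with $d$ chosen so that $k := \dim \mathcal{L}(D) \geq R n + 1$ by Riemann--Roch, and define $C_n$ to be the evaluation code $f \mapsto (f(P_1), \ldots, f(P_n))$ on $\mathcal{L}(D)$; efficient encoding follows from standard algorithms for Riemann--Roch bases in the tower.

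Following Guruswami--Xing, given $y \in \F_{q^m}^n$, the decoder finds a nonzero tuple $(A_0, A_1, \ldots, A_s)$ of functions on $\mathcal{X}_n$ in suitable Riemann--Roch spaces, all regular at $P_\infty$, satisfying the interpolation
\[
A_0(P_i) + \sum_{j=1}^{s} A_j(P_i)\, y_i^{q^{j-1}} = 0, \qquad i = 1, \ldots, n,
\]
with $s = \eps m + 1$; existence is guaranteed by a Riemann--Roch dimension count. For any $f \in \mathcal{L}(D)$ agreeing with $y$ on at least $(R + \eps) n$ positions, use $P_i \in \F_q$ to rewrite $y_i^{q^{j-1}} = \sigma^{j-1}(f)(P_i)$, where $\sigma$ is the $q$-power Frobenius acting on $\F_{q^m}$-coefficients; the parameters are chosen so that $\Psi(f) := A_0 + \sum_{j=1}^s A_j\, \sigma^{j-1}(f)$ vanishes at more points than its pole degree, forcing $\Psi(f) \equiv 0$ on $\mathcal{X}_n$.

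To expose the $\btt$ structure, expand everything in Laurent series at $P_\infty$. Define the injective $\F_q$-linear map $\phi : \F_{q^m}^k \cong \mathcal{L}(D) \to \F_{q^m}^{\hat k}$ by $f \mapsto (f_{-d}, \ldots, f_{0})$, where $\hat k = d + 1 \leq n$; injectivity follows from $\mathcal{L}(-P_\infty) = \{0\}$. Since $t$ is $\F_q$-rational, Frobenius commutes with coefficient extraction, so $[\sigma^{j-1}(f)]_b = f_b^{q^{j-1}}$. Writing $\psi_d(y) := \sum_{j=1}^s [A_j]_d \cdot y^{q^{j-1}}$ (an $\F_q$-linearized polynomial), extracting the coefficient of $t^\ell$ in $\Psi(f) = 0$ yields the equations
\[
[A_0]_\ell + \sum_b \psi_{\ell - b}(f_b) = 0,
\]
whose coefficient matrix is block-Toeplitz (depending on $\ell - b$ only) and block-lower-triangular (since the $A_j$'s are regular at $P_\infty$, giving $[A_j]_d = 0$ for $d < 0$).

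Set $\hat V := \ker(\psi_0)$, fix a complement $W$ with $\F_{q^m} = W \oplus \hat V$ (so $\psi_0|_W : W \to \img(\psi_0)$ is a bijection), and write $f_\ell = u_\ell + v_\ell$ with $u_\ell \in W$, $v_\ell \in \hat V$. Define the formal parameterization $M : \hat V^{\hat k} \to \F_{q^m}^{\hat k}$ by the homogeneous recursion $u_\ell := \psi_0|_W^{-1} \bigl( -\pi_{\img \psi_0} \bigl(\sum_{b<\ell} \psi_{\ell - b}(f_b)\bigr)\bigr)$, where $\pi_{\img \psi_0}$ projects onto $\img(\psi_0)$ along a fixed complement. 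Unrolling, the block $M_{\ell, b}$ (for $\ell > b$) is a sum over chains $T_{d_1} \circ \cdots \circ T_{d_r}$ with $d_1 + \cdots + d_r = \ell - b$, where $T_d := \psi_0|_W^{-1} \circ (-\pi_{\img \psi_0}) \circ \psi_d$; this depends only on $\ell - b$, so $M$ is a $(\hat k, m, \eps m)$-$\btt$ matrix (block-lower-triangular, block-Toeplitz, with diagonal block equal to the inclusion $\hat V \hookrightarrow \F_{q^m}$ of full rank $\eps m$). For any genuine list element $f$, the consistency $[A_0]_\ell + \sum_{b < \ell} \psi_{\ell-b}(f_b) \in \img(\psi_0)$ holds (because $f$ genuinely satisfies the equations), so $\pi_{\img \psi_0}$ leaves that argument unchanged and $M$ recovers $f$ up to an affine shift $c$ obtained from the analogous recursion fed by the $[A_0]$ terms; hence the list lies in $V + c$ with $V := \img(M)$. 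The main obstacle is arranging that $\dim_{\F_q} \hat V$ equals $\eps m$ exactly, equivalently that $\psi_0$ is a $q$-linearized polynomial of $q$-degree $s - 1$ whose roots all lie in $\F_{q^m}$; this is handled by adjoining a linear non-degeneracy constraint $[A_s]_0 \ne 0$ to the interpolation and a genericity argument on the remaining $[A_j]_0$'s. The map $\phi$, a basis for $V$, and the shift $c$ are then computable in $\poly(\log q, m, n)$ time via standard Riemann--Roch and linearized-polynomial algorithms.
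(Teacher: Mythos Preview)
Your overall strategy---Garcia--Stichtenoth tower, the linear-algebraic interpolation of Guruswami--Xing, the functional equation $A_0+\sum_j A_j\,\sigma^{j-1}(f)=0$, and local expansion at $P_\infty$ to expose block-Toeplitz equations---is exactly what the paper does. Where you diverge is in the last step: rather than first showing the solution set lies in the \emph{kernel} of a $\btt$ matrix and then proving separately (Lemma~\ref{lem_btt_ker_equiv}) that such a kernel is a $\btt$ \emph{image}, you attempt to build the image parameterization directly via the recursion in $W\oplus\hat V$. Your chain-sum argument for the Toeplitz property of the resulting $M$ is correct, so this variant is fine in principle.

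The genuine gap is in what you call ``the main obstacle.'' Two things are wrong there. First, you do \emph{not} need $\dim_{\F_q}\hat V=\epsilon m$ exactly; you only need $\dim\hat V\le s-1$, which gives a $(\hat k,m,\dim\hat V)$-$\btt$ subspace that embeds trivially in a $(\hat k,m,s-1)$-$\btt$ subspace by extending the diagonal block to full rank. The condition you state---that $\psi_0$ has $q$-degree $s-1$ with all roots in $\F_{q^m}$---is far too strong and there is no mechanism to enforce it. Second, your proposed fix (adjoin $[A_s]_0\neq 0$ to the interpolation plus a ``genericity argument'') is not justified: the interpolation is a linear system and nothing guarantees a solution with $[A_s]_0\neq 0$, nor do you argue this. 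The paper handles the actual issue---ensuring $\psi_0\not\equiv 0$---by a clean shift: take the largest $u$ such that all Laurent coefficients of index $<u$ vanish for every $A_\ell$, and reindex by $u$ (equivalently, divide $Q$ by $T^u$). This guarantees some $\hat a_{\ell,0}\neq 0$, and the functional equation then forces $\ell\in[s]$, so $\psi_0$ is a nonzero linearized polynomial of $q$-degree $\le s-1$, whence $\dim\hat V\le s-1$ automatically. Incidentally, the $A_j$ are not ``regular at $P_\infty$''; they lie in $L(dP_\infty^{(m)})$ and $L((d+k-1)P_\infty^{(m)})$, i.e., they have poles there, and it is the order of those poles that makes the Riemann--Roch dimension count work.
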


We prove the above theorem in Section \ref{sec:AG} using $\ag$ codes with subfield evaluation points. As a warm-up, we first prove, in Section \ref{sec:RS},  that this theorem holds in the more basic setting of $\rs$ codes with subfield evaluation points. In the $\rs$ setting,
the linear-algebraic approach of \cite{GX13} gives a functional equation of the form
$$A_0(X)+A_1 (X)f(X)+ A_2(X) f^{\sigma}(X)+\dots+A_s (X) f^{\sigma^{s-1}}(X)=0$$
that any low-degree polynomial $f$ that has large agreement with a received word  $\mathbf{y}$ must satisfy, where $\sigma$ denotes
 the \emph{Frobenius automorphism} mapping $ f= \sum_{j=0}^{k-1}f_jX^j$ to $ f= \sum_{j=0}^{k-1}f^q_jX^j$, and the coefficients of $A_0, \ldots, A_s$ depend on the received word $\mathbf{y}$. 
 
We observe that the above functional equation  quite naturally gives a $(k,m,(1-\epsilon)m)$-$\btt$ matrix, so that the solution set is contained in an affine shift of the \emph{kernel} of this matrix.\footnote{The fact that the solution set is contained in an affine shift of the kernel of a $\btt$ matrix was also commented (but not exploited) in \cite[Definition 3]{GX20}. Here we provide a formal proof for this fact, show that the solution set can be equivalently defined as the image of a $\btt$ matrix, and show that this property can be exploited to obtain improved list sizes.} We then show that the kernel of a $(k,m,r)$-$\btt$ matrix  is in fact a $(k,m,m-r)$-$\btt$ subspace (i.e., the \emph{image} of a $(k,m,m-r)$-$\btt$ matrix), which gives the claimed $(k,m,\epsilon m)$-$\btt$ subspace containing the list (in this setting, $\phi$ is just the identity map). We further show that a similar reasoning can be applied in the $\ag$ code setting.

\subsection{BTT evasive subspace}
We say that a subspace  $W\subseteq \F_q^{km}$ is a  \emph{$(k,m,r,s)$-$\btt$ evasive subspace}  if $\dim (V\cap W)\leq s$ for every $(k,m,r)$-$\btt$ subspace $V\subseteq \F_q^{k m}$. 
Similarly, we say that a subspace  $W\subseteq \F_q^{km}$ is a  \emph{$(k,m,r,s)$-periodic evasive subspace}  if $\dim (V\cap W)\leq s$ for every $(k,m,r)$-periodic subspace $V\subseteq \F_q^{k m}$. Note that any $(r,s)$-subspace design over $\F_{q^m}$ of cardinality $k$ is a $(k,m,r,s)$-periodic evasive subspace (see Corollary \ref{cor:periodic-explicit}).
We first observe, using the probabilistic method, that there exists a $(k,m,r,s)$-$\btt$ evasive subspace $W \subseteq \F_q^{km}$ of co-dimension at most $\epsilon km$ and $s = O\left(r/ \epsilon\right)$. Notably, the lemma holds for any field size $q$ and block length $m$!

\begin{restatable}{lem}{bttnonexp}\label{lem:btt_non_explicit}
For every $\epsilon >0$, positive integers $k,m,r$ with $r < \frac{\epsilon m} {2}$,  and a prime power $q$, there exists a 
$(k,m,r,s)$-$\btt$ evasive subspace $W \subseteq \F_q^{km}$ of co-dimension at most $\epsilon km$ for $s= \frac{2r} {\epsilon}$. 
\end{restatable}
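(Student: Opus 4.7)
The plan is to apply the probabilistic method directly. I would construct $W$ as the kernel of a uniformly random $\F_q$-linear map $A\colon \F_q^{km} \to \F_q^{N}$ with $N = \lceil \epsilon km \rceil$; this automatically forces $\codim(W) \leq N \leq \epsilon km + 1$ (the off-by-one is absorbed either by trivially enlarging $W$, or by using $N = \lfloor \epsilon km \rfloor$ with a marginally tighter bookkeeping). The goal is then to show that, with positive probability over the choice of $A$, one has $\dim(V \cap W) \leq s = 2r/\epsilon$ simultaneously for every $(k,m,r)$-$\btt$ subspace $V$.

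The argument rests on two ingredients combined by a union bound. The first is a count of $\btt$ subspaces, and this is the conceptual heart of the proof: a $(k,m,r)$-$\btt$ matrix is determined entirely by the $k$ blocks $M_1,\dots,M_k$ of size $m \times r$ along its first block-column, so there are at most $q^{kmr}$ such matrices and hence at most $q^{kmr}$ distinct $(k,m,r)$-$\btt$ subspaces. This is exponentially smaller than the $\approx q^{kr(km-kr)}$ generic $kr$-dimensional subspaces of $\F_q^{km}$, and it is precisely this compact parameterization that allows the statement to hold uniformly in $q$ and $m$.

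The second ingredient is a per-subspace intersection bound. Fix a $(k,m,r)$-$\btt$ subspace $V$, of dimension $kr$ since $r \leq m$ and $M$ has maximal rank. Because $A$ is uniform, its restriction $A|_V\colon V \to \F_q^N$ is also a uniformly random $\F_q$-linear map. The event $\dim(V\cap W)\geq s+1$ coincides with the existence of an $(s+1)$-dimensional subspace $U\subseteq V$ contained in $\ker A$; for each fixed $U$, $\Pr[A|_U = 0] = q^{-N(s+1)}$, and the number of such $U$ is bounded by the Gaussian binomial coefficient $\binom{kr}{s+1}_q = O\!\left(q^{(s+1)(kr-s-1)}\right)$. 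A union bound therefore gives
\[
\Pr\!\bigl[\dim(V\cap W)\geq s+1\bigr] \;\leq\; O\!\left(q^{(s+1)(kr - N)}\right).
\]

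Union-bounding over the at most $q^{kmr}$ choices of $V$ yields a total failure probability of $O\!\left(q^{kmr + (s+1)(kr-N)}\right)$. Substituting $s = 2r/\epsilon$, so that $(s+1)\epsilon/2 > r$, together with $N \geq \epsilon km$ and the hypothesis $r < \epsilon m / 2$, the exponent collapses to $-\Omega(\epsilon km)$, so the probability is strictly less than $1$ whenever the parameters are nondegenerate (trivial degenerate cases such as $r=0$ are handled by taking $W = \{0\}$ or a fortiori). The probabilistic method then produces the desired $W$. I do not anticipate any conceptual obstacle: the one step that deserves emphasis is the compact count of $\btt$ subspaces, since the remainder of the argument is standard kernel-intersection probabilism.
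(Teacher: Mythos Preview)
Your proposal is correct and follows essentially the same probabilistic argument as the paper: count $(k,m,r)$-$\btt$ subspaces by their first block-column (at most $q^{kmr}$), bound the probability that a random $W$ intersects a fixed $V$ in dimension $\geq s$ via a union bound over $s$-dimensional subspaces of $V$, and combine by a union bound. The only cosmetic differences are that the paper draws $W$ uniformly from subspaces of the target co-dimension rather than as the kernel of a random map, and bounds $\Pr[\dim(V\cap W)\geq s]$ rather than $\geq s+1$; neither changes the substance of the argument.
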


Moreover, we are able to explicitly construct $\btt$ evasive subspaces with similar parameters over a field of  considerably smaller size than the one required in Theorem \ref{thm:subspace-design-explicit}.

\begin{restatable}[Explicit $\btt$ evasive subspace]{thm}{bttexp}\label{thm:btt_explicit}
There exists an absolute constant $c>1$, so that for every $\epsilon >0$, positive integers $k,m,r$ with $r < \frac{\epsilon m} {24}$,  and a prime power $q$ satisfying that $q \geq m^c$, there exists a 
$(k,m,r,s)$-$\btt$ evasive subspace $W \subseteq \F_q^{km}$ of co-dimension at most $\epsilon k m$ for $s=  \poly(r/\epsilon)$. Moreover, a basis for  $W$ can be found in time $\poly(q,k,m)$. 
\end{restatable}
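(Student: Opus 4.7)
My plan is to prove Theorem~\ref{thm:btt_explicit} by combining the explicit subspace design of Theorem~\ref{thm:subspace-design-explicit} with an iterative composition argument specialized to the block-Toeplitz structure. First, I would handle the base case: when $k \leq k_0$ for $k_0 \approx q^{\Omega(m\eps/r)}$, Theorem~\ref{thm:subspace-design-explicit} directly gives an $(r, s_0)$-subspace design $H_1,\ldots,H_k \subseteq \F_{q^m}$ with $s_0 = O(r^2/\eps)$ and each $H_i$ of codimension at most $\eps m/2$. Taking $W = H_1 \times \cdots \times H_k$, the same conditioning argument that shows subspace designs yield periodic evasive subspaces applies here to give BTT evasion: for any BTT subspace $V = \img(M)$, conditioning on $x_1,\ldots,x_{i-1}$ restricts $x_i$ to an affine subspace of dimension $\dim(H_i \cap \img(M_1))$, and summing the subspace-design inequality yields $\dim(V \cap W) \leq s_0 = \poly(r/\eps)$.

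For larger $k$, I would compose: partition $[k]$ into $k/k_0$ super-blocks of size $k_0$ and apply the base case within each super-block, producing an ``inner'' constraint $W_{\text{inner}}$. The key structural claim is that $V \cap W_{\text{inner}}$ is precisely the image of a $(k/k_0,\, k_0 m,\, s_0)$-BTT matrix at the super-block scale. Explicitly, parameterize the valid super-block outputs $Y_b$ by ``free coordinates'' $\tilde Y_b$ in the fixed subspace $V' := (H_1\times\cdots\times H_{k_0}) \cap \img(\tilde M) \subseteq \F_q^{k_0 m}$ of dimension at most $s_0$, which is independent of past super-blocks because the inner subspace design is repeated. A generating-function computation for the BTT convolution then yields $Y_b = \tilde Y_b + \sum_{d \geq 1} (N^{(d)} \tilde M^{-1}) \tilde Y_{b-d}$, where the leading block (the inclusion $V' \hookrightarrow \F_q^{k_0 m}$) has full rank $s_0$ and the cross-super-block blocks $N^{(d)} \tilde M^{-1}$ inherit the block-Toeplitz property from the original BTT map. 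Recursing the construction on this outer $(k/k_0, k_0 m, s_0)$-BTT-evasion problem yields a further constraint $W_{\text{outer}}$; set $W = W_{\text{inner}} \cap W_{\text{outer}}$.

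Since the effective field size at each recursion level grows doubly-exponentially ($q^m \to q^{k_0 m} \to \cdots$), only $O(\log^* k)$ levels are ever needed, and the per-level field requirement from Theorem~\ref{thm:subspace-design-explicit} reduces at the bottom to the claimed $q \geq m^c$. The codimension $\eps km$ is preserved by allocating an $\eps/2^{t+1}$ share to level $t$; explicitness in $\poly(q,k,m)$ time follows from $O(\log^* k)$ invocations of Theorem~\ref{thm:subspace-design-explicit}. The main technical obstacle I expect is two-fold: (i) rigorously verifying the ``outer BTT structure'' claim---full rank $s_0$ of the leading block after a change of coordinates to a basis of $V'$, and exact block-Toeplitz structure of the cross-super-block coefficients---so the recursion can be applied cleanly; and (ii) carefully controlling the per-level blowup of the evasion parameter so that after $O(\log^* k)$ levels the final $s$ remains $\poly(r/\eps)$ with an absolute-constant polynomial degree, which will require a tight per-level analysis beyond the naive squaring bound $s_{t+1}=O(s_t^2/\eps)$.
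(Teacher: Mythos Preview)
Your proposal has a genuine gap, and it is precisely the obstacle you flag as (ii). The scheme you outline---base case via Theorem~\ref{thm:subspace-design-explicit}, then recurse for $O(\log^* k)$ levels---is essentially the Guruswami--Xing construction. Each invocation of Theorem~\ref{thm:subspace-design-explicit} takes $r \mapsto s = O(r^2/\eps)$, so after $\log^* k$ levels the evasion parameter becomes $(r/\eps)^{2^{O(\log^* k)}}$, which depends on $k$ and is \emph{not} $\poly(r/\eps)$. You say a ``tight per-level analysis beyond the naive squaring bound'' is required, but you do not supply one, and none is known for the subspace designs of Theorem~\ref{thm:subspace-design-explicit}; this squaring is exactly why \cite{GX13,GK16} only achieved list size depending on $\log^* n$.

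The paper overcomes this with an idea you are missing: at the \emph{innermost} level it does not use the explicit subspace design, but rather the non-explicit $\btt$ evasive subspace of Lemma~\ref{lem:btt_non_explicit}, found by brute-force search. Crucially, Lemma~\ref{lem:btt_non_explicit} holds for \emph{all} $q$ and $m$ with no field-size lower bound, so one may take the innermost block count to be tiny, $k_1 = \Theta(m^3 \log\log k / \log q)$, making the brute-force search over $\F_q^{k_1 m}$ run in time $q^{O((k_1 m)^2)} = \poly(k)$. From there only \emph{two} applications of the composition lemma (Lemma~\ref{lem:composition}) with outer explicit periodic-evasive subspaces of block counts $k_2 \approx \log k$ and $k_3 \approx k$ reach the target $k$, so the evasion parameter is squared only twice, yielding $s = O(r^8/\eps^7)$ independent of $k$.

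A secondary point: your ``key structural claim'' that $V \cap W_{\mathrm{inner}}$ is \emph{precisely} the image of a $\btt$ matrix at the super-block scale is stronger than what the paper establishes. Lemma~\ref{lem:composition} shows only that $V \cap W^{k'}$ is \emph{contained in} a $(k', k m, s)$-\emph{periodic} subspace; the Toeplitz structure of the cross-super-block coefficients is not preserved in general once you intersect with the product $W^{k'}$ and reparameterize by a basis of $\hat V \cap W$, because not every choice of free coordinates in earlier super-blocks extends. Even granting your stronger claim, it would not help with the main gap above unless you also produced an explicit $\btt$-evasive subspace with $s$ linear (rather than quadratic) in $r$ at every level, which you do not.
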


We prove the above theorem in Section \ref{sec:btt}. To this end, we first observe that the iterative construction of \cite{GX13}, described above, implicitly gives the following \emph{composition lemma} for \emph{periodic} evasive subspaces. In what follows, for a subspace $W \subseteq \F_q^n$, and a positive integer $k$, let $W^k \subseteq \F_q^{k n}$ be the subspace containing all vectors $(w_1, \ldots, w_k) \in \F_q^{kn}$ where $w_i \in W$ for all $1\leq i \leq k$. For a pair of subspaces $W \subseteq \F_q^n$ and $W' \subseteq \F_q^{kn}$, we let $W \circ W':=W^{k} \cap W'$. 

\begin{lem}[Implicit in \cite{GX13}]
Suppose that $W$ is an ``inner'' $(k,m,r,s)$-periodic evasive subspace over $\F_q$, and $W'$ is an ``outer'' $(k',k m,s, s' )$-periodic evasive subspace over $\F_q$. Then $W \circ W'=W^{k} \cap W'$ is a $(k' k, m, r ,s' )$-periodic evasive subspace over $\F_q$.
\end{lem}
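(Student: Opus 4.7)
The plan is to show that for every $(k'k,m,r)$-periodic subspace $U\subseteq\F_q^{k'km}$ one has $\dim(U\cap W^{k'}\cap W')\le s'$. Writing $U=\img(N)$ for a $(k'k,m,r)$-periodic matrix $N$, I would first regroup its $k'k$ block rows and block columns into $k'$ \emph{super-blocks} of $k$ blocks each. At this coarser level, $N$ is block-lower-triangular with every diagonal super-block equal to the same $(k,m,r)$-periodic matrix $B$, whose image is a single $(k,m,r)$-periodic subspace $V\subseteq\F_q^{km}$. The evasive hypothesis on $W$ therefore gives $s_1:=\dim(V\cap W)\le s$.

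Next, for $x=Ny\in U\cap W^{k'}$ written in super-block form $x=(x_1,\dots,x_{k'})$ and $y=(y_1,\dots,y_{k'})$, we have $x_I=By_I+c_I$ where $c_I$ is linear in $y_{<I}$ (coming from the below-diagonal super-blocks of $N$). The condition $x_I\in W$ forces $x_I\in W\cap(c_I+V)$, an affine shift of $V_1:=V\cap W$. The key step is to realize this shift as a \emph{linear} function of $y_{<I}$: fix once and for all a decomposition $\F_q^{km}=V_1\oplus U_1\oplus U_2\oplus U_3$ with $W=V_1\oplus U_1$ and $V=V_1\oplus U_2$, let $\rho:V+W\to U_2$ be the projection onto $U_2$ along $W$, and set $w_I:=c_I-\rho(c_I)\in W$. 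Then $x_I=w_I+P\gamma^{(I)}$ for a unique $\gamma^{(I)}\in\F_q^{s_1}$, where $P$ is a fixed $km\times s_1$ matrix of rank $s_1$ whose columns form a basis of $V_1$. Since $B$ is injective (we are in the regime $r\le m$), $y$, and hence $x$, is determined by $\gamma=(\gamma^{(1)},\dots,\gamma^{(k')})$, and induction on $I$ shows that each $x_I$ is linear in $\gamma^{(\le I)}$ with $\gamma^{(I)}$-part exactly $P\gamma^{(I)}$.

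It follows that $U\cap W^{k'}\subseteq\img(\tilde M)$, where $\tilde M$ is block-lower-triangular at the super-block level with every diagonal super-block equal to $P$; i.e., $\tilde M$ is $(k',km,s_1)$-periodic. I would then extend $\tilde M$ to a $(k',km,s)$-periodic matrix $\tilde M'$ by padding $P$ with $s-s_1$ extra columns to obtain a $km\times s$ matrix of rank $s$ (possible since $s\le km$) and extending the off-diagonal super-blocks by zero columns, so that $\img(\tilde M')\supseteq\img(\tilde M)$. Applying the $(k',km,s,s')$-periodic evasive property of $W'$ to $\img(\tilde M')$ yields $\dim(\img(\tilde M')\cap W')\le s'$, and hence $\dim(U\cap W\circ W')\le s'$, proving that $W\circ W'$ is $(k'k,m,r,s')$-periodic evasive.

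The main obstacle I anticipate is producing the \emph{linear} (rather than merely affine) parameterization $x_I=w_I+P\gamma^{(I)}$ with the diagonal super-block $P$ identical across all $I$; this is what makes $\img(\tilde M)$ honestly periodic rather than just a general affine image. The trick is to choose the section $\rho$ globally (independently of $I$), so that the dependence $w_I=c_I-\rho(c_I)$ is uniformly linear in $c_I$, and the induction on $I$ propagates without introducing constant or $I$-dependent terms in the $\gamma^{(I)}$-coefficient.
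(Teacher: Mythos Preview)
Your argument rests on the assertion that when a $(k'k,m,r)$-periodic matrix $N$ is regrouped into $k'$ super-blocks of $k$ blocks each, ``every diagonal super-block [is] equal to the same $(k,m,r)$-periodic matrix $B$.'' This is false. The periodic condition only forces the $m\times r$ diagonal blocks $N_{a,a}$ to coincide; the sub-diagonal blocks $N_{a,b}$ for $a>b$ are unconstrained. Thus the $(I,I)$ super-block has the form $\begin{psmallmatrix} M_1 & 0 & \cdots \\ N_{(I-1)k+2,(I-1)k+1} & M_1 & \cdots \\ \vdots & & \ddots \end{psmallmatrix}$, and the sub-diagonal entries (e.g.\ $N_{2,1}$ versus $N_{k+2,k+1}$ when $k'=k=2$) can differ from one $I$ to another. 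Each diagonal super-block $B_I$ is indeed a $(k,m,r)$-periodic matrix, so the inner evasive hypothesis gives $\dim(\img(B_I)\cap W)\le s$ for every $I$ separately --- but the images $\hat V_I:=\img(B_I)$ are in general distinct subspaces of $\F_q^{km}$, and so are the intersections $\hat V_I\cap W$.

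This breaks the rest of your construction: your single space $V_1=V\cap W$, your projection $\rho$, and your common diagonal block $P$ all presuppose one fixed $B$. With varying $B_I$ you would need a separate $P_I$ for each $I$, and the resulting $\tilde M$ would be block-lower-triangular with full-rank diagonal super-blocks of the right size but \emph{not} periodic, so the outer $(k',km,s,s')$-periodic evasive hypothesis on $W'$ no longer applies to $\img(\tilde M)$. One cannot repair this by choosing a single $s$-dimensional $H$ containing all the $\hat V_I\cap W$, since those subspaces can be independent and span a space of dimension up to $k's$. Your argument is essentially the paper's proof of the neighboring Lemma~\ref{lem:composition} (inner $\btt$-evasive, outer periodic-evasive), where the Toeplitz condition $M_{i,j}=M_{i',j'}$ whenever $i-j=i'-j'$ \emph{does} force all diagonal super-blocks to coincide; but that proof does not carry over to a merely periodic $N$. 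The paper itself states the periodic--periodic lemma only as ``implicit in \cite{GX13}'' and gives no proof, so there is nothing in the paper to compare against beyond this observation.
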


Roughly speaking, the above lemma gives a way to combine together an ``inner'' periodic evasive subspace $W$ with a short block length $m$ (but a relatively small number of blocks $k$) with an ``outer'' periodic evasive subspace $W'$ with a large number of blocks $k'$ (but a long block length $m'$), to obtain a periodic evasive subspace $W \circ W'$ of both short block length $m$ and large number of blocks $\approx k'$ (see Figure \ref{fig_composition} below for an illustration). Applying this lemma iteratively for $\log^*k$ times, using the explicit subspace design given by Theorem \ref{thm:subspace-design-explicit} (which is in particular a periodic evasive subspace), gives the main result of \cite{GX13} which reduces the list size of $\ag$ codes with subfield evaluation points to triply-exponential in $\log^*n$.

\begin{figure}[htb]
\centering
\begin{tikzpicture}[node distance=1.2cm,
    field/.style={draw,  minimum height=5mm, anchor=center, align=center},
    frame/.style={matrix of nodes, column sep=-\pgflinewidth, nodes={field}},
    ]
   \newcommand\lw{\pgflinewidth/2}

    \matrix (F1) [frame, minimum width=8mm]{
    {} & \dots & {}    \\};

    \matrix (F2) [frame, minimum width=24mm, below=of F1]{
    {} & {} &  \dots & {} & {} \\};
    
    \matrix (F3) [frame, minimum width=8mm, below=of F2]{
    {} & {} & {} & {} & {} & {} & {} &  \dots & {} & {} & {} & {} & {} & {} & {} \\};
    
    \node at ([yshift=4mm]F1-1-2.north) {$x\in W$};
    \node at ([yshift=4mm]F2-1-1.north) {$x\in W'$};
    \node at ([yshift=4mm]F3-1-2.north) {$x\in W\circ W'$};
    
    \node at ([xshift=15mm]F1.east) {$k$ blocks};
    \node at ([xshift=15mm]F2.east) {$k'$ blocks};
    \node at ([xshift=15mm]F3.east) {$k'k$ blocks};
 
    \draw[dotted] ([xshift=\lw]F1-1-1.south west) -- ([xshift=\lw]F2-1-3.north west);
    \draw[dotted] ([xshift=-\lw]F1-1-3.south east) -- ([xshift=-\lw]F2-1-3.north east); 
    \draw[dotted] ([xshift=\lw, yshift=-3mm]F2-1-3.south west) -- ([xshift=\lw]F3-1-7.north west);
    \draw[dotted] ([xshift=-\lw, yshift=-3mm]F2-1-3.south east) -- ([xshift=-\lw]F3-1-9.north east); 
    
    \draw[latex'-latex'] ([xshift=\lw, yshift=-4mm]F1-1-2.west) -- ([xshift=-\lw, yshift=-4mm]F1-1-2.east) node[below, midway] {$m$};
    \draw[latex'-latex'] ([xshift=\lw, yshift=-4mm]F2-1-3.west) -- ([xshift=-\lw, yshift=-4mm]F2-1-3.east) node[below, midway] {$km$};
    \draw[latex'-latex'] ([xshift=\lw, yshift=-4mm]F3-1-8.west) -- ([xshift=-\lw, yshift=-4mm]F3-1-8.east) node[below, midway] {$m$};
    \draw[] ([xshift=\lw]F1-1-2.south west) -- ([xshift=\lw, yshift=-3mm]F1-1-2.south west);
    \draw[] ([xshift=-\lw]F1-1-2.south east) -- ([xshift=-\lw, yshift=-3mm]F1-1-2.south east);
    \draw[] ([xshift=\lw]F2-1-3.south west) -- ([xshift=\lw, yshift=-3mm]F2-1-3.south west);
    \draw[] ([xshift=-\lw]F2-1-3.south east) -- ([xshift=-\lw, yshift=-3mm]F2-1-3.south east);
    \draw[] ([xshift=\lw]F3-1-8.south west) -- ([xshift=\lw, yshift=-3mm]F3-1-8.south west);
    \draw[] ([xshift=-\lw]F3-1-8.south east) -- ([xshift=-\lw, yshift=-3mm]F3-1-8.south east);
\end{tikzpicture}
\caption{Illustration of the first two parameters in composition.}
\label{fig_composition}
\end{figure}

We further observe that essentially the same composition lemma holds when replacing the inner periodic evasive subspace with a \emph{$\btt$ evasive subspace}, in which case the resulting composed subspace is a \emph{$\btt$} evasive subspace as well.

\begin{restatable}{lem}{composition}\label{lem:composition}
Suppose that $W$ is an ``inner'' $(k,m,r,s)$-$\btt$ evasive subspace over $\F_q$, and $W'$ is an ``outer'' $(k',k m,s, s' )$-periodic evasive subspace over $\F_q$. Then $W \circ W'=W^{k'} \cap W'$ is a $(k' k, m, r ,s' )$-$\btt$ evasive subspace over $\F_q$.
\end{restatable}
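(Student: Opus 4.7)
The plan is to show that for any $(k'k, m, r)$-$\btt$ subspace $V \subseteq \F_q^{k'km}$ one has $\dim(V \cap (W \circ W')) = \dim(V \cap W^{k'} \cap W') \le s'$. Writing $V = \img(M)$ for a $(k'k, m, r)$-$\btt$ matrix $M$, I would first regroup the $k'k$ block rows and block columns of $M$ into $k'$ super-groups of $k$ block rows (respectively block columns) each. The block-Toeplitz and block-lower-triangular properties of $M$ transfer verbatim to this $k' \times k'$ super-grid, and (crucially) the super-diagonal block $T := \tilde N_0$ is itself a $(k, m, r)$-$\btt$ matrix of size $km \times kr$; this decomposition is what lets me apply the inner $\btt$ evasive property of $W$ to the small-scale $\btt$ structure carried by $T$.

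Applying the $(k, m, r, s)$-$\btt$ evasive property of $W$ to the $\btt$ subspace $U' := \img(T) \subseteq \F_q^{km}$ yields $\dim U \le s$ for $U := U' \cap W$. Since $r \le m$, $T$ is injective, and the preimage $L := T^{-1}(U)$ is mapped isomorphically onto $U$ by $T$; set $s_0 := \dim L = \dim U \le s$ and fix an identification $\beta : \F_q^{s_0} \xrightarrow{\sim} L$. I then fix once and for all a linear section $\sigma : \F_q^{km} \to \F_q^{kr}$ with the property that $T \sigma(c) + c \in W$ for every $c \in W + U'$; such a $\sigma$ is produced by picking a complement $U_1'$ of $U$ in $U'$ (yielding the direct decomposition $W + U' = W \oplus U_1'$) and setting $\sigma(w + u) := -T^{-1}(u)$ for $w \in W$, $u \in U_1'$, before extending arbitrarily by linearity. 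This $\sigma$ will supply the ``particular solutions'' in the recursion below.

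The heart of the argument is to re-parametrize $V \cap W^{k'}$ super-block by super-block. Given $v \in V \cap W^{k'}$ with $v = Mx$, put $c_a := \sum_{b < a} \tilde N_{a - b} x_b$ so that $v_a = T x_a + c_a$. From $v_a \in W$ and $T x_a \in U'$ one reads off $c_a = v_a - T x_a \in W + U'$, so by the defining property of $\sigma$, $x_a - \sigma(c_a)$ lies in $L$ and can be written uniquely as $\beta(y_a)$ for some $y_a \in \F_q^{s_0}$. A short induction on $a$ then shows that the map $y \mapsto v$ takes the form $v_a = \sum_{b \le a} Q_{a - b} y_b$ for matrices $Q_d$ \emph{independent of $a$}, with $Q_0 = T \beta =: P$ having image $U$; equivalently, $v = M' y$ for a block-lower-triangular block-Toeplitz matrix $M'$ with super-diagonal block $P$. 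I expect this Toeplitz verification to be the main technical obstacle: one must check that the $y_c$-coefficient of $c_a$ (and hence of $x_a$ and of $v_a$) depends only on the difference $a - c$, which ultimately follows from the shift-invariance of $\tilde N_d$, $\sigma$, and $\beta$ in the recursion.

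The argument concludes quickly. The inclusion $V \cap W^{k'} \subseteq \img(M')$ is immediate from the construction, and (after, if $s_0 < s$, padding $P$ with $s - s_0$ columns linearly independent of $U$ and padding the off-diagonal blocks with zero columns) $M'$ extends to a $(k', km, s)$-periodic matrix $M''$ whose image contains $\img(M')$. Invoking the $(k', km, s, s')$-periodic evasive property of $W'$ on $\img(M'')$ yields $\dim(V \cap W^{k'} \cap W') \le \dim(\img(M'') \cap W') \le s'$, which is exactly the $(k'k, m, r, s')$-$\btt$ evasive property for $W \circ W'$.
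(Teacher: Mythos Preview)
Your proof is correct, and the overall architecture matches the paper's: regroup the $(k'k,m,r)$-$\btt$ matrix into a $k'\times k'$ super-grid, observe that the super-diagonal block $T=\hat M$ is itself a $(k,m,r)$-$\btt$ matrix, apply the inner evasive property to bound $\dim(\img T\cap W)\le s$, show that $V\cap W^{k'}$ sits inside a $(k',km,s)$-periodic subspace, and finish with the outer evasive property.

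Where you genuinely diverge is in the middle step. The paper works through the filtration $(V\cap W^{k'})_0\supseteq(V\cap W^{k'})_1\supseteq\cdots$, and for each $i$ separately lifts a basis of $\pi_i((V\cap W^{k'})_{i-1})\subseteq \hat V\cap W$ to vectors $b^{(i,1)},\dots,b^{(i,s)}$ whose $i$th block is a fixed basis $b^{(1)},\dots,b^{(s)}$ of a subspace $H\supseteq\hat V\cap W$; these lifts are chosen independently for each $i$, so the resulting matrix $\tilde M$ is only \emph{periodic} (identical diagonal blocks). You instead fix a single linear section $\sigma$ once and for all and run a recursion $x_a=\sigma(c_a)+\beta(y_a)$; the shift-invariance of $\tilde N_d$, $\sigma$, and $\beta$ then propagates to give a genuinely block-\emph{Toeplitz} matrix $M'$. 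Your approach is more constructive and yields the slightly stronger conclusion that $V\cap W^{k'}$ lies in a $\btt$ (not merely periodic) subspace, at the cost of the inductive Toeplitz verification you flag; the paper's filtration argument is shorter because it never needs to track how the lifts depend on $i$ below the diagonal. Since $W'$ is only assumed periodic-evasive, the extra Toeplitz structure you obtain is not needed for the application.

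One small point: you invoke ``since $r\le m$, $T$ is injective'' without this being stated in the lemma. This hypothesis holds in every use of the lemma in the paper (indeed $r<\epsilon m/24$ throughout), and the degenerate case $r>m$ is easily handled separately (then $\hat V=\F_q^{km}$ and $W$ itself has dimension $\le s$, so $W^{k'}$ is already a $(k',km,s)$-periodic subspace after padding), but you should note this.
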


To prove Theorem \ref{thm:btt_explicit}, we first apply the above composition lemma with the inner subspace being the
non-explicit $(k_1,m,r,s_1)$-$\btt$ evasive subspace, given by Lemma \ref{lem:btt_non_explicit}, for $k_1 \approx \log \log k$ (which can be found efficiently via brute-force search in this setting of parameters), and the outer subspace being the explicit $(k_2,k_1m,s_1,s_2)$-periodic evasive subspace, given by Theorem \ref{thm:subspace-design-explicit}, for $k_2 = \log k\approx \exp(k_1)$. Then we apply the above composition lemma once more with the inner subspace being the resulting 
$(k_2 k_1, m,r,s_2)$-$\btt$ evasive subspace, and the outer subspace being yet another explicit $(k_3,k_1k_2m,s_2,s_3)$-periodic evasive subspace, given by Theorem \ref{thm:subspace-design-explicit}, for $k_3 \approx k\approx \exp(k_2) $. As we apply the composition step only twice, this results in a $(k,m,r,s_3)$-$\btt$ evasive subspace for $s_3 = \poly(r/\eps)$. A careful choice of parameters gives the explicit $\btt$ evasive subspace claimed in Theorem \ref{thm:btt_explicit}.

Finally, we note that the above composition method is reminiscent of the classical technique of   \emph{code concatenation} \cite{For66}. Roughly speaking, code concatenation is a technique for reducing the alphabet size of a code, where one starts with a long outer code over a large alphabet, and then reduces the alphabet size by encoding each large alphabet symbol with a short inner code over a smaller alphabet. Curiously, the parameters obtained by concatenation are very similar to those obtained by the above Composition Lemma \ref{lem:composition}, when viewing $k$ as the codeword length and $q^m$ as the alphabet size of the code.

In particular, a well-known method for constructing \emph{asymptotically good} codes (i.e., codes with constant rate and distance) over small alphabets (e.g., the binary alphabet) is to first concatenate an inner asymptotically good code over a small alphabet of length $k_1\approx \log \log k$ (which can be found via brute-force search) with an outer $\rs$ code of alphabet size $\approx \exp(k_1)$ and  
length $k_2 \approx \log k \approx \exp(k_1)$, and then concatenate the resulting code with another outer $\rs$ code of alphabet size $\approx k  \approx \exp(k_2)$ and length $\approx k$. Our construction of $\btt$ evasive subspaces uses the same two-level construction, with the explicit subspace design of Theorem \ref{thm:subspace-design-explicit} playing the role of the $\rs$ code over a large alphabet, and the non-explicit $\btt$ evasive subspace of Lemma \ref{lem:btt_non_explicit} playing the role of the non-explicit asymptotically good code over a small alphabet. However, despite the technical resemblance, we could not find any formal connection between code concatenation and the above composition method for evasive subspaces.  We further note that a similar two-step composition approach was also used in other settings in theoretical computer science such as the original proof of the PCP theorem \cite{ALMSS98}.

\subsection{Proof of Main Theorem \ref{thm:main}}
Our main Theorem \ref{thm:main} follows as an immediate corollary of the above Theorems \ref{thm:intro_ag_btt} and \ref{thm:btt_explicit}.

\begin{proof}[Proof of Theorem \ref{thm:main}]
Let $\epsilon'=\frac{\epsilon} {26}$ and $R'=R+25\epsilon'$.
Let $m = 1/(\epsilon')^2$, and let $q=\poly(1/\epsilon)$ be an even power of a prime so that $q\geq m^c$, where $c>1$ is a sufficiently large constant for which both Theorems \ref{thm:intro_ag_btt} and \ref{thm:btt_explicit} hold. 
Let $\{C_n\}_{n}$ be the  infinite family of codes, guaranteed by Theorem \ref{thm:intro_ag_btt}, with the following properties  for each $C_n$:
\begin{enumerate}
\item $C_n: \F_{q^m}^k \to \F_{q^m}^n$ is a linear code of rate at least $R'$ that can be encoded in time $\poly(1/\eps,n)$.
\item There exists an injective linear map $\phi: \F_{q^m}^k \to \F_{q^m}^{\hat k}$, where $\hat k \leq n$, so that $C_n$ can be list decoded from a $(1-R'-\epsilon')$-fraction of errors, pinning down the images of the candidate messages under $\phi$ (viewed as length $\hat k m$ vectors over $\F_q$) to an affine shift of a $(\hat k,m,\epsilon' m)$-$\btt$ subspace $V$ over $\F_q$. Moreover, the map $\phi$, a basis for $V$, and the affine shift can be computed in time $\poly(1/\eps,n)$.
\end{enumerate}

Fix a code $C_n$ as above, and let $\epsilon'' = 25\epsilon'$.
By Theorem \ref{thm:btt_explicit}, there exists a $(\hat{k},m,\epsilon' m,s)$-$\btt$ evasive subspace $W \subseteq \F_q^{\hat k m}$  of co-dimension at most $ \epsilon'' \hat{k} m$ for $s= \poly(1/\epsilon)$, and  a basis for  $W$ can be found in time $\poly(1/\epsilon, n)$. Let $C'_n$ be the code obtained from $C_n$ by restricting the message space to $\phi^{-1}(W)$.
We claim  that $C'_n$ satisfies Theorem \ref{thm:main}.

Note first that $C'_n$ is an $\F_q$-linear code of 
rate at least $R' - \epsilon'' \hat{k} / n   \geq R' -  \epsilon'' = R$, and alphabet size $q^m=(1/\epsilon)^{O(1/\epsilon^2)}$, that can be encoded in time $\poly(1/\eps,n)$.
Moreover, as $1-R'-\epsilon' = 1 - R - \epsilon$, the code $C_n$ can be list decoded from a $(1-R-\epsilon)$-fraction of errors, 
pinning down the images of the candidate messages under $\phi$ (viewed as length $\hat k m$ vectors over $\F_q$) to an affine shift $\mathbf{u}$ of a $(\hat k,m,\epsilon' m)$-$\btt$ subspace $V$ over $\F_q$.
This means that the candidate messages of the code $C_n$ are contained in $\phi^{-1}(\mathbf{u} +V)$.

As $C'_n$ is obtained from $C_n$ by restricting the message space to $\phi^{-1}(W)$,
the candidate messages of $C'_n$ are contained in 
\[
\phi^{-1}(\mathbf{u}+V)\cap \phi^{-1}(W)=\phi^{-1}((\mathbf{u}+V) \cap W),
\] 
which is an affine shift of $\phi^{-1}(V\cap W)$ (or empty).
We can find a basis $B$ for $\phi^{-1}(V \cap W)$ and a vector $\mathbf{u}'\in \phi^{-1}((\mathbf{u}+V) \cap W)$ (if such exists)
in time $\poly(1/\epsilon, n)$ given the received word. Then the list of candidate messages of $C'_n$ is contained in the subspace spanned by $B$ and $\mathbf{u}'$ over $\F_q$, whose dimension is bounded by $\dim(V\cap W)+1\leq s+1=\poly(1/\epsilon)$, and a basis for this subspace can be found in time $\poly(1/\epsilon, n)$.
Consequently,  the output list size is $\exp(\poly(1/\epsilon))$ and the entire list can be output in time $\exp(\poly(1/\epsilon)) \cdot \poly(n)$, as claimed.
\end{proof}

\paragraph{Open problems.} 
We end this section with a couple of intriguing open problems.
\begin{enumerate}
\item Is it possible to explicitly construct capacity-achieving list decodable codes with list size $\poly(1/\epsilon)$ (even over a large super-constant alphabet)? As mentioned above, the smallest known list size for explicit capacity-achieving list decodable codes is $(1/\epsilon)^{O(1/\epsilon)}$, achieved by $\frs$ codes \cite{KRSW}, while potentially the list size could be as small as $O(1/\eps)$, as is the case for random codes. Such a construction could also potentially lead to \emph{fully polynomial-time} $\poly(1/\eps,n)$ list-decoding algorithms. 
\item Is it possible to obtain capacity-achieving list decodable codes with \emph{truly linear-time} encoding or list decoding algorithms?  As in the unique decoding setting, this seems to require completely different  techniques, e.g., graph-based constructions \cite{MRRSW, RWZ}. 
\item A question that is still widely open is to explicitly construct capacity-achieving list decodable codes over small \emph{fixed-size} alphabets, e.g., the \emph{binary} alphabet. Over a $q$-ary alphabet, the list-decoding capacity is known to be $h_q^{-1}(1-R)$, where $h_q (x)=x\log_q(q-1)+x\log_q(1/ x) +(1-x)\log_q(1/(1-x))$ is the $q$-ary entropy function. Once more, this question seems to require completely different techniques such as graph-based constructions \cite{Tashma17, GQST20}.
\item Can our methods be used to construct other pseudo-random objects? In particular, an intriguing question is whether these techniques could be used to construct \emph{lossless dimension expanders} over constant-size fields, whose state-of-the-art constructions \cite{GRX18} are based on the list-decoding machinery of \cite{GX13}.\end{enumerate}

\paragraph{Organization.} In Section \ref{sec:btt} we present our explicit construction of $\btt$ evasive subspaces (Theorem \ref{thm:btt_explicit}). In Section \ref{sec:RS} we first show, as a warm-up, that the output list of $\rs$ codes with subfield evaluation points is contained in an affine shift of a $\btt$ subspace. Then in Section \ref{sec:AG}, after providing the required $\ag$ code preliminaries in Section \ref{sec:AG_pre},
 we show how to extend the analysis to $\ag$ codes with subfield evaluation points (Theorem \ref{thm:intro_ag_btt}).




 

\section{Explicit BTT evasive subspace}\label{sec:btt}

In this section, we prove Theorem \ref{thm:btt_explicit}, which is restated below.

\bttexp*




The first ingredient in our proof is Lemma \ref{lem:btt_non_explicit}, restated below, which shows the existence of a (non-explicit) $(k,m,r,s)$-$\btt$ evasive subspace $W \subseteq \F_q^{km}$ of co-dimension at most $\epsilon km$ and $s = O\left(r/ \epsilon\right)$. Notably, the lemma holds for any field size $q$ and block length $m$.

\bttnonexp*

\begin{proof}
Let $W \subseteq \F_q^{km}$ be a random linear subspace of co-dimension $\epsilon k m$.   
Fix a $(k,m,r)$-$\btt$ subspace $V \subseteq \F_q^{km}$. 
We first bound the probability that $\dim(V \cap W) \geq s$. Fix a subspace $V^* \subseteq V$ of dimension $s$. Since $W$ is a random subspace of co-dimension $\epsilon k m$, the probability that $V^*$ is contained in $W$ is at most $\prod_{i=0}^{s-1}  \frac{q^{(1-\epsilon)km}-q^i}{q^{km}-q^i} \leq q^{-\epsilon k m s}$. As the number of $s$-dimensional subspaces $V^* \subseteq V$ is at most $q^{k r  s}$, by a union bound, we have that $W$ contains some $s$-dimensional subspace $V^* \subseteq V$ with probability at most $q^{(r-\epsilon m)ks}$. So $\dim(V \cap W) \geq s$ with probability at most $q^{(r-\epsilon m)ks}$. 

Next, we observe that the number of $(k,m,r)$-$\btt$ subspaces is at most $q^{r k m}$, as each such subspace is determined by the first $r$ columns of a $(k,m,r)$-$\btt$ matrix. Consequently, by a union bound, we get that $\dim(V \cap W) \geq s$ for some $(k,m,r)$-$\btt$ subspace $V$, with probability at most $q^{rkm + (r -\epsilon m)ks}$. This latter probability is smaller than $1$ since
$$r k m + (r-\epsilon m)ks =  r k m + (r-\epsilon m)\cdot \frac{2 kr} {\epsilon}= 
r k\left( \frac {2r} \epsilon -m \right) <0,$$
where the first equality is by our choice of $s=2 r / \epsilon$, and the last inequality is by our choice of 
$r< \frac{\epsilon m} {2}$. We conclude that there exists a $(k,m,r, \frac{2r} {\epsilon })$-$\btt$ evasive subspace $W$ of co-dimension at most $\epsilon km$.
\end{proof}

\begin{rem}\label{rem:btt_non_explicit}
We further note that to find a $\btt$ evasive subspace as above, one can enumerate over all subspaces $W \subseteq \F_q^{km}$ of co-dimension at most $\epsilon km$, and over all $(k,m,r)$-$\btt$ subspaces $V$, and compute the dimension of their intersection, which takes time $q^{O((km)^2)}$. 
\end{rem}

Our second ingredient is Theorem \ref{thm:subspace-design-explicit} from \cite{GK16}, restated below,  which gives an \emph{explicit} construction of 
an $(r,s)$-subspace design over $\F_{q^m}$ of cardinality $k$, where each subspace has  co-dimension at most $\epsilon m$ and $s = O\left(\frac{r^2} {\epsilon}\right)$, as long as $q^m$ is sufficiently larger than $k$. 

\sdexp*

For completeness, we sketch the proof of the above theorem in Appendix \ref{sec:subspace-design}. 
Note that the above theorem in particular gives an explicit periodic evasive subspace with the same parameters.

\begin{cor}[Explicit periodic evasive subspace]\label{cor:periodic-explicit}
There exists an absolute constant $c >1$, so that 
for every $\epsilon >0$, positive integers $k,m,r$ with $r < \frac{\epsilon m} {4}$, and a prime power $q$ satisfying
$q^m \geq \max\left\{k^{c \cdot r /\epsilon}, \left(\frac {2r} {\eps}\right) ^{2 r /\epsilon}\right\}$,
 there exists a $(k,m,r,s)$-periodic evasive subspace $W \subseteq \F_q^{km}$ of co-dimension at most $\eps k m$ for $s= \frac{2r^2} {\epsilon}$.  Moreover, a basis for  $W$ can be found in time $\poly(q,k,m)$.
\end{cor}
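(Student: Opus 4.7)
The plan is to derive the corollary directly from Theorem \ref{thm:subspace-design-explicit} by taking the product of the subspaces given by the design. Let $H_1, \ldots, H_k \subseteq \F_{q^m}$ be the $(r,s)$-subspace design of the theorem, fix an $\F_q$-linear identification $\F_{q^m} \cong \F_q^m$, and set $W := H_1 \times \cdots \times H_k \subseteq \F_q^{km}$. Each $H_i$ has $\F_q$-co-dimension at most $\epsilon m$ in $\F_q^m$, so $W$ has co-dimension at most $\epsilon k m$ in $\F_q^{km}$; moreover, a basis for $W$ is obtained by concatenating bases for the individual $H_i$'s, which Theorem \ref{thm:subspace-design-explicit} produces in time $\poly(q,k,m)$.

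The remaining task is to verify that $W$ is $(k,m,r,s)$-periodic evasive. Fix an arbitrary $(k,m,r)$-periodic subspace $V = \img(M)$, where $M$ is a block-lower-triangular matrix with common diagonal block $M_0 := M_{1,1} = \cdots = M_{k,k}$ of rank $\min\{r,m\} = r$ (using the hypothesis $r < \epsilon m / 4 < m$). Let $\hat V := \img(M_0) \subseteq \F_q^m$, an $\F_q$-subspace of dimension $r$. Since $M_0$ has trivial kernel, the map $M$ itself is injective (the triangular structure lets one solve for the $x_i$'s top-down), so $\dim(V \cap W) = \dim(M^{-1}(W))$.

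I would then count $\dim(M^{-1}(W))$ iteratively. Write $x = (x_1,\ldots,x_k) \in \F_q^{kr}$, and let $A_i \subseteq \F_q^{ir}$ denote the set of tuples $(x_1,\ldots,x_i)$ for which the first $i$ blocks of $Mx$ lie in $H_1,\ldots,H_i$ respectively. The coordinate projection $\pi_i \colon A_i \to A_{i-1}$ has kernel $\{(0,\ldots,0,x_i) : M_0 x_i \in H_i\}$, whose $\F_q$-dimension is exactly $\dim(\hat V \cap H_i)$ since $M_0$ is an injection with image $\hat V$. By rank--nullity,
\[
\dim(A_i) \;\leq\; \dim(A_{i-1}) + \dim(\hat V \cap H_i),
\]
and iterating yields $\dim(V \cap W) = \dim(A_k) \leq \sum_{i=1}^{k} \dim(\hat V \cap H_i) \leq s$, where the last inequality is the subspace design property applied to $\hat V$.

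I do not expect a genuine obstacle here; the only delicate point is that the off-diagonal blocks of $M$ introduce affine shifts when one tries to describe the valid $x_i$'s for each fixed prefix, but the rank--nullity argument on the projection $\pi_i$ sidesteps this by only requiring the kernel dimension, which is controlled purely by the common diagonal block $M_0$ and hence by $\dim(\hat V \cap H_i)$.
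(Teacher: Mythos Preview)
Your proposal is correct and follows essentially the same approach as the paper: both take $W=H_1\times\cdots\times H_k$ and reduce the bound on $\dim(V\cap W)$ to $\sum_i\dim(\hat V\cap H_i)$ via the injectivity of $M$ and its block-lower-triangular structure. The only cosmetic difference is that the paper writes $W=\ker(R)$ for a block-diagonal $R$ and bounds $\dim(\ker(RM))$ by the sum of the diagonal kernel dimensions $\dim(\ker(R_i\hat M))=\dim(\hat V\cap H_i)$, whereas you obtain the same inequality by the iterative rank--nullity argument on the projections $\pi_i:A_i\to A_{i-1}$.
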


\begin{proof}
Let $H_1, H_2, \dots, H_{k}$ be the $(r,s)$-subspace design over $\F_{q^m}$ 
 guaranteed by Theorem \ref{thm:subspace-design-explicit} for the same choice of parameters, and let  $W=H_1\times H_2\times \cdots\times H_{k}$. Then by Theorem \ref{thm:subspace-design-explicit}, we clearly have that a basis for $W$ is a subspace of $\F_q^{km}$ of co-dimension at most $\eps k m$, and that $W$ can be found in time $\poly(q,k,m)$. It remains to show that $W$ is a $(k,m,r,s)$-periodic evasive subspace. Let $M$  be a $(k, m, r)$-periodic matrix, and let $V=\img(M)$, we would like to show that $\dim (V\cap W)\leq s$.

 By definition, $M$ is a block-lower-triangular matrix with $k$ copies of an $m\times r$ matrix  $\hat M$ on the main diagonal, and $\hat M$ has full column rank $r$.
 Let $\hat V=\img(\hat M)$, which is a subspace of $\F_q^m$ of dimension $r$.
 For $i\in [k]$, choose an $m\times m$ matrix $R_i$ such that $H_i=\ker(R_i)$, 
  and let $R \in \F_q^{km \times km}$ be a $(k\times k)$-block diagonal matrix with blocks $R_1, R_2,\dots, R_k$ on the main diagonal. Note that $W=\ker(R)$, and furthermore, $R M \in \F_q^{km \times kr}$ is a $(k \times k)$-block-lower-triangular matrix with blocks $R_1 \hat M, R_2 \hat M,\dots, R_k \hat M $ on the main diagonal.

So we have
\begin{align*}
 \dim (V \cap W)& = \dim (V \cap \ker(R)) \\
 &= \dim (\ker(RM)) \leq \sum_{i=1}^{k} \dim (\ker(R_i \hat M)) \\
&=\sum_{i=1}^{k} \dim (\hat V\cap \ker(R_i))=\sum_{i=1}^{k} \dim (\hat V\cap H_i) \leq s, 
\end{align*}
where the  last inequality follows since $H_1,\ldots,H_k$ is an $(r,s)$-subspace design.
\end{proof}

Our last ingredient is Lemma \ref{lem:composition}, restated below, which gives a composition lemma for $\btt$ evasive subspaces. 

\composition*

\begin{proof}
Let $V$ be a ($k' k,m, r$)-$\btt$ subspace. Our goal is to show that $\dim(V \cap (W^{k'} \cap W')) \leq s'$. Since $W'$ is a $(k', k m, s, s')$-periodic evasive subspace, it suffices to show that $V':=V \cap W^{k'}$ is contained in a $(k', km, s)$-periodic subspace $U$, and consequently 
$$\dim(V \cap (W^{k'} \cap W'))=\dim((V \cap W^{k'}) \cap W') \leq \dim(U \cap W') \leq s' .$$

Since $V$ is a ($k'k, m, r$)-$\btt$ subspace, there exists a ($k'k, m,r$)-$\btt$ matrix $M$ whose image equals $V$. As a $ k'\times k'$ block matrix, $M$ has $k'$ copies of a $(k m) \times (k r)$ block $\hat M$ on its main diagonal. 
Next observe that $\hat M$ itself is a $(k,m,r)$-$\btt$ matrix, and so if we let $\hat V:=\img(\hat M)$, then $\hat V$ is a $(k,m,r)$-$\btt$ subspace. Recalling our assumption that $W$ is a $(k,m,r,s)$-$\btt$ evasive subspace, this implies in turn that  $\dim(\hat V \cap W) \leq s$. Let $H \subseteq \F_q^{km}$ be a subspace of dimension $s$ containing $\hat V \cap W$, and let $\{b^{(1)},\ldots, b^{(s)}\} \subseteq \F_q^{km}$ be a basis for $H$.

Next, we introduce a bit of notation. We write a vector $ x\in\F_q^{k'km}$ as $x=(x_1,x_2,\dots,x_{k'})$ where $x_i\in \F_q^{km}$, and for $i = 1,\ldots,k'$, we let $\pi_i(x) = x_i$.
 For a subspace $X \subseteq \F_q^{k'km}$,
and $i=0,1,\dots, k'$, we let
$$X_i=\left\{ (x_1,x_2,\dots,x_{k'}) \in X \mid x_1=x_2 = \cdots =x_{i}=0 \right\}.$$
In particular, we have $X_0=X$ and $X_k=\{0\}$. 

\begin{claim}\label{claim_spanning}
For all $i=1,\dots, k'$,  there exist vectors $b^{(i,1)}, \dots, b^{(i,s)}\in \F_q^{k'km}$ satisfying the following conditions:
\begin{enumerate}
\item  For all $j=1,\dots,s$, it holds that $(b^{(i,j)})_1 = \cdots =(b^{(i,j)})_{i-1} = 0$ and $(b^{(i,j)})_i=b^{(j)}$.
\item $(V \cap W^{k'})_{i-1}\subseteq (V \cap W^{k'})_i + \mathrm{span}\{b^{(i,1)}, \dots, b^{(i,s)}\}$.
\end{enumerate}
\end{claim}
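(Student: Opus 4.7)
My plan is to construct each $b^{(i,j)}$ as the image of $b^{(j)}$ under a linear map $\Psi_i: H \to \F_q^{k'km}$, built so that its output always has first $i-1$ blocks zero and $i$-th block equal to the input. Condition (1) will then be automatic, and condition (2) will reduce to checking that $x_i$ always lies in $H$ for $x \in (V \cap W^{k'})_{i-1}$, a fact that packages the $\btt$ structure of $M$.

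The structural heart of the argument is the claim
\[
\pi_i\bigl((V \cap W^{k'})_{i-1}\bigr) \subseteq \hat V \cap W.
\]
To see this, take $x \in (V \cap W^{k'})_{i-1}$ and write $x = Mv$ with $v = (v_1, \ldots, v_{k'})$ in the coarse $k' \times k'$ block decomposition. At this coarse level, $M$ is block-lower-triangular with every diagonal block equal to $\hat M$, so $x_1 = \cdots = x_{i-1} = 0$ unfolds into the triangular system $\hat M v_J = -\sum_{J' < J} M_{J, J'} v_{J'}$ for $J = 1, \ldots, i-1$. In the regime $r \leq m$, the $\btt$ rank condition makes the top-left $m \times r$ block of $M$ have full column rank, and hence $\hat M$ itself has full column rank $kr$. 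Induction on $J$ then gives $v_1 = \cdots = v_{i-1} = 0$, so $x_i = \hat M v_i \in \hat V$; combined with $x \in W^{k'}$ this yields $x_i \in \hat V \cap W \subseteq H$.

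To define $\Psi_i$, set $U_i := \pi_i((V \cap W^{k'})_{i-1})$, pick any linear section $\Phi_i: U_i \to (V \cap W^{k'})_{i-1}$ of $\pi_i$, fix a complement $C$ of $U_i$ in $H$, and let $\Psi_i$ coincide with $\Phi_i$ on $U_i$ and send each $c \in C$ to the vector with $c$ in the $i$-th block and zeroes elsewhere. Setting $b^{(i,j)} := \Psi_i(b^{(j)})$ makes condition (1) immediate. For condition (2), take $x \in (V \cap W^{k'})_{i-1}$, expand $x_i = \sum_j \alpha_j b^{(j)}$ uniquely in the basis of $H$ (possible since $x_i \in H$), and apply $\Psi_i$ to obtain
\[
\sum_j \alpha_j b^{(i,j)} = \Psi_i(x_i) = \Phi_i(x_i) \in (V \cap W^{k'})_{i-1},
\]
where the second equality uses $x_i \in U_i$; the $i$-th block of this sum equals $x_i$ by construction. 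Therefore $x - \sum_j \alpha_j b^{(i,j)}$ lies in $V \cap W^{k'}$ with all of its first $i$ blocks zero, placing it in $(V \cap W^{k'})_i$, as required.

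I expect the only delicate point to be the projection claim above: one must keep straight the coarse-versus-fine block indexing to see that the full column rank of $\hat M$ truly propagates the vanishing of the $x_J$ into the vanishing of the preimage components $v_J$. Once that is in hand, the rest is routine linear algebra (choosing a section and a complement of $U_i$ in $H$).
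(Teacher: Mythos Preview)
Your proposal is correct and follows essentially the same approach as the paper. The paper chooses a basis $\{v^{(1)},\dots,v^{(s)}\}$ of $H$ adapted to $H'=\pi_i((V\cap W^{k'})_{i-1})$, picks lifts $u^{(j)}$ (in $(V\cap W^{k'})_{i-1}$ for $j\leq t$, and with a bare $i$-th block otherwise), and then passes to the $b^{(i,j)}$ via the change-of-basis matrix $A$; your linear map $\Psi_i$ is precisely this construction repackaged, with $\Phi_i$ playing the role of the lifts on $U_i=H'$ and the trivial extension on $C$ playing the role of the remaining $u^{(j)}$. One minor point: your projection argument assumes $r\leq m$ to get full column rank of $\hat M$; when $r>m$ the diagonal block has full \emph{row} rank, so $\hat V=\F_q^{km}$ and the containment $x_i\in\hat V$ is trivial, so the claim holds in general.
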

\begin{proof}
Fix $i\in [k']$, and let $H'=\pi_i((V \cap W^{k'})_{i-1})$. Note that 
\[
H'\subseteq \pi_i(V_{i-1}) \cap \pi_i((W^{k'})_{i-1})=\hat{V}\cap W\subseteq H.
\]
Let $t=\dim H'$.
Fix a basis $\{v^{(1)},\dots, v^{(t)}\}$ for $H'$ and extend it to a basis $\{v^{(1)},\dots, v^{(s)}\}$ for $H$.
For $j=1,\dots, t$,  choose $u^{(j)}\in (V \cap W^{k'})_{i-1}$ such that $(u^{(j)})_i=v^{(j)}$,
which is possible since $v^{(j)}\in H'=\pi_i((V \cap W^{k'})_{i-1})$.
For $j=t+1,\dots, s$, choose $u^{(j)}\in \F_q^{k'km}$ with $(u^{(j)})_1 = \cdots = (u^{(j)})_{i-1}=0 $ such that $(u^{(j)})_i=v^{(j)}$.
Then we have
\begin{equation}\label{eq_span1}
(V \cap W^{k'})_{i-1} = (V \cap W^{k'})_{i} + \mathrm{span}\{u^{(1)}, \dots, u^{(t)}\} \subseteq (V \cap W^{k'})_{i} + \mathrm{span}\{u^{(1)}, \dots, u^{(s)}\}.
\end{equation}

As $\{b^{(1)},\dots, b^{(s)}\}$ and $\{v^{(1)},\dots, v^{(s)}\}$ are both bases of $H$, there exists a unique invertible $s\times s$ matrix $A=(a_{j,\ell})_{j,\ell\in [s]}$
over $\F_q$ such that $b^{(j)}=\sum_{\ell=1}^s a_{j,\ell} v^{(\ell)}$ for $j\in [s]$.
For $j\in [s]$, let $b^{(i,j)}=\sum_{\ell=1}^s a_{j,\ell} u^{(\ell)}$, and note that $(b^{(i,j)})_1 = \cdots =(b^{(i,j)})_{i-1} = 0$
 and 
\[
(b^{(i,j)})_i=\sum_{\ell=1}^s a_{j,\ell} (u^{(\ell)})_i=\sum_{\ell=1}^s a_{j,\ell} v^{(\ell)}=b^{(j)}.
\]
So the first condition of the claim is satisfied.
As $A$ is invertible, we have $\mathrm{span}\{b^{(i,1)}, \dots, b^{(i,s)}\}=\mathrm{span}\{u^{(1)}, \dots, u^{(s)}\}$. Combining this with \eqref{eq_span1} proves the second condition.
\end{proof}

Now recall that our goal is to exhibit a $(k',km,s)$-periodic matrix $\tilde M$ so that $V \cap W^{k'} \subseteq \img(\tilde M)$. We construct $\tilde M$ as follows. 
For $i=1,\ldots,k'$, let $M_i$ be a $(k'km) \times s$ matrix whose columns are $b^{(i,1)}, \ldots, b^{(i,s)}$. 
Let
\[
\tilde M=\begin{pmatrix} M_1  & M_2  & \cdots  & M_{k'}\end{pmatrix}.
\] 
By the first condition of Claim~\ref{claim_spanning}, we have that $\tilde M$ is a $(k',km,s)$-periodic matrix. By the second condition of  Claim~\ref{claim_spanning}, 
we further have that $(V \cap W^{k'})_{i-1}\subseteq (V \cap W^{k'})_i + \img(M_i)$ for all $i=1,\ldots k'$, and so $V \cap W^{k'} = (V \cap W^{k'})_0\subseteq \img(\tilde M)$, as claimed. This completes the proof of the lemma.
\end{proof}


Next, we prove Theorem \ref{thm:btt_explicit} based on the ingredients above.

\begin{proof}[Proof of Theorem \ref{thm:btt_explicit}]

Our goal is to construct a $(k,m,r,s)$-$\btt$ evasive subspace $W$ of co-dimension at most $\epsilon k m$ for $s= \poly(r/\epsilon)$. 
We shall construct $W$ by applying two composition steps. In the first step, we shall compose an inner $\btt$ evasive subspace $W_1$, given by Lemma \ref{lem:btt_non_explicit}, which can be found via brute-force search, with
an outer explicit periodic evasive subspace $W_2$, given by  Corollary~\ref{cor:periodic-explicit},
 to obtain a $\btt$ evasive subspace $W_1 \circ W_2$. In the second step, we shall compose the resulting $\btt$ evasive subspace $W_1 \circ W_2$ with yet another outer explicit periodic evasive subspace $W_3$, given by  Corollary~\ref{cor:periodic-explicit}, to obtain our final $\btt$ evasive subspace $W: = (W_1 \circ W_2) \circ W_3$.
One technical issue is that the desired number of blocks $k$ may not be a multiple of the number of blocks of the inner subspace.
This is solved by first constructing a $\btt$ evasive subspace $W'$ in a slightly larger ambient space $\F_q^{k'm}\supseteq \F_q^{km}$ and then letting $W= W'\cap \F_q^{km}$ (where $\F_q^{km}$ is identified with a subspace of $\F_q^{k'm}$ via the map $(x_1,\dots,x_{km})\mapsto (0, \dots, 0, x_1,\dots,x_{km})$).

In the following, assume $c>1$ is a large enough constant.
Let $\epsilon' = \epsilon / 6$. By assumption, we have $r<\epsilon m/24 = \epsilon' m/4$ and $q\geq m^c$.

\paragraph{BTT evasive subspace $W_1$:} Let $W_1$ be a $( k_1,m,r, s_1 )$-$\btt$ evasive subspace of co-dimension at most $\epsilon' k_1 m$ for $k_1 = c^2 m^3\cdot \lceil \frac{\log \log k} {\log q}\rceil$  
and $s_1=\frac{2r^2} {\epsilon'}\geq \frac{2r} {\epsilon'}$.
Note that such a subspace exists by Lemma \ref{lem:btt_non_explicit}. 
We further claim that a basis for $W_1$ can be found in time $\poly(q,k,m)$. 
To see this, first note that if $m\log q\leq (c^2\log\log k+c^2)^{c+1}$, then by Remark \ref{rem:btt_non_explicit}, a basis for $W_1$ can be found in time 
\begin{equation}\label{eq_k1time}
q^{O( (k_1 \cdot m)^2)} = 2^{O(k_1^2 m^2 \log q)} \leq \exp ( \poly(\log \log k)) \leq \poly(k).
\end{equation}

On the other hand, if $m\log q > (c^2\log\log k+c^2)^{c+1}$, then either $m>c^2\log\log k+c^2$ or $q>\log q>(c^2\log\log k+c^2)^c$. 
In either case, we have $q\geq (c^2\log\log k+c^2)^c$ since $q\geq m^c$.
This implies
\[
q\geq q^{1/4} m^{(3/4)c}>\left(c^2 m^3\cdot\left \lceil \frac{\log \log k} {\log q}\right\rceil\right)^{c/4} = k_1^{c/4}.
\] 
Therefore,
\begin{equation}\label{eq_k1}
q^m\geq  \max\{k_1^{cm/4}, m^m\}\geq \max\left\{k_1^{c \cdot r /\epsilon'}, \left(\frac{2r}{\epsilon'}\right)^{2r/\epsilon'}\right\}
\end{equation}
where we use the facts  $r<\epsilon' m/4$ and $q\geq m^c\geq m$.
 Consequently,  by Corollary~\ref{cor:periodic-explicit},  there exists a $( k_1,m,r, \frac{2r^2} {\epsilon'})$-periodic evasive subspace $W_1$ of  co-dimension at most $ \epsilon' k_1 m$, which is in particular a $\btt$ evasive subspace with the same parameters. Moreover, a basis for $W_1$ can be found in
 time $\poly(q,k_1,m)= \poly(q,k,m)$.

\paragraph{Periodic evasive subspace $W_2$:} Let $W_2$ be a $( k_2, k_1 m, s_1, s_2)$-periodic evasive subspace of co-dimension at most $\epsilon' k_2 k_1 m$ for $k_2 = \lceil \log k\rceil$ and $s_2 = \frac{ 2s_1^2} {\epsilon'} = \frac {8 r^4} {(\epsilon')^3}$. Note that such a subspace exists by Corollary~\ref{cor:periodic-explicit} as
\[
s_1=\frac{2r^2} {\epsilon'}<\frac{c^2 \epsilon' m^2} {4}\leq \frac{\epsilon' k_1 m} {4},
\]
\begin{equation}\label{eq_k2}
q^{k_1 m}  \geq   \lceil \log k\rceil^{c^2 m^2} >  \lceil \log k\rceil^{2c (r/\epsilon')^2}=k_2^{c \cdot s_1 /\epsilon'},
\end{equation}
and
\[
q^{k_1 m}\geq  \left(\frac{m^2}{4}\right)^{m^2/4} > \left(\frac{4r^2}{(\epsilon')^2}\right)^{4r^2/(\epsilon')^2} = \left(\frac{2s_1}{\epsilon'}\right)^{2s_1/\epsilon'}
\]
where the inequalities hold  by  the choice of $k_1 = c^2 m^3\cdot \lceil \frac{\log \log k} {\log q} \rceil$  and the assumptions  $r < \frac{\epsilon' m} {4}$ and $q\geq m^c$. Moreover, a basis for $W_2$ can be found in
 time $\poly(q,k_2,k_1m)= \poly(q,k,m)$.

\paragraph{BTT evasive subspace $W_1 \circ W_2$:} By Lemma \ref{lem:composition}, we have that $W_1 \circ W_2 = W_1^{k_2}\cap W_2$ is 
a $(k_1k_2, m,r,s_2)$-$\btt$ evasive subspace for $s_2 =  \frac {8 r^4} {(\epsilon')^3}$. Note furthermore that $W_1 \circ W_2$ has  co-dimension at most $2\epsilon' k_1 k_2 m $, and a basis for $W_1 \circ W_2$ can be found in time $\poly(q,k,m)$.

\paragraph{Periodic evasive subspace $W_3$:}
Let $W_2$ be a $( k_3, k_1 k_2 m, s_2, s_3)$-periodic evasive subspace of co-dimension at most $ \epsilon' k_1 k_2 k_3 m$ for  $k_3 = \lceil \frac {k} {k_1 k_2}\rceil $  and $s_3 = \frac{ 2s_2^2}{\epsilon'} = \frac { 128 r^8} {(\epsilon')^7}$. Note that such a subspace exists by Corollary~\ref{cor:periodic-explicit} as
\[
s_2= \frac {8 r^4} {(\epsilon')^3}<\frac{c^2 \epsilon' m^4} {4}\leq \frac{\epsilon' k_1 k_2 m} {4},
\]
\[
q^{k_1 k_2 m}   \geq k^{c^2 m^4 }> k^{8 c (r/\epsilon')^4} \geq  k_3^{c \cdot s_2 /\epsilon'},
\]
 and
 \[
 q^{k_1 k_2 m}   \geq \left(\frac{m^4}{16}\right)^{m^4/16} > \left(\frac{16r^4}{(\epsilon')^4}\right)^{16r^4/(\epsilon')^4} = \left(\frac{2s_2}{\epsilon'}\right)^{2s_2/\epsilon'}
 \]
where the inequalities hold  once more by the choice of $k_1 = c^2 m^3\cdot \lceil \frac{\log \log k} {\log q} \rceil$ and $k_2 = \lceil\log k\rceil$ together with the assumptions $r<\epsilon' m/4$ and $q\geq m^c$. Moreover, a basis for $W_3$ can be found in
 time $\poly(q,k_3,k_1k_2m)= \poly(q,k,m)$.
 
\paragraph{BTT evasive subspace $W'=(W_1 \circ W_2) \circ W_3$:}
By Lemma \ref{lem:composition}, we have that $W':=(W_1 \circ W_2) \circ W_3$ is 
a $(k_1 k_2 k_3, m,r,s_3)$-$\btt$ evasive subspace for $s_3 =  \frac {128  r^8} {\epsilon^7} = \poly(r/\epsilon)$. Note furthermore that $W'$ has co-dimension at most $3\epsilon' k_1 k_2 k_3 m$, and a basis for $W'$ can be found in time $\poly(q,k,m)$.

\paragraph{BTT evasive subspace $W$:}

If $k$ is a multiple of $k_1 k_2$, then $k_1 k_2 k_3=k$ and we may choose $W=W'$ as the desired $\btt$ evasive subspace.
Next, we explain how to extend it to arbitrary $k$.

\begin{enumerate}
\item First assume $k\geq k_1 k_2$ so that $k':=k_1 k_2 k_3=k_1 k_2 \lceil \frac {k} {k_1 k_2}\rceil$ satisfies $k'  \leq 2k$.
Then $W'\subseteq\F_q^{k'm}$ is a  $(k', m, r, s_3)$-$\btt$ evasive subspace of co-dimension at most $3\epsilon' k' m\leq 6\epsilon' km=\epsilon km$.
Identify $\F_q^{km}$ with a subspace of $\F_q^{k'm}$ via the map $(x_1,\dots,x_{km})\mapsto (0, \dots, 0, x_1,\dots,x_{km})$.
We let $W:=W'\cap \F_q^{km}$, whose co-dimension in $\F_q^{km}$ is at most $\epsilon km$ since the co-dimension of $W'$ in $\F_q^{k'm}$ is at most $\epsilon km$.

Consider any $(k, m, r)$-$\btt$ subspace $V\subseteq \F_q^{km}$. Note that there exists a $(k', m, r)$-$\btt$ subspace $V'\subseteq \F_q^{k'm}$ such that $V=V'\cap \F_q^{km}$.  
As $W'$ is a $(k', m, r, s_3)$-$\btt$ evasive subspace, we have $\dim(V'\cap W')\leq s_3$, which implies $\dim(V\cap W)\leq s_3$.
So $W\subseteq \F_q^{km}$ is a $(k, m, r, s)$-$\btt$ evasive subspace of co-dimension at most $\epsilon km$ for $s=s_3=\poly(r/\epsilon)$.

\item Now assume $k_1\leq k<k_1 k_2$. Let $k'_2:=\lceil \frac{k}{k_1}\rceil\leq k_2$ so that $k_2' k_1\leq 2k$.
By replacing $k_2$ with $k_2'$ in the construction of $W_2$, we may construct a  $(k_2', k_1 m, s_1, s_2)$-periodic evasive subspace $W_2'$ of co-dimension at most $\epsilon' k_2' k_1 m$. 
This is because replacing $k_2$ by $k_2'\leq k_2$ preserves \eqref{eq_k2}.
Composing $W_2'$ with $W_1$ gives a $(k_1 k_2', m,r,s_2)$-$\btt$ evasive subspace $W_1\circ W_2'\subseteq\F_q^{k_1k_2' m}$  of co-dimension at most $2\epsilon' k_2' k_1 m\leq 4\epsilon' km\leq \epsilon k m$. Similarly to the previous case, restricting to the subspace $\F_q^{k m}$ yields the desired $(k, m,r,s)$-$\btt$ evasive subspace $W$ of co-dimension at most $\epsilon k m$ for $s=s_2=\poly(r/\epsilon)$.

\item Finally, assume $k<k_1$.
By replacing $k_1$ with $k$ in the construction of $W_1$, we may construct the desired $(k, m, r, s)$-$\btt$ evasive subspace $W$ of co-dimension at most $\epsilon' k m\leq \epsilon km$ for $s=s_1=\poly(r/\epsilon)$. 
This is because  replacing $k_1$ by $k<k_1$ preserves \eqref{eq_k1time} and \eqref{eq_k1}.
\end{enumerate}
\end{proof}




  \section{Reed--Solomon codes with subfield evaluation points}\label{sec:RS}
  
We first show, as a warm-up, that $\rs$ codes with evaluation points over a subfield are list decodable up to capacity with the output list contained in an affine shift of a $\btt$ subspace.  Later, in Section \ref{sec:AG}, we shall show how the analysis can be extended to $\ag$ codes over constant-size alphabets, thus proving Theorem \ref{thm:intro_ag_btt}. We start with the formal definition of $\rs$ codes with subfield evaluation points.

\begin{defi}[$\rs$ codes with subfield evaluation points]\label{defn:RS-subfield}
  Let $n, k,m\in\N^+$ be such that $k\leq n$, and let $q\geq n$ be a prime power. 
The \emph{Reed--Solomon code $\mathsf{RS}_{q,m}(n,k)$ over $\F_{q^m}$ with evaluation points in $\F_q$} maps
a polynomial $f \in \F_{q^m}[X]$ of degree at most $k-1$ (viewed as a length $k$ vector of coefficients over $\F_{q^m})$ to the codeword
$C_f:=(f(\alpha_1), f(\alpha_2), \dots, f(\alpha_n)) \in (\F_{q^m})^n$,
  where $\alpha_1, \alpha_2, \dots, \alpha_n$ are $n$ distinct elements in $\F_q$.
\end{defi}

Note that $\mathsf{RS}_{q,m}(n,k)$ is a linear code over the alphabet $\F_{q^m}$  with codeword length $n$,  rate $k/n$, and minimum distance $n-k+1$. In this section, we show that this code is also list decodable up to its minimum distance with the output list being contained in an affine shift of a $\btt$ subspace.

\begin{thm}\label{thm_rs}
  Let $\epsilon >0$, let $n, k,m\in\N^+$ be such that $k\leq n$ and $m \geq 1/\epsilon^2$, and let $q\geq n$ be a prime power. 
  Then $\mathsf{RS}_{q,m}(n,k)$ can be list decoded from up to $(1-\epsilon)(n-k)$ errors, pinning down the candidate messages (viewed as length $km$ vectors of coefficients over $\F_q$)
  to an affine shift of a $(k,m,\epsilon m)$-$\btt$ subspace $V$ over $\F_q$. Moreover, a basis for $V$ and the affine shift can be found in time $\poly(\log q,m,n)$.
\end{thm}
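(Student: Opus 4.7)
The plan is to adapt the linear-algebraic list-decoding paradigm of Guruswami--Wang (via Guruswami--Xing in the subfield-evaluation setting) and then refine the analysis of the resulting system to extract the explicit $\btt$ structure. First I would fix a parameter $s = \lceil 1/\epsilon \rceil$ together with a degree bound $D$ chosen so that, by a dimension count, there exists a nonzero polynomial $Q(X, Y_1, \ldots, Y_s) = A_0(X) + \sum_{j=1}^s A_j(X) Y_j$ with $\deg A_0 \leq D$ and $\deg A_j \leq D - k + 1$ for $j \geq 1$, vanishing at every $(\alpha_i, y_i, y_i^q, \ldots, y_i^{q^{s-1}})$; such a $Q$ is found in time $\poly(\log q, m, n)$ by $\F_{q^m}$-linear algebra, after a minor preprocessing step (e.g., dividing out common powers of $X$) to ensure that $\sum_{j\geq 1} a_{j,0} Y^{q^{j-1}}$ is nonzero. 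Because $\alpha_i \in \F_q$ is fixed by the Frobenius $\sigma : a \mapsto a^q$, we have $f(\alpha_i)^{q^{j-1}} = f^{\sigma^{j-1}}(\alpha_i)$, so for any candidate $f$ agreeing with the received word in more than $D$ positions, the polynomial $A_0(X) + \sum_j A_j(X) f^{\sigma^{j-1}}(X)$ has degree at most $D$ yet more than $D$ roots, and therefore vanishes identically.

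\textbf{The block-Toeplitz system.} Writing $A_j(X) = \sum_d a_{j,d} X^d$ and defining the linearized polynomials $L_d(Y) = \sum_{j=1}^s a_{j,d} Y^{q^{j-1}}$, each $L_d$ is an $\F_q$-linear endomorphism of $\F_{q^m} \cong \F_q^m$ represented by an $m \times m$ matrix $N_d$. The $X^\ell$-coefficient of the functional identity becomes the $\F_q$-linear equation $\sum_{i=0}^{\ell} N_{\ell - i}\, f_i = -a_{0,\ell}$; assembling these for $\ell = 0, \ldots, k-1$ yields a linear system $N f = c$ on $f \in \F_q^{km}$ whose matrix $N$ is block-lower-triangular block-Toeplitz with diagonal block $N_0$. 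Since $L_0$ is a nonzero linearized polynomial of $q$-degree at most $s-1$, its kernel as an $\F_q$-endomorphism of $\F_{q^m}$ has dimension at most $s-1$, and by the hypothesis $m \geq 1/\epsilon^2$ we have $s - 1 \leq \epsilon m$. Solving $Nf = c$ block by block either produces a particular solution $u \in \F_q^{km}$ or certifies that the candidate list is empty; every candidate then takes the form $u + g$ with $g \in \ker N$.

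\textbf{Constructing the $\btt$ image; main obstacle.} The heart of the proof is to exhibit a $(k, m, \epsilon m)$-$\btt$ matrix $M$ over $\F_q$ with $\ker N \subseteq \img M$. I would choose $M_1$ to be any $m \times \epsilon m$ matrix of full column rank with $\img M_1 \supseteq \ker N_0$, fix any linear map $N_0^+ : \F_q^m \to \F_q^m$ satisfying $N_0 N_0^+ y = y$ for $y \in \img N_0$, and set $M_{d+1} = -N_0^+ \sum_{p=1}^d N_{d-p+1}\, M_p$ recursively for $d \geq 1$; by construction, $M$ is block-lower-triangular block-Toeplitz with $M_1$ of full column rank $\epsilon m$, and thus is a $(k, m, \epsilon m)$-$\btt$ matrix. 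The core technical step is to show, by induction on $\ell$, that any $g \in \ker N$ admits parameters $v_1, \ldots, v_k \in \F_q^{\epsilon m}$ with $g_\ell = \sum_{j=1}^\ell M_{\ell-j+1} v_j$: writing $r_\ell := g_\ell - \sum_{i=1}^{\ell-1} M_{\ell-i+1} v_i$, a direct computation of $N_0 r_\ell$ using the recursion defining the $M_{d+1}$'s together with the equation $N_0 g_\ell = -\sum_{i < \ell} N_{\ell-i} g_i$ should telescope to $0$, placing $r_\ell \in \ker N_0 \subseteq \img M_1$ and so yielding $v_\ell$. The main obstacle I anticipate is that $N_0 N_0^+$ is the identity only on $\img N_0$, so the telescoping goes through only because each intermediate vector to which $N_0^+$ is applied lies in $\img N_0$; this compatibility is forced by $g \in \ker N$ itself (the equations imply that relevant combinations like $\sum_{i<\ell} N_{\ell-i} g_i$ are in $\img N_0$) and must be tracked carefully in the induction. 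Once containment is established, setting $V = \img M$ with affine shift $u$ proves the theorem, and the efficiency bound $\poly(\log q, m, n)$ follows since every operation is standard $\F_q$-linear algebra on vectors of length $O(km)$.
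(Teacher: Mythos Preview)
Your proposal is correct and follows the same interpolation/functional-equation framework as the paper: find $Q$ by linear algebra, use the subfield property $f(\alpha_i)^{q^{j-1}}=f^{\sigma^{j-1}}(\alpha_i)$ to get a vanishing identity, and read off a block-lower-triangular block-Toeplitz linear system $Nf=c$ over $\F_q$ whose diagonal block $N_0$ represents a nonzero $q$-linearized polynomial of $q$-degree $\le s-1$. The parameter bookkeeping (choice of $s$ and $D$, and the bound $s-1\le \epsilon m$ from $m\ge 1/\epsilon^2$) matches the paper's Lemma~\ref{lem_rs} up to inessential $\pm 1$'s.

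Where you genuinely diverge is in the step ``the solution space is an affine shift of a $\btt$ subspace.'' The paper first trims $N$ to a full-row-rank $(k,r,m)$-$\btt$ matrix and then proves the clean duality statement (Lemma~\ref{lem_btt_ker_equiv}) that the kernel of any such matrix is a $(k,m,m-r)$-$\btt$ subspace; the argument is structural, using that the down-shift $\sigma$ sends $\ker N$ into itself and hence one basis of $V_0/V_1$ generates the whole $\btt$ image under repeated shifts. Your route is instead an explicit recursive construction $M_{d+1}=-N_0^+\sum_{p=1}^d N_{d-p+1}M_p$ together with an inductive verification that each $g\in\ker N$ lies in $\img M$. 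This works: combining the recursion with the inductive hypothesis, the sum $\sum_{j<\ell}N_0 M_{\ell-j+1}v_j$ collapses to $-N_0 N_0^+\big(\sum_{i<\ell}N_{\ell-i}g_i\big)$, and since $\sum_{i<\ell}N_{\ell-i}g_i=-N_0 g_\ell\in\img N_0$, the projection $N_0 N_0^+$ fixes it and $N_0 r_\ell=0$. (One small wording issue: it is not that ``each intermediate vector to which $N_0^+$ is applied'' lies in $\img N_0$---the individual columns of $\sum_p N_{d-p+1}M_p$ generally do not---but rather that the particular combination arising from $g$ does; your parenthetical already points at the right reason.) The paper's approach buys a reusable, basis-free lemma about $\btt$ matrices and gives exact equality $\ker N=\img M$; yours is more algorithmically explicit and yields the containment $\ker N\subseteq\img M$ needed for the theorem with a concrete formula for $M$.
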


The above theorem is a consequence of the following lemma.

\begin{lem}\label{lem_rs}
  Let $n, k,m\in\N^+$ be such that $k\leq n$, and let $q\geq n$ be a prime power. Let $s\in [m]$ and $t, d\in\N^+$ be parameters, 
 satisfying that
 \begin{equation}\label{eq_rs_cond1}
(s+1)(d+1)+k-1>n
\end{equation}
and
\begin{equation}\label{eq_rs_cond2}
 t>d+k-1.
\end{equation}
  Then $\mathsf{RS}_{q,m}(n,k)$ can be list decoded from agreement at least $t$, pinning down the candidate messages (viewed as length $km$ vectors of coefficients over $\F_q$) to  an affine shift of a $(k,m,s-1)$-$\btt$ subspace $V$ over $\F_q$. Moreover, a basis for $V$ 
   and the affine shift
  can be found in time $\poly(\log q,m,n)$. 
\end{lem}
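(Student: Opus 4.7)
The plan is to combine the classical Guruswami--Wang interpolation with the Guruswami--Xing observation that, when the evaluation points lie in a subfield, the recovered functional equation is $\F_q$-linear (though not $\F_{q^m}$-linear). By \eqref{eq_rs_cond1} a dimension count produces a non-zero $Q(X, Y_1, \ldots, Y_s) = A_0(X) + \sum_{i=1}^s A_i(X) Y_i$ with $\deg A_0 \leq d+k-1$ and $\deg A_i \leq d$ for $i \geq 1$, satisfying $Q(\alpha_j, y_j, y_j^q, \ldots, y_j^{q^{s-1}}) = 0$ for every $j$, and found by Gaussian elimination over $\F_{q^m}$ in $\poly(\log q, m, n)$ time. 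For any candidate $f$ of degree $<k$ with agreement $\geq t$, the polynomial $R(X) := A_0(X) + \sum_i A_i(X) f^{\sigma^{i-1}}(X)$, with $\sigma : x \mapsto x^q$ the Frobenius, vanishes at all $\geq t$ agreement points (using $\alpha_j \in \F_q$, so $f(\alpha_j)^{q^{i-1}} = f^{\sigma^{i-1}}(\alpha_j)$); as $\deg R \leq d+k-1$, condition \eqref{eq_rs_cond2} forces $R \equiv 0$.

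Fix an $\F_q$-basis of $\F_{q^m}$ and let $\Sigma \in \F_q^{m \times m}$ represent $\sigma$ and $M_a \in \F_q^{m \times m}$ represent multiplication by $a \in \F_{q^m}$. Reading off the $X^j$-coefficient of $R$ yields the $\F_q$-linear equation $\vec{a}_{0,j} + \sum_{u=0}^{k-1} B_{j-u} f_u = 0$, where $B_\ell := \sum_{i=1}^s M_{a_{i,\ell}} \Sigma^{i-1}$ for $0 \leq \ell \leq d$ and $B_\ell := 0$ otherwise. Collecting these over $j = 0, 1, \ldots, d+k-1$ gives a $(d+k)m \times km$ matrix $\mathcal{A}$ that is block-lower-triangular (since $B_\ell = 0$ for $\ell < 0$) and block-Toeplitz. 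Writing $\mathcal{A}_{\mathrm{top}}$ for its square $km \times km$ upper portion, it suffices to show $\ker \mathcal{A}_{\mathrm{top}}$ is contained in a $(k, m, s-1)$-$\btt$ subspace over $\F_q$; the candidate messages then lie in an affine shift of it, via any particular solution of $\mathcal{A}\vec{f} = -\vec{a}_0$.

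The diagonal block $B_0$ realizes the $\F_q$-linearized polynomial $L(Y) := \sum_{i=1}^s a_{i,0} Y^{q^{i-1}}$, of $q$-degree at most $s-1$, so $\dim_{\F_q} \ker B_0 \leq s-1$ whenever $L \not\equiv 0$; this is arranged WLOG by dividing $Q$ through by a suitable power of $X$. Pick an injective $U \in \F_q^{m \times (s-1)}$ with $\img(U) \supseteq \ker(B_0)$ and a generalized inverse $N \in \F_q^{m \times m}$ of $B_0$ (i.e., $B_0 N B_0 = B_0$), both of which exist by routine linear algebra. Define $T_0 := U$ and $T_\ell := -N \sum_{\ell'=1}^\ell B_{\ell'} T_{\ell - \ell'}$ for $\ell \geq 1$. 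The $km \times k(s-1)$ block matrix $M$ with $T_{i-j}$ at block-position $(i,j)$ for $j \leq i$ and zero otherwise is, by construction, a $(k, m, s-1)$-$\btt$ matrix. One verifies by induction on $i$ that every $(f_0, \ldots, f_{k-1}) \in \ker \mathcal{A}_{\mathrm{top}}$ equals $M\vec{g}$ for some $\vec{g} \in \F_q^{k(s-1)}$: the key point is that for such an $\vec{f}$, the ``defect'' $-\sum_{\ell \geq 1} B_\ell f_{i-\ell}$ automatically lies in $\img(B_0)$ (witnessed by $f_i$), so $-N\sum_\ell B_\ell f_{i-\ell}$ is a particular solution which, by the induction hypothesis and the definition of the $T_\ell$'s, equals $\sum_{j<i} T_{i-j} g_j$; the residual $f_i - \sum_{j<i} T_{i-j} g_j$ then lies in $\ker(B_0) \subseteq \img(U)$ and produces $g_i$.

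Every ingredient---interpolation, assembling the $B_\ell$'s, choosing $U$ and $N$, unrolling the recurrence for $T_0, \ldots, T_{k-1}$, and finding a particular affine solution $\vec{f}^\ast$---reduces to Gaussian elimination and matrix arithmetic in dimensions $O(nm)$, giving total running time $\poly(\log q, m, n)$. I expect the main obstacle to be the parametrization in the third paragraph: one must argue that the single recursion defining $M$ genuinely captures \emph{every} element of $\ker \mathcal{A}_{\mathrm{top}}$, which depends both on the generalized-inverse identity $B_0 N B_0 = B_0$ and on the kernel-specific fact that the defect vectors always land in $\img(B_0)$---a property that fails for arbitrary inputs to the recursion and is therefore what makes the BTT structure, rather than just a block-Toeplitz block-lower-triangular structure, available here.
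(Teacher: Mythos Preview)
Your proposal is correct, and your interpolation step, the passage to a functional equation via \eqref{eq_rs_cond2}, the normalization by a power of $X$, and the bound $\dim_{\F_q}\ker B_0\le s-1$ coming from the $q$-degree of the linearized polynomial all match the paper exactly.

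Where you diverge is in establishing the $\btt$ structure of (a space containing) the solution set. The paper first observes that the constraint matrix is itself a $(k,r,m)$-$\btt$ matrix with $r=\rank B_0\ge m-s+1$, and then proves the stand-alone structural lemma ``the kernel of a $(k,r,m)$-$\btt$ matrix is a $(k,m,m-r)$-$\btt$ subspace'' (Lemma~\ref{lem_btt_ker_equiv}) by exploiting the shift map $\sigma(v_1,\dots,v_k)=(0,v_1,\dots,v_{k-1})$: one shows $\sigma(V_{i-1})\subseteq V_i$ and uses dimension counting to extract a $\btt$ basis. You instead write down the $\btt$ generator explicitly, fixing a generalized inverse $N$ of $B_0$ and solving the Toeplitz recursion $T_\ell=-N\sum_{\ell'=1}^{\ell}B_{\ell'}T_{\ell-\ell'}$. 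Your inductive verification is sound: the crucial step---that $N$ applied to the defect really reproduces $\sum_{j<i}T_{i-j}g_j$---follows by reindexing the double sum and the induction hypothesis, and the residual then lands in $\ker B_0\subseteq\img U$ precisely because the defect lies in $\img B_0$. What your route buys is a fully explicit formula for the $\btt$ generator; what the paper's route buys is a clean reusable statement (kernel of $\btt$ is $\btt$) that is re-invoked verbatim in the $\ag$-code section, and an argument that does not require introducing a generalized inverse. One minor point: your ``WLOG'' that $L\not\equiv 0$ after dividing out $X^u$ also needs the observation (made in the paper) that if $a_{\ell,0}=0$ for all $\ell\in[s]$ then the constant term of the functional equation forces $a_{0,0}=0$, so either some $a_{\ell,0}\neq 0$ with $\ell\ge 1$ or the system is infeasible and the list is empty.
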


Before we prove the above lemma, we show how it implies Theorem \ref{thm_rs}.

 \begin{proof}[Proof of Theorem \ref{thm_rs}]
Let $s= \frac 1\epsilon +1$, let $d=  \frac{n-k+2}{s+1}-1$ so that \eqref{eq_rs_cond1} is satisfied, and 
let $t=d+k$ so that \eqref{eq_rs_cond2} is satisfied. Then with this setting of parameters, $\mathsf{RS}_{q,m}(n,k)$ can be list decoded from agreement  $t$, or equivalently, from up to
$$
n-t = n-d-k = n- k +1 - \frac{n-k+2}{s+1} \geq n- k- \frac{n-k}{s-1} =
(1-\epsilon)(n-k) .
$$
errors.
Moreover, by choice of $m \geq 1/\epsilon^2$, we have that $V$ is a $(k,m,s-1)$-$\btt$ subspace for $s-1 = \frac 1 \epsilon \leq \epsilon m$.
\end{proof}

The rest of this section is devoted to the proof of Lemma \ref{lem_rs}.
To prove this lemma, we follow the linear-algebraic approach of \cite{GX13}. Suppose that $\mathbf{y}=(y_1,y_2,\dots,y_n)\in (\F_{q^m})^n$ is a received word. Our goal is to show that all polynomials $f\in \F_{q^m}[X]_{<k}$ that have large agreement with $\mathbf{y}$ are contained in an affine shift of a $\btt$ subspace.
To this end, following \cite{GX13}, we first show in Section \ref{subsec_rs_Q}
the existence of a nonzero polynomial $Q$  (depending on the received word $\mathbf{y}$) that gives a functional equation that any polynomial $f$ that has large agreement with $\mathbf{y}$ must satisfy. Then we show in Section 
\ref{subsec_rs_V}
that this functional equation induces a special structure on the solution set, specifically, the solution set is contained in an affine shift of the kernel of a $\btt$ matrix. Finally, in Section \ref{subsec_btt_ker_equiv} we show that the kernel of a $\btt$ matrix is a $\btt$ subspace, which implies that the solution set is contained in an affine shift of a BTT subspace.
We further show that the polynomial $Q$, a basis for  the $\btt$ subspace $V$, and the affine shift  could be found efficiently. 

\subsection{The polynomial $Q$}\label{subsec_rs_Q}

In what follows, let $\sigma\in\gal(\F_{q^m}/\F_q)$ be the \emph{Frobenius automorphism} $a\mapsto a^q$ of $\F_{q^m}$ over $\F_q$. It extends to an automorphism of $\F_{q^m}[X]$ over $\F_q$ by acting on the coefficients, which we also denote by $\sigma$ by a slight abuse of notation.
For $f\in\F_{q^m}[X]$, denote by $f^\sigma$ the element $\sigma(f)$.

Suppose that $\mathbf{y}=(y_1,y_2,\dots,y_n)\in (\F_{q^m})^n$ is a received word. 
We let $Q$ be a nonzero multivariate polynomial in
$(\F_{q^m}[X])[Y_1, Y_2, \dots, Y_s]$ of the form
\[
Q=A_0 + A_1 Y_1 + A_2 Y_2 + \dots + A_s Y_s,
\]
where $A_0, A_1, \dots, A_s\in \F_{q^m}[X]$, $\deg(A_0)\leq d+k-1$, and $\deg(A_i)\leq d$ for $i=1, 2, \dots, s$.
We also require the polynomials $A_i$ to satisfy the constraint
\begin{equation}\label{eq_rs_constraint}
A_0(\alpha_i) + A_1(\alpha_i) y_i + A_2(\alpha_i) y_i^\sigma + \dots + A_s(\alpha_i) y_i^{\sigma^{s-1}}=0
\end{equation}
for all $i = 1,\ldots,n$, where $\alpha_1,\dots,\alpha_n$ are the evaluation points.

We first claim that such a nonzero polynomial $Q$ exists and can be computed efficiently. To see this, think of the coefficients of the polynomials $A_i$ as unknowns. This gives $d+k+s (d+1)=(s+1)(d+1)+k-1$ unknowns in total, while \eqref{eq_rs_constraint} gives $n$ homogeneous linear constraints over $\F_{q^m}$.
 By \eqref{eq_rs_cond1}, the number of unknowns is greater than the number of linear constraints which guarantees the existence of a nonzero solution $Q$. Moreover, we can  find $Q$ in time $\poly(\log q ,m,n)$ 
 by solving the system of linear equations represented by \eqref{eq_rs_constraint}.

Next, we show that $Q$ gives a functional equation that any $f$ that has sufficiently large agreement with the received word $\mathbf{y}$ needs to satisfy.

\begin{claim}\label{clm_rs_funceqn}
Let $f\in \F_{q^m}[X]_{<k}$. Suppose that $\mathbf{y}$ agrees with the codeword $C_f=(f(\alpha_1), f(\alpha_2), \dots, f(\alpha_n))$ in at least $t$ coordinates.
Then $f$ satisfies the functional equation
\begin{equation}\label{eq_rs_funceqn}
Q(f, f^{\sigma},\dots, f^{\sigma^{s-1}})=A_0+A_1 f+ A_2 f^{\sigma}+\dots+A_s f^{\sigma^{s-1}}=0.
\end{equation}
\end{claim}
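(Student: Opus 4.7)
The plan is to define the single univariate polynomial
\[
R(X) := A_0(X) + A_1(X)\,f(X) + A_2(X)\,f^{\sigma}(X) + \cdots + A_s(X)\,f^{\sigma^{s-1}}(X)\in\F_{q^m}[X],
\]
and show that it is identically zero by a standard counting-roots-versus-degree argument. The claim then follows since \eqref{eq_rs_funceqn} is exactly the statement $R=0$.

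First, I would bound $\deg R$. Since $\sigma$ acts only on coefficients of $\F_{q^m}[X]$, each $f^{\sigma^j}$ has the same degree as $f$, namely at most $k-1$. Combined with the bounds $\deg A_0\leq d+k-1$ and $\deg A_i\leq d$ for $i\geq 1$ built into the definition of $Q$, we get $\deg R \leq d+k-1$.

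Next, I would show that $R(\alpha_i)=0$ at every agreement coordinate, i.e., every $i$ with $y_i=f(\alpha_i)$. The key point is that the evaluation points lie in the fixed field $\F_q$ of $\sigma$, so $\sigma(\alpha_i)=\alpha_i$ and therefore $\sigma$ commutes with evaluation at $\alpha_i$:
\[
f^{\sigma^j}(\alpha_i) \;=\; \bigl(f(\alpha_i)\bigr)^{\sigma^j} \;=\; y_i^{\sigma^j} \qquad (0\leq j\leq s-1).
\]
Substituting this into $R(\alpha_i)$ and invoking the defining constraint \eqref{eq_rs_constraint} gives $R(\alpha_i)=0$, as required. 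This step is the only place where the subfield hypothesis $\alpha_i\in\F_q$ is genuinely used, and it is the conceptual heart of the argument; everything else is bookkeeping.

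Finally, by hypothesis there are at least $t$ agreement coordinates, so $R$ has at least $t$ distinct roots, while by \eqref{eq_rs_cond2} we have $t > d+k-1\geq \deg R$. Hence $R$ is the zero polynomial, which is precisely \eqref{eq_rs_funceqn}. I do not anticipate any genuine obstacle — the only care needed is the Frobenius-evaluation commutation in the middle step.
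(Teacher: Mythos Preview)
Your proposal is correct and follows essentially the same approach as the paper: define the univariate polynomial $Q^*=A_0+A_1 f+\cdots+A_s f^{\sigma^{s-1}}$, bound its degree by $d+k-1$, use the Frobenius-evaluation commutation $(f(\alpha_i))^{\sigma}=f^{\sigma}(\alpha_i)$ (valid since $\alpha_i\in\F_q$) together with \eqref{eq_rs_constraint} to get a root at each agreement coordinate, and conclude via \eqref{eq_rs_cond2}.
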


\begin{proof}
Define
\[
Q^*=A_0+A_1 f+ A_2 f^{\sigma}+\dots+A_s f^{\sigma^{s-1}}\in\F_{q^m}[X].
\]
We want to prove that $Q^*=0$.
As $\deg(f)\leq k-1$, $\deg(A_0)\leq d+k-1$, and $\deg(A_i)\leq d$ for $i=1, 2, \dots, s$, we know that $\deg(Q^*)\leq d+k-1$.

Suppose that $\mathbf{y}$ agrees with $C_f$ in the $i$-th symbol for some $i\in [n]$,   i.e., $y_i=f(\alpha_i)$.
By \eqref{eq_rs_constraint}, we have
\begin{align*}
0&=A_0(\alpha_i) + A_1(\alpha_i) y_i + A_2(\alpha_i) y_i^\sigma + \dots + A_s(\alpha_i) y_i^{\sigma^{s-1}}\\
&=A_0(\alpha_i) + A_1(\alpha_i) f(\alpha_i) + A_2(\alpha_i) (f(\alpha_i))^\sigma + \dots + A_s(\alpha_i) (f(\alpha_i))^{\sigma^{s-1}}\\
&=A_0(\alpha_i) + A_1(\alpha_i) f(\alpha_i) + A_2(\alpha_i) f^\sigma(\alpha_i) + \dots + A_s(\alpha_i) f^{\sigma^{s-1}}(\alpha_i)\\
&=(A_0+A_1 f+ A_2 f^{\sigma}+\dots+A_s f^{\sigma^{s-1}})(\alpha_i)\\
&=Q^*(\alpha_i).
\end{align*}
The third equality uses the fact that $(f(\alpha_i))^\sigma=f^\sigma(\alpha_i^\sigma)=f^\sigma(\alpha_i)$, which holds since $\alpha_i\in\F_q$ is fixed by $\sigma$.

As  $\mathbf{y}$ and $C_f$ agree in at least $t$ symbols, the above argument shows that $Q^*$ has at least $t$ zeros.
On the other hand, the degree of $Q^*$ is at most $d+k-1$, which is less than $t$ by \eqref{eq_rs_cond2}. This implies $Q^*=0$.
\end{proof}

\subsection{The BTT subspace $V$}\label{subsec_rs_V}
Next, we show that the functional equation \eqref{eq_rs_funceqn}, given by Claim \ref{clm_rs_funceqn} above, implies that the list of candidate messages is contained in an affine shift of the kernel of a $\btt$ subspace. We start by expanding the functional equation \eqref{eq_rs_funceqn} in terms of the coefficients of the polynomial $f$ and $A_0,A_1,\ldots,A_s$.

As $f\in \F_{q^m}[X]_{<k}$, we may write
\[
f=\sum_{j=0}^{k-1} f_j X^j
\] 
where the coefficients $f_i$ are in $\F_{q^m}$.
Also write
\[
A_\ell=\sum_{i=0}^{d+k-1} a_{\ell,i} X^i, \quad \ell=0,1,\dots,s,
\] 
where the coefficients $a_{\ell,i}$ are in $\F_{q^m}$ and $a_{\ell,i}=0$ for $\ell\in [s]$ and $i>d$.
Choose the largest integer $u\geq 0$ such that $X^u$ divides $A_\ell$ for $\ell=0,1,\dots, s$.
By replacing $Q$ with $Q/X^u$, we may assume that $u=0$. So $a_{\ell,0}\neq 0$ for some $\ell\in\{0,1,\dots,s\}$.
By \eqref{eq_rs_funceqn}, we actually have  $a_{\ell,0}\neq 0$ for some $\ell\in [s]$.

With the notations above, \eqref{eq_rs_funceqn} becomes
\begin{align*}
0&=\sum_{i=0}^{d+k-1} a_{0,i} X^i +\sum_{\ell=1}^s \left( \sum_{i=0}^{d} a_{\ell,i} X^i \right) \left(  \sum_{j=0}^{k-1} f_j^{\sigma^{\ell-1}} X^j\right) \\
&=  \sum_{i=0}^{d+k-1}\left(a_{0,i}+\sum_{\ell=1}^s \sum_{j=0}^{i} a_{\ell,i-j} f_{j}^{\sigma^{\ell-1}} \right) X^i, 
\end{align*}
where we let $f_i=0$ for $i\geq k$.
So we obtain the equations
\begin{equation}\label{eq_rs_eqns}
 \sum_{j=0}^{i} \sum_{\ell=1}^s a_{\ell,i-j} f_{j}^{\sigma^{\ell-1}}  = - a_{0,i}, \qquad i=0,1, \dots, k-1.
\end{equation}

Next, we show that the solution set of all $f =(f_0,f_1,\ldots,f_{k-1})$ satisfying \eqref{eq_rs_eqns} is contained in an affine shift of the kernel of a $\btt$ matrix. 
In what follows, fix an arbitrary $\F_q$-linear bijection $\phi: \F_{q^m} \to \F_q^m$. For an element $a \in \F_{q^m}$, let $\bar a:=\phi(a) \in \F_q^m$, and for a vector
$\mathbf{f} =(f_0,f_1,\ldots,f_{k-1}) \in (\F_{q^m})^k$, let $\bar {\mathbf{f}}:=(\bar f_0, \bar f_1,\ldots, \bar f_{k-1}) \in \F_q^{mk}$.

\begin{claim}\label{clm_rs_btt_ker}
Let $S$ be the set of all vectors $f =(f_0,f_1,\ldots,f_{k-1}) \in (\F_{q^m})^k$ satisfying that 
\begin{equation}\label{eq_rs_eqns_zero}
 \sum_{j=0}^{i} \sum_{\ell=1}^s a_{\ell,i-j} f_{j}^{\sigma^{\ell-1}}  =0, \qquad i=0,1, \dots, k-1,
\end{equation}
where  $a_{\ell,i-j} \in \F_{q^m}$, and $a_{\ell,0}\neq 0$ for some $\ell\in [s]$.
Let $\bar S:= \{\bar f \mid f \in S\} \subseteq \F_q^{mk}$.
Then  $\bar S \subseteq \ker(M)$ for a $(k,r,m)$-$\btt$ matrix $M$ over $\F_q$ with $m-s+1\leq r\leq m$. Moreover, $M$ can be constructed in time $\poly(\log q,m,n)$.
\end{claim}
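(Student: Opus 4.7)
The plan is to translate the system \eqref{eq_rs_eqns_zero} into a block matrix equation over $\F_q$, observe that it is already block-lower-triangular and block-Toeplitz, and then compress the row dimension of each block to meet the maximal-rank condition (3) in the definition of $\btt$ matrix. For each $n \geq 0$, let $\tau_n : \F_{q^m} \to \F_{q^m}$ be the $\F_q$-linear map $f \mapsto \sum_{\ell=1}^s a_{\ell,n} f^{\sigma^{\ell-1}}$, with the convention $a_{\ell,n} = 0$ when $n > d$; it is $\F_q$-linear because both multiplication by a fixed element of $\F_{q^m}$ and each power of $\sigma$ are $\F_q$-linear, so via $\phi$ it corresponds to an $m \times m$ matrix $T_n$ over $\F_q$. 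The $i$-th equation of \eqref{eq_rs_eqns_zero} rewrites as $\sum_{j=0}^i \tau_{i-j}(f_j) = 0$, which under $\phi$ becomes $\sum_{j=0}^i T_{i-j} \bar f_j = 0$. Stacking these $k$ equations yields $M_0 \bar f = 0$, where $M_0$ is the $(km) \times (km)$ block matrix whose $(i,j)$-block is $T_{i-j}$ for $j \leq i$ and zero otherwise; by construction $M_0$ is block-lower-triangular and block-Toeplitz, and $\bar S = \ker(M_0)$.

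I would then pin down $r := \rank_{\F_q}(T_0)$. The map $\tau_0$ is induced by the linearized polynomial $P(Y) = \sum_{\ell=1}^s a_{\ell,0} Y^{q^{\ell-1}}$, which is nonzero by the hypothesis that some $a_{\ell,0}$ with $\ell \in [s]$ is nonzero. Because $P$ has ordinary degree at most $q^{s-1}$, it has at most $q^{s-1}$ zeros in $\F_{q^m}$; since the zero set of a linearized polynomial is an $\F_q$-subspace, $\dim_{\F_q} \ker \tau_0 \leq s-1$, so $m - s + 1 \leq r \leq m$. To satisfy the maximal-rank condition, I would use Gaussian elimination to produce an $r \times m$ matrix $U$ over $\F_q$ that extracts $r$ linearly independent rows of $T_0$, making $U T_0$ of full row rank $r$. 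Setting $M = \diag(U, U, \ldots, U) \cdot M_0$ yields a block matrix with $r \times m$ blocks $U T_0, U T_1, \ldots, U T_{k-1}$ that preserves the block-lower-triangular and block-Toeplitz structure, and whose $(1,1)$-block $U T_0$ has rank $r = \min\{r,m\}$; thus $M$ is a $(k,r,m)$-$\btt$ matrix. Since $M$ is a left multiple of $M_0$, $\ker(M_0) \subseteq \ker(M)$, and combined with $\bar S = \ker(M_0)$ this gives $\bar S \subseteq \ker(M)$, as required.

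For efficiency, each $T_n$ can be assembled in $\poly(\log q, m)$ time by computing the matrices for multiplication by $a_{\ell,n}$ and for the iterated Frobenius in the basis underlying $\phi$; there are at most $k \leq n$ distinct blocks, and constructing $U$ via Gaussian elimination on $T_0$ costs an additional $\poly(m)$ arithmetic operations in $\F_q$. The one conceptually nontrivial step is the rank estimate for $T_0$ via the linearized-polynomial count, which yields the lower bound $r \geq m - s + 1$; the remaining work is routine linear algebra, so I do not expect further obstacles.
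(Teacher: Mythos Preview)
Your proposal is correct and follows essentially the same approach as the paper: both represent each $\F_q$-linear map $f\mapsto \sum_{\ell} a_{\ell,n} f^{\sigma^{\ell-1}}$ by an $m\times m$ matrix, assemble these into a block-lower-triangular block-Toeplitz matrix, then restrict each block to a fixed set of $r=\rank(T_0)$ rows (your left-multiplication by $\diag(U,\dots,U)$ is exactly the paper's row selection), and bound $r\geq m-s+1$ via the degree of the associated linearized polynomial.
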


\begin{proof}
First note that both the multiplication map $m_b: \F_{q^m}\to\F_{q^m}$, given by
$a \mapsto b \cdot a$ for $a,b \in \F_{q^m}$, and the Frobenius automorphism $\sigma: \F_{q^m}\to\F_{q^m}$, given by $a \mapsto a^q$ for $a \in \F_{q^m}$, are $\F_q$-linear operations over $\F_{q^m}$. 
Consequently, for all $i=0,1,\ldots,k-1$,  there exists an $m \times m$ matrix $M_i$ over $\F_q$
so that 
$
\sum_{\ell=1}^s a_{\ell,i} \cdot b^{\sigma^{\ell-1}} = M_i \cdot \bar b
$
 for every $b \in \F_{q^m}$.
In this notation, we can rewrite \eqref{eq_rs_eqns_zero} as $M \cdot \bar f =0$, where 
\[M=
\begin{pmatrix}
M_{0} & 0 & 0 & \cdots & 0  \\
M_{1} & M_{0} & 0 & \cdots & 0 \\
M_{2} & M_{1} & M_{0} & \cdots & 0 \\
\vdots  & \vdots  & \vdots & \ddots & \vdots  \\
M_{k-1} & M_{k-2} & M_{k-3} & \cdots & M_{0}
\end{pmatrix}.
\]

Then we have that $M$ is a block lower-triangular Toeplitz matrix with blocks of size $m \times m$. To obtain a $\btt$
 matrix, we need to further ensure that all matrices $M_0$ have full rank. For this, we let $r:=\rank(M_0)$, and choose a subset $B$ of $r$ linearly independent rows of $M_0$. Then in the matrix $M$, we only keep the rows whose projection on the block $M_0$ belongs to $B$. This clearly gives a $(k,r,m)$-$\btt$ matrix $M$ so that $\bar S \subseteq \ker(M)$. 
Moreover, $M$ can clearly be constructed in time $\poly(\log q,m,n)$.
To conclude the proof of the claim, it remains to show that $r=\rank(M_0) \geq m-s+1$.

To see that $\rank(M_0) \geq m-s+1$, we show that $\dim(\ker(M_0)) \leq s-1$. Recall that $M_0$ represents the $\F_q$-linear map 
$b \mapsto \sum_{\ell=1}^s a_{\ell,0} \cdot b^{\sigma^{\ell-1}}$ for $b \in \F_{q^m}$.
 Recalling our assumption that  $a_{\ell,0}\neq 0$ for some $\ell\in [s]$, we know that $B(x):=\sum_{\ell=1}^s a_{\ell,0} \cdot x^{\sigma^{\ell-1}}$ is a nonzero polynomial of degree at most $q^{s-1}$ over $\F_{q^m}$, and consequently, it has at most $q^{s-1}$ zeros in $\F_{q^m}$. Since the map $B(x)$ is $\F_q$-linear, we conclude that the kernel is an $\F_q$-linear subspace of dimension at most $s-1$, and so
 $\dim(\ker(M_0)) \leq s-1$.
\end{proof}

By \eqref{eq_rs_eqns} and Claim \ref{clm_rs_btt_ker} above, we have that all  polynomials $f=(f_0,f_1,\ldots,f_{k-1})$ that agree with  $\mathbf{y}$ on at least $t$ points are contained in an affine shift of the kernel of a $(k,r,m)$-$\btt$ matrix over $\F_q$ for $r \geq m-s+1$. In the next section, we prove that the kernel of a $(k,r,m)$-$\btt$ matrix is a $(k,m,m-r)$-$\btt$ subspace. Finally, noting that in our setting $m-r \leq s-1$, and that a basis for the kernel of $M$, as well as the desired affine shift (which is any valid solution to \eqref{eq_rs_eqns}), can be found in time $\poly(\log q,m,n)$,  
concludes the proof of Lemma \ref{lem_rs}.

\subsection{The kernel of a BTT matrix is a BTT subspace}\label{subsec_btt_ker_equiv}

In this section, we prove the following lemma.

 \begin{lem}\label{lem_btt_ker_equiv}
 Suppose that $M$ is a $(k,r,m)$-$\btt$ matrix over $\F_q$, where $r\leq m$. Then $\ker(M)$ is a $(k,m,m-r)$-$\btt$ subspace over $\F_q$.
\end{lem}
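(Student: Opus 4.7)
The plan is to build an explicit $(k,m,m-r)$-$\btt$ matrix $N$ and show that $\ker(M)=\img(N)$.

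For the setup, I would write $M_\ell:=M_{\ell,1}$ so that the block-Toeplitz property gives $M_{i,j}=M_{i-j+1}$ whenever $i\geq j$. Viewing $x\in\F_q^{km}$ as $(x_1,\dots,x_k)\in(\F_q^m)^k$, the equation $Mx=0$ unfolds to
\[
\sum_{j=1}^{i} M_{i-j+1}\,x_j \;=\; 0, \qquad i=1,\dots,k. \qquad(\star)
\]
Since $r\leq m$ and $M_1\in\F_q^{r\times m}$ has maximal rank $r$, it is surjective; fix a right inverse $P$ with $M_1P=I_r$, and fix an $m\times(m-r)$ matrix $N_1$ whose columns form a basis of $\ker(M_1)$. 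Then recursively define $m\times(m-r)$ matrices $N_\ell:= -P\sum_{j=2}^{\ell} M_j\, N_{\ell-j+1}$ for $\ell=2,\dots,k$, and assemble $N$ as the $km\times k(m-r)$ matrix with block $N_{i-j+1}$ in block-position $(i,j)$ for $i\geq j$ and zero otherwise. By construction $N$ is block-lower-triangular and block-Toeplitz, and $N_1$ has column rank $m-r=\min(m-r,m)$, so $N$ is a bona fide $(k,m,m-r)$-$\btt$ matrix.

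For the inclusion $\img(N)\subseteq\ker(M)$, applying $M_1$ to the recursion and using $M_1P=I_r$ yields the key identity
\[
\sum_{j=1}^{\ell} M_j\, N_{\ell-j+1} \;=\; 0 \qquad \text{for all } \ell\geq 1, \qquad(\dagger)
\]
the case $\ell=1$ being $M_1N_1=0$. For any $y=(y_1,\dots,y_k)\in(\F_q^{m-r})^k$ and $x:=Ny$ with $x_j=\sum_{l=1}^{j}N_{j-l+1}y_l$, swapping the order of summation in the left-hand side of $(\star)$ and re-indexing the inner sum via $\ell:=j-l+1$ produces $\sum_{l=1}^{i} y_l \sum_{\ell=1}^{i-l+1}M_{i-l+2-\ell}N_\ell$, and the inner sum vanishes by $(\dagger)$ applied with parameter $i-l+1$.

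Finally, for the dimension count, since $N_1$ has full column rank $m-r$ and $N$ is block-lower-triangular with $N_1$ on its diagonal, $\dim\img(N)=k(m-r)$. Conversely, reading $c^{\top} M=0$ for $c=(c_1,\dots,c_k)\in(\F_q^r)^k$ from the last block column upward gives $c_k^{\top} M_1=0$; since $M_1$ has full row rank $r$, $c_k=0$, and iterating yields $c=0$. Thus $M$ has full row rank $kr$ and $\dim\ker(M)=km-kr=k(m-r)$. Combined with the inclusion, this forces $\img(N)=\ker(M)$, so $\ker(M)$ is a $(k,m,m-r)$-$\btt$ subspace. The only mildly technical step is the double-sum re-indexing that reduces the verification to the identity $(\dagger)$; everything else is routine linear algebra on the $\btt$ structure.
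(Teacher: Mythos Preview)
Your proof is correct and takes a genuinely different route from the paper's. The paper argues via the shift operator $\sigma(v_1,\dots,v_k)=(0,v_1,\dots,v_{k-1})$ and the filtration $V_i=\{v\in\ker(M):v_1=\cdots=v_i=0\}$: it shows $\sigma(V_{i-1})\subseteq V_i$, computes $\dim V_i=(k-i)(m-r)$, and deduces that a choice of $m-r$ vectors spanning $V_0$ modulo $V_1$ propagates under $\sigma$ to span each $V_{i-1}$ modulo $V_i$; the $\btt$ matrix $\tilde M$ is then assembled from these shifted vectors. In contrast, you build $N$ explicitly by fixing a right inverse $P$ of $M_1$ and solving the block-convolution equation $(\dagger)$ recursively, which is essentially inverting a block-triangular-Toeplitz operator at the level of formal power series. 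Your argument is more algorithmic and arguably more direct (the identity $(\dagger)$ does all the work, and the dimension count is routine), while the paper's shift-operator viewpoint is more conceptual and makes the block-Toeplitz structure of the kernel basis emerge from the shift-invariance of $\ker(M)$ rather than from an explicit recursion. Both are short; yours has the mild advantage of giving a concrete formula for the blocks $N_\ell$.
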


To prove the above lemma, let $V:= \ker(M)$, where
\[M=
\begin{pmatrix}
M_{1} & 0 & 0 & \cdots & 0  \\
M_{2} & M_{1} & 0 & \cdots & 0 \\
M_{3} & M_{2} & M_{1} & \cdots & 0 \\
\vdots  & \vdots  & \vdots & \ddots & \vdots  \\
M_{k} & M_{k-1} & M_{k-2} & \cdots & M_{1} 
\end{pmatrix}
\]
is a $(k,r,m)$-$\btt$ matrix.
 Our goal is to show that $V$ is a $(k,m,m-r)$-$\btt$ subspace, and for this we need to exhibit a $(k,m,m-r)$-$\btt$ matrix $\tilde M$ so that $V=\img(\tilde M)$. 

We start by introducing some notation. We write a vector $v\in\F_q^{km}$ as $v=(v_1,v_2,\dots,v_k)$ where $v_i\in \F_q^m$.
 For $i=0,1,\dots, k$, 
we let
$$V_i=\left\{ (v_1,v_2,\dots,v_k) \in V \mid v_1=v_2 = \cdots =v_{i}=0 \right\}.$$
In particular, we have $V_0=V$ and $V_k=\{0\}$.
Finally, define $\sigma: \F_q^{km}\to \F_q^{km}$ by
\[
\sigma(v_1,v_2,\dots,v_k)=(0, v_{1},v_{2},\dots,v_{k-1}).
\]

\begin{claim}\label{clm_btt_ker_contain}
For all $i=1,\ldots,k$,   $\sigma(V_{i-1}) \subseteq V_{i} \subseteq V_{i-1}$.
\end{claim}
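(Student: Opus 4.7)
The plan is to prove the two containments separately. The inclusion $V_i \subseteq V_{i-1}$ is immediate from the definition, since a vector with $v_1 = \cdots = v_i = 0$ automatically satisfies $v_1 = \cdots = v_{i-1} = 0$, and both sets sit inside $V$.

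For the nontrivial inclusion $\sigma(V_{i-1}) \subseteq V_i$, I would take any $v = (v_1, \dots, v_k) \in V_{i-1}$ and verify that $w := \sigma(v) = (0, v_1, \dots, v_{k-1})$ satisfies (a) $w_1 = \cdots = w_i = 0$, and (b) $Mw = 0$. Part (a) follows directly: $w_1 = 0$ by definition of $\sigma$, and $w_j = v_{j-1} = 0$ for $2 \le j \le i$ since $j-1 \le i-1$ and $v \in V_{i-1}$.

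The main observation for part (b) is that the block-Toeplitz structure of $M$ causes $M$ to commute with shifting, up to the natural shift on the codomain. More precisely, let $\tau: \F_q^{kr} \to \F_q^{kr}$ be the analogous shift $\tau(u_1,\dots,u_k) = (0, u_1,\dots,u_{k-1})$. I would compute the $j$-th block of $Mw$ by expanding using the block-lower-triangular Toeplitz form: for $j \ge 2$,
\[
(Mw)_j \;=\; \sum_{\ell=1}^{j} M_{j-\ell+1} w_\ell \;=\; \sum_{\ell=2}^{j} M_{j-\ell+1} v_{\ell-1} \;=\; \sum_{\ell'=1}^{j-1} M_{(j-1)-\ell'+1} v_{\ell'} \;=\; (Mv)_{j-1},
\]
while $(Mw)_1 = 0$ trivially. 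Thus $Mw = \tau(Mv)$, so if $v \in \ker(M)$ then $Mw = \tau(0) = 0$, i.e., $w \in V$. Combined with (a), this gives $w \in V_i$.

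No step here looks hard; the only thing to get right is the bookkeeping of indices in the Toeplitz identity $M \circ \sigma = \tau \circ M$, which is a direct consequence of condition~2 in the definition of a BTT matrix and does not use condition~3 (the maximal-rank condition on $M_1$). This identity is precisely what makes the chain $V = V_0 \supseteq V_1 \supseteq \cdots \supseteq V_k = \{0\}$ filtered by $\sigma$ and will be the engine for the subsequent construction of the generating BTT matrix $\tilde M$ for $V$ in the proof of Lemma~\ref{lem_btt_ker_equiv}.
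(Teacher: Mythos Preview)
Your proposal is correct and follows essentially the same approach as the paper: both verify the coordinate condition $w_1=\cdots=w_i=0$ directly, and then use the block-Toeplitz structure to show $Mw=0$ whenever $Mv=0$. Your formulation via the identity $M\circ\sigma=\tau\circ M$ is just a cleaner way of expressing the same block computation that the paper writes out explicitly.
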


\begin{proof}
The right-hand containment clearly holds by the definition of $V_i$.
To see that the left-hand containment holds, let $v=(v_1,\ldots,v_k) \in V_{i-1}$, and let $u=\sigma(v)=(0,v_1,\ldots,v_{k-1})$. Our goal is to show that $u\in V_{i}$.
First note that since $v \in V_{i-1}$, we have that $v_1=\cdots = v_{i-1}=0$, and so $u=(u_1,\ldots,u_k)$ satisfies that $u_1=\cdots =u_i=0$. Thus to show that $u \in V_{i}$, it
remains to show that $u \in V$, or equivalently that $M \cdot u =0$. 

To this end, note that by the structure of $M$,
$$M \cdot u = \begin{pmatrix}
M_{1} & 0 & 0 & \cdots & 0  \\
M_{2} & M_{1} & 0 & \cdots & 0 \\
M_{3} & M_{2} & M_{1} & \cdots & 0 \\
\vdots  & \vdots  & \vdots & \ddots & \vdots  \\
M_{k} & M_{k-1} & M_{k-2} & \cdots & M_{1} 
\end{pmatrix} \cdot \begin{pmatrix} 0 \\ v_1 \\ \vdots \\ v_{k-1} \end{pmatrix}  
= 
\begin{pmatrix}
M_{1} & 0 &  \cdots & 0  \\
M_{2} & M_{1} &  \cdots & 0 \\
\vdots  & \vdots  & \ddots & \vdots  \\
M_{k-1} &  M_{k-2} & \cdots & M_{1} 
\end{pmatrix} \cdot \begin{pmatrix}  v_1 \\ \vdots \\ v_{k-1} \end{pmatrix} =0,$$
where the last equality follows since $v \in V$, and so 
$$M \cdot v = \begin{pmatrix}
M_{1} & 0 & 0 & \cdots & 0  \\
M_{2} & M_{1} & 0 & \cdots & 0 \\
M_{3} & M_{2} & M_{1} & \cdots & 0 \\
\vdots  & \vdots  & \vdots & \ddots & \vdots  \\
M_{k} & M_{k-1} & M_{k-2} & \cdots & M_{1} 
\end{pmatrix} \cdot \begin{pmatrix}   v_1 \\ \vdots \\ v_{k} \end{pmatrix}=0.
$$
\end{proof}

We also note the following claim which follows by counting the number of linearly-independent constraints defining $V_i$.

\begin{claim}\label{clm_btt_ker_dim}
For all $i=0,1,\ldots,k$, $\dim(V_i) = (k-i)(m-r)$. 
In particular, $\dim(V_{i-1}) = \dim(V_{i})+(m-r)$ for all $i=1,\ldots,k$. 
\end{claim}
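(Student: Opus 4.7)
The plan is a direct dimension count exploiting the block-lower-triangular structure of $M$, combined with the fact that the diagonal block $M_1$ has full row rank $r$ (since $r \leq m$ and $M_1$ is required to have maximal rank). The target identity $\dim V_i = (k-i)(m-r)$ is consistent with the obvious boundary cases $\dim V_0 = k(m-r)$ and $\dim V_k = 0$, and the ``in particular'' statement is then just a one-line subtraction.

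First, I would unpack the equation $Mv=0$ block-row by block-row: the $j$-th block-row reads $\sum_{\ell=1}^{j} M_{j-\ell+1} v_{\ell} = 0$ and involves only $v_1,\ldots,v_j$. Hence, once we impose $v_1=\cdots=v_i=0$, the first $i$ block-rows collapse to $0=0$ and contribute nothing, while the remaining $k-i$ block-rows, rewritten in the tail coordinates $(v_{i+1},\ldots,v_k)$ and using block-Toeplitz shift invariance, form precisely the system $\tilde M\,(v_{i+1},\ldots,v_k)^{T}=0$, where $\tilde M$ is a $(k-i,r,m)$-$\btt$ matrix with diagonal block still equal to $M_1$. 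Thus $V_i$ is naturally identified with $\ker \tilde M \subseteq \F_q^{(k-i)m}$.

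The key step is then to verify that $\tilde M$ has full row rank $(k-i)r$. This reduces to the standard fact that a block-lower-triangular matrix whose every diagonal block is of full row rank is itself of full row rank: concretely, fix $r$ rows of $M_1$ spanning its row space, and observe that in each block-row of $\tilde M$ a genuinely ``new'' variable block $v_{i+j}$ is hit by $M_1$, so the chosen $(k-i)\cdot r$ rows are linearly independent by a triangular back-substitution argument. Consequently $\dim \ker \tilde M = (k-i)m-(k-i)r=(k-i)(m-r)$, which gives the claim; the ``in particular'' statement follows from $\dim V_{i-1}-\dim V_i = m-r$.

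I do not foresee a real obstacle: the only moderately delicate point is articulating the full-rank claim for $\tilde M$ cleanly, but this is a routine consequence of the block-triangular shape together with $\rank(M_1)=r$, and it is essentially dual to the argument already used in Claim~\ref{clm_btt_ker_contain} for the shift $\sigma$.
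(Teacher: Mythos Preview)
Your argument is correct and is essentially the approach the paper has in mind: the paper simply asserts that the claim ``follows by counting the number of linearly-independent constraints defining $V_i$,'' and your proposal spells out exactly that count by identifying $V_i$ with the kernel of the truncated $(k-i,r,m)$-$\btt$ matrix $\tilde M$ and noting that $\tilde M$ has full row rank $(k-i)r$ because its diagonal block $M_1$ has full row rank $r$. One cosmetic point: since $M_1$ is already $r\times m$, the phrase ``fix $r$ rows of $M_1$ spanning its row space'' is just saying ``take all rows of $M_1$,'' so you can drop that clause.
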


The above two claims imply the following.

\begin{claim}\label{clm_btt_ker_main}
The following holds  for all $i=1, \ldots, k-1$.
Suppose that $b^{(1)},\ldots,b^{(m-r)}$ are $m-r$ linearly independent vectors in $\F_q^{km}$ so that 
\begin{equation}\label{eq_btt_ker1}
V_{i-1} = V_{i} + \spn\{b^{(1)},\ldots,b^{(m-r)}\}.
\end{equation}
Then
\begin{equation}\label{eq_btt_ker2}
V_{i} = V_{i+1} + \spn\left\{ \sigma(b^{(1)}),\ldots,\sigma(b^{(m-r)})\right\}.
\end{equation}
\end{claim}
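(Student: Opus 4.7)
The plan is to use the shift map $\sigma$ together with the dimension formulas from Claim \ref{clm_btt_ker_dim} to transport a complementary basis from one level to the next. First I would observe that from \eqref{eq_btt_ker1} and the inclusion $V_i \subseteq V_{i-1}$, we have $\spn\{b^{(1)},\ldots,b^{(m-r)}\} \subseteq V_{i-1}$, and the equality of dimensions
\[
\dim(V_{i-1}) = \dim(V_i) + (m-r)
\]
from Claim \ref{clm_btt_ker_dim} forces $V_i \cap \spn\{b^{(1)},\ldots,b^{(m-r)}\} = \{0\}$; thus the $b^{(j)}$ form a basis of a complement of $V_i$ inside $V_{i-1}$.

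Next I would apply $\sigma$. Since each $b^{(j)} \in V_{i-1}$, Claim \ref{clm_btt_ker_contain} gives $\sigma(b^{(j)}) \in \sigma(V_{i-1}) \subseteq V_i$, so the right-hand side of \eqref{eq_btt_ker2} is contained in $V_i$. It remains to verify the reverse inclusion, which by Claim \ref{clm_btt_ker_dim} reduces to showing that $\sigma(b^{(1)}),\ldots,\sigma(b^{(m-r)})$ are linearly independent modulo $V_{i+1}$; this would give
\[
\dim\left(V_{i+1}+\spn\{\sigma(b^{(1)}),\ldots,\sigma(b^{(m-r)})\}\right) = \dim(V_{i+1})+(m-r) = \dim(V_i),
\]
forcing equality.

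The main technical step is this independence claim, and it is really where the definition of $\sigma$ as a coordinate shift is used. Suppose $\sum_j c_j\, \sigma(b^{(j)}) \in V_{i+1}$, and let $v := \sum_j c_j\, b^{(j)} \in V_{i-1}$, with $v = (v_1,\ldots,v_k)$. Since $v \in V_{i-1}$, we have $v_1 = \cdots = v_{i-1} = 0$, so $\sigma(v) = (0,0,\ldots,0,v_i,v_{i+1},\ldots,v_{k-1})$ whose first $i$ coordinates already vanish. The hypothesis $\sigma(v) \in V_{i+1}$ forces the $(i+1)$-st coordinate to vanish as well, i.e., $v_i = 0$. Combined with $v \in V \subseteq \F_q^{km}$ (which holds because $v$ lies in $\spn\{b^{(j)}\} \subseteq V_{i-1} \subseteq V$), we conclude $v \in V_i$. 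But then $v \in V_i \cap \spn\{b^{(1)},\ldots,b^{(m-r)}\} = \{0\}$, and by the linear independence of the $b^{(j)}$ all $c_j$ vanish, establishing independence of the $\sigma(b^{(j)})$ modulo $V_{i+1}$ and completing the proof. The only potential obstacle is the bookkeeping of which coordinates get zeroed out by $\sigma$, but this is routine once one writes $\sigma$ out explicitly as the shift $(v_1,\ldots,v_k) \mapsto (0,v_1,\ldots,v_{k-1})$.
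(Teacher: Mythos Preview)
Your proof is correct and follows essentially the same route as the paper: show the right-hand side of \eqref{eq_btt_ker2} lies in $V_i$ via Claim~\ref{clm_btt_ker_contain}, then use the dimension count from Claim~\ref{clm_btt_ker_dim} to reduce to showing no nontrivial combination of the $\sigma(b^{(j)})$ lies in $V_{i+1}$, and finally derive $v_i=0$ from the coordinate shift to conclude $v\in V_i\cap\spn\{b^{(j)}\}=\{0\}$. The only cosmetic difference is that you make the complement fact $V_i\cap\spn\{b^{(j)}\}=\{0\}$ explicit at the outset, whereas the paper invokes it implicitly at the end as a contradiction to \eqref{eq_btt_ker1}.
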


\begin{proof}
First note that by our assumption \eqref{eq_btt_ker1}, we have that $b^{(1)},\ldots,b^{(m-r)}$ are contained in $V_{i-1}$. By Claim \ref{clm_btt_ker_contain}, this implies in turn that  $\sigma(b^{(1)}),\ldots,\sigma(b^{(m-r)}) \in V_i$ and $V_{i+1} \subseteq V_i$, and consequently we have that the right-hand side of 
\eqref{eq_btt_ker2} is contained in the left-hand side.

To see the containment in the other direction, recall that by Claim \ref{clm_btt_ker_dim}, $\dim(V_i) - \dim(V_{i+1})=  m-r$, and so it suffices to show that there is no non-trivial linear combination of 
$\sigma(b^{(1)}),\ldots,\sigma(b^{(m-r)})$ that belongs to $V_{i+1}$. 
Suppose in contradiction that there exists a non-trivial linear combination $a := \alpha_1 \cdot \sigma(b^{(1)})+\cdots+\alpha_{m-r}\cdot \sigma(b^{(m-r)}) \in V_{i+1}$. By the definition of $V_{i+1}$, this implies in turn that $a_{i+1}=0$. But in this case, the non-trivial linear combination $a' :=\alpha_1 \cdot b^{(1)}+\cdots+\alpha_{m-r} \cdot b^{(m-r)}$ satisfies that $a'_i=0$. Consequently, we have that $a' \in V_{i}$, contradicting our assumption \eqref{eq_btt_ker1}.
\end{proof}

Now we  prove Lemma~\ref{lem_btt_ker_equiv} using the above claim.

\begin{proof}[Proof of  Lemma~\ref{lem_btt_ker_equiv}]
Recall that our goal is to exhibit a $(k,m,m-r)$-$\btt$ matrix $\tilde M$ so that $V=\img(\tilde M)$. We construct $\tilde M$ as follows. Since $\dim(V_0) = \dim(V_1)+(m-r)$, there exist $m-r$ linearly independent vectors $b^{(1)},\ldots,b^{(m-r)} \in \F_q^{km}$ so that 
$V_{0} = V_{1} + \spn\{b^{(1)},\ldots,b^{(m-r)}\}$.
For $i=1,\ldots,k$, let $M_i$ be a $(km) \times (m-r)$ matrix whose columns are $\sigma^{(i-1)}(b^{(1)}),\ldots,\sigma^{(i-1)}(b^{(m-r)})$. 
Let
\[
\tilde M=\begin{pmatrix} M_1  & M_2  & \cdots  & M_k\end{pmatrix}.
\] 
Then we clearly have that $\tilde M$ is a $(k,m,m-r)$-$\btt$ matrix. Moreover, by Claim \ref{clm_btt_ker_main}
we further have that $V_{i-1} = V_{i}+ \img(M_i)$ for all $i=1,\ldots k$, and so $V=V_0 = \img(\tilde M)$. This concludes the proof of Lemma \ref{lem_btt_ker_equiv}.
\end{proof}




 
\maketitle

\section{Preliminaries on function fields and algebraic-geometric codes}\label{sec:AG_pre}

We first give preliminaries and notations about function fields and algebraic-geometric codes. The reader may refer to, e.g., \cite{Sti09} for detailed background.

\paragraph{Function fields.}
Let $\F_q$ be a finite field.
An extension field $F$ of $\F_q$ is called a \emph{function field in one variable}  or simply a \emph{function field} over $\F_q$ if $F$ is a finite extension of $\F_q(x)$ for some element $x\in F$ that is transcendental over $\F_q$.
The \emph{field of constants} of   $F$  is the algebraic closure of $\F_q$ in $F$.

In the rest of this section,   let $F$ be a  function field  over $\F_q$  such that its field of constants is $\F_q$, i.e, the algebraic closure of $\F_q$ in $F$ is $\F_q$ itself.

\paragraph{Discrete valuations and places.}

A (normalized) \emph{discrete valuation} of $F$ is a map $v: F\to \Z\cup\{+\infty\}$ 
with the following properties:
\begin{itemize}
\item $v(a)=+\infty$ iff $a=0$.
\item $v(ab)=v(a)+v(b)$ for $a,b\in F$.
\item $v(a+b)\geq\min\{v(a), v(b)\}$ for $a,b\in F$.
\item $v(F^\times)= \Z$.
\end{itemize}
 
For a discrete valuation $v$ of $F$, we associate a pair $P=(\O_v, \m_v)$ where $\O_v$ is the ring $\{a\in F: v(a)\geq 0\}$
and $\m_v$ is the ideal  
$\{a\in \O_v: v(a)>0\}$ 
of $\O_v$.
 Call $P$ a \emph{place} of $F$.\footnote{It is common in the literature to define a place to be just the ideal $\m_v$ associated with a discrete valuation $v$ instead of $(\O_v, \m_v)$  (see, e.g., \cite{Sti09}). This is equivalent to our definition since $\O_v$ is determined by $\m_v$ via $\O_v=\{a\in F^\times: a^{-1}\not\in\m_v\}\cup\{0\}$.}
Denote by $\mathbb{P}(F)$ the set of all places of $F$, i.e.,
\[
\mathbb{P}(F):=\{(\O_v, \m_v): v~\text{is a discrete valuation of}~F\}.
\]

We may recover the discrete valuation $v$ from a place $P=(\O, \m)$ as follows. Let $v(0)=+\infty$. For $0\neq a\in \O$, $v(a)$ is the largest $k\in\N$ such that $a\in \m^k$, where we let $\m^0=\O$. For $a\in F^\times\setminus \O$, let $v(a)=-v(a^{-1})$. 
This gives a one-to-one correspondence between the set $\mathbb{P}(F)$ of all places of $F$  and the set of all discrete valuations of $F$.
For a place $P\in \mathbb{P}(F)$, denote by $v_P$ the discrete valuation corresponding to $P$.

Intuitively, $v_P(f)$ indicates the order of zeros or poles of a function $f\in F$ at the place $P$: If $v_P(f)\geq 0$, then $v_P(f)$ is the order of zeros of $f\in F$ at $P$. Otherwise $-v_P(f)$ is the order of poles of $f$ at $P$.

It can be shown that for a place $P=(\O, \m)$ of $F$, the quotient ring $\kappa_P:=\O/\m$ is a finite field extension of $\F_q$, called the  \emph{residue class field} or  \emph{residue field} of $P$.  If $[\kappa_P: \F_q]=1$, we say the place $P$ is \emph{$\F_q$-rational} or simply \emph{rational}. In this case, we identify $\F_q$ with $\kappa_P$  via the field isomorphism $\F_q\to \kappa_P$ sending $a\in\F_q$ to $a+\m$.

For $f\in \O$ and a rational place $P$ of $F$, define  
\[
f(P):=f+\m\in\kappa_P
\]
which we view as an element of $\F_q$ by identifying $\F_q$ with $\kappa_P$ as above.

\paragraph{Local  power series and  Laurent series expansion.}

Let $P=(\O, \m)$ be a rational place of $F$. An element $u\in\O$ is called a \emph{uniformizing parameter} or \emph{uniformizer} of $P$  if $v_P(u)=1$, or equivalently, $u$ generates the ideal $\m$. 

Fix $u\in\O$ to be a uniformizer of $P$. We may write any $f\in \O$ as a \emph{power series} in $u$ over $\F_q$
\[
f = c_0+c_1 u+ c_2 u^2+\cdots 
\]
where the coefficients $c_i\in\F_q$ may be found as follows:  Let $f_0=f$. For $i=0,1,2,\dots$, let $c_i=f_i(P)$ and let $f_{i+1}=(f_i-c_i)/u\in\O$.

A \emph{Laurent series} is a generalization of a power series, where we allow finitely many terms of negative degree.
Generalizing the above representation by power series, we may write any element of $F$ as a Laurent series in $u$ over $\F_q$.
Namely, for $f\in F^\times$, let $e=v_P(f)$ and $f^*=f/u^e\in\O$. Suppose $f^*=c_0+c_1 u+ c_2 u^2+\cdots$.
Then
\[
f = u^e f^*=c_0u^{e}+c_1 u^{e+1}+ c_2 u^{e+2}+\cdots.
\]

 Thus, for a rational place $P=(\O, \m)$ and a uniformizer $u$ of $P$, we have a local expansion  of  every element of $\O$ or $F$ as a power series or a Laurent series in $u$ over $\F_q$, respectively.

%

\paragraph{Divisors.} 

A \emph{divisor} of $F$ is a formal sum $\sum_P n_P P$ of finitely many places $P\in \mathbb{P}(F)$, where $n_P\in \Z$.
The set of all divisors of $F$ forms an abelian group $\Div(F)$, called the \emph{divisor group} of $F$. 

The \emph{degree} of a divisor $D=\sum_P n_P P$ is $\deg(D):=\sum_P n_P [\kappa_P: \F_q]$.
The \emph{support} of $D$, denoted by $\supp(D)$, is the set of places $P$ for which $n_P\neq 0$. 
If $n_P\geq 0$ for all $P\in\supp(D)$, we write $D\geq 0$ and call $D$ an \emph{effective divisor}.
Note that $D\geq 0$ implies $\deg(D)\geq 0$.
Let $\Div_0(F):=\{D\in\Div(F): \deg(D)=0\}$, which is a subgroup of $\Div(F)$.

Let $f\in F^\times$.  It can be shown that $v_P(f)=0$ holds for all but finitely many places $P\in\mathbb{P}(F)$. So $\div(f):=\sum_{P\in\mathbb{P}(F)} v_P(f) P$ is a well-defined divisor. Divisors of the form $\div(f)$ are called \emph{principal divisors} of $F$. The degree of a principal divisor is always zero, i.e., $\div(f)\in \Div_0(F)$ for $f\in F^\times$.

\paragraph{Riemann--Roch spaces.} For a divisor $D$ of $F$, the \emph{Riemann--Roch space} associated with $D$ is
\[
L(D):=\{f\in F^\times: \div(f)+D\geq 0\}\cup \{0\}
\]
which is a finite-dimensional  vector space over $\F_q$. Let $\ell(D):=\dim_{\F_q} L(D)$. 

By definition, for $D=\sum_P n_P P$, the condition $\div(f)+D\geq 0$ is equivalent to $v_P(f)\geq -n_P$ for $P\in \mathbb{P}(F)$.
So  $L(D)$ is the space of functions in $F$ whose prescribed zeros and allowed poles are specified by $D$: At a place $P$, if $n_P<0$, then any $f\in L(D)$ must have a zero of order at least $-n_P$ at $P$. On the other hand, if $n_P\geq 0$, then $f\in L(D)$ is allowed to have a pole of order at most $n_P$ at $P$. 

Note that if $L(D)$ contains a nonzero element $f$, then $\div(f)+D\geq 0$, which implies 
\[
\deg(D)=\deg(\div(f))+\deg(D)=\deg(\div(f)+D)\geq 0.
\] 
So for any divisor $D$ with $\deg(D)<0$, we have $L(D)=\{0\}$ and $\ell(D)=0$.

\paragraph{The Riemann--Roch theorem.}

The \emph{Riemann--Roch theorem} states that
\[
\ell(D)-\ell(K-D)=\deg(D)-g+1
\]
holds for any divisor $D$ of $F$, where $K$ is a certain divisor of $F$  called a \emph{canonical divisor},  and $g$ is a nonnegative integer depending only on $F$ called the \emph{genus} of $F$.

In fact, we only need  the following corollary of the Riemann--Roch theorem.
\begin{thm}[Riemann's inequality]\label{thm_riemann}
 $\ell(D)\geq \deg(D)-g+1$.
 \end{thm}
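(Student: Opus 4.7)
The plan is to obtain Riemann's inequality as an immediate corollary of the Riemann--Roch theorem stated just above. The Riemann--Roch theorem asserts
\[
\ell(D) - \ell(K-D) = \deg(D) - g + 1
\]
for any divisor $D$, where $K$ is a canonical divisor and $g$ is the genus. So rearranging gives $\ell(D) = \deg(D) - g + 1 + \ell(K-D)$, and it suffices to observe that $\ell(K-D) \geq 0$.

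The observation $\ell(K-D) \geq 0$ is immediate from the definition: $\ell(E) = \dim_{\F_q} L(E)$ is the $\F_q$-dimension of a vector space, and dimensions are nonnegative. (Indeed, the zero function always lies in $L(E)$, so $L(E)$ is never empty as a vector space.) Substituting back yields $\ell(D) \geq \deg(D) - g + 1$, as claimed.

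There is no real obstacle here; the only subtlety is just invoking the (stated but unproved) Riemann--Roch theorem and using the trivial nonnegativity of $\ell(K-D)$. I would write the proof as a single two-line derivation.
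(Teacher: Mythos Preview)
Your proposal is correct and matches the paper's approach: the paper simply presents Riemann's inequality as an immediate corollary of the Riemann--Roch theorem without writing out any further argument, and your derivation via $\ell(K-D)\geq 0$ is exactly the standard one-line justification.
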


\paragraph{Algebraic-geometric codes.}

Let $D$ be a divisor of $F$ and let $S=\{P_1,P_2,\dots,P_n\}$ be a set of $n$ distinct rational places of $F$ such that $\supp(D)\cap S=\emptyset$.
Define the \emph{algebraic-geometric ($\ag$) code}
\begin{equation}\label{eq_AG}
C(S, D):=\{(f(P_1), f(P_2),\dots, f(P_n)): f\in L(D)\}\subseteq \F_q^n,
\end{equation}
which is an $\F_q$-linear code of block length $n$. 

Let $D_S=\sum_{P\in S} P\in\Div(F)$. We have the following theorem.
\begin{thm}[{\cite[Theorem~2.2.2]{Sti09}}]\label{thm_generalAG}
The dimension of $C(S,D)$ is $\ell(D)-\ell(D-D_S)$, which equals $\ell(D)$ if $\deg(D)<\deg(D_S)=n$.
The minimum distance of $C(S,D)$ is at least $n-\deg(D)$.
\end{thm}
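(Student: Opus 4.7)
The plan is to analyze the evaluation map $\mathrm{ev}: L(D) \to \F_q^n$ defined by $\mathrm{ev}(f) = (f(P_1), f(P_2), \dots, f(P_n))$. This is well defined because for every $P_i \in S$ we have $v_{P_i}(f) \geq -v_{P_i}(D) = 0$ (since $P_i \notin \supp(D)$), so $f \in \O_{P_i}$ and $f(P_i)$ makes sense. The map is clearly $\F_q$-linear, its image is $C(S,D)$ by definition, and hence $\dim_{\F_q} C(S,D) = \ell(D) - \dim_{\F_q} \ker(\mathrm{ev})$. So the first goal is to identify $\ker(\mathrm{ev})$ with $L(D - D_S)$.

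To identify the kernel, I would note that $f \in \ker(\mathrm{ev})$ iff $f \in L(D)$ and $v_{P_i}(f) \geq 1$ for all $i = 1, \dots, n$. Combining the inequalities $v_P(f) \geq -v_P(D)$ for all $P$ with $v_{P_i}(f) \geq 1 = v_{P_i}(D_S)$ for $P_i \in S$ and $v_P(D_S) = 0$ for $P \notin S$, this is precisely the condition $\div(f) + D - D_S \geq 0$, i.e., $f \in L(D - D_S)$. Hence $\dim C(S,D) = \ell(D) - \ell(D - D_S)$. For the second clause, if $\deg(D) < \deg(D_S) = n$, then $\deg(D - D_S) < 0$, and since any nonzero $g \in L(D - D_S)$ would give a principal divisor $\div(g)$ of degree $0$ with $\div(g) + (D - D_S) \geq 0$, forcing $\deg(D - D_S) \geq 0$, we conclude $L(D - D_S) = \{0\}$ and therefore $\dim C(S,D) = \ell(D)$.

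For the minimum distance, I would consider a nonzero codeword $c = \mathrm{ev}(f)$ of weight $w$, and let $T \subseteq \{1, \dots, n\}$ be the set of coordinates where $c$ vanishes, so $|T| = n - w$. Then $f \in L(D)$ vanishes at every $P_i$ with $i \in T$, which as above means $f \in L(D')$ where $D' := D - \sum_{i \in T} P_i$. Since $f \neq 0$, the same degree argument as before forces $\deg(D') \geq 0$, i.e., $\deg(D) \geq n - w$, so $w \geq n - \deg(D)$. This gives the claimed bound on the minimum distance.

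None of the steps look genuinely difficult; the whole proof is essentially bookkeeping on orders of zeros and poles together with the basic fact that principal divisors have degree zero, which rules out nonzero functions in Riemann--Roch spaces of negative degree. The only mildly delicate point is making sure the evaluation map is well defined (i.e.\ that $f \in L(D)$ really lies in $\O_{P_i}$ for $P_i \in S$), which is where the hypothesis $\supp(D) \cap S = \emptyset$ enters. Since this is a direct citation of \cite[Theorem~2.2.2]{Sti09}, I would keep the argument short and self-contained along the lines above.
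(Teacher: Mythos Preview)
Your argument is correct and is exactly the standard proof of this result (as in the cited reference \cite[Theorem~2.2.2]{Sti09}); the paper itself does not reprove the theorem but only cites it. The identification of $\ker(\mathrm{ev})$ with $L(D-D_S)$, the degree argument ruling out nonzero functions in Riemann--Roch spaces of negative-degree divisors, and the weight bound via $f\in L\bigl(D-\sum_{i\in T}P_i\bigr)$ are all carried out correctly.
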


 \subsection{Constant field extensions of function fields}\label{sec_constantext}
 
 Let $\F_{q^m}/\F_q$ be a finite field extension of degree $m\in\N^+$. Denote by $F^{(m)}$ the compositum $F\F_{q^m}$ of $F$ and $\F_{q^m}$.
 Then $F^{(m)}$ is a function field over $\F_{q^m}$.
 Recall that we assume the field of constants of $F$ is $\F_q$. 
 This implies that the field of constants of $F^{(m)}$ is $\F_{q^m}$ \cite[Proposition~3.6.1]{Sti09}. 
 
 \paragraph{Places and divisors of $F^{(m)}$.}
 
Let $P=(\O, \m)$ be a  \emph{rational} place of $F$ and  $v_P$ the corresponding discrete valuation of $F$.
It can be shown that there exists a \emph{unique} discrete valuation $v_P'$ of $F^{(m)}$ that extends $v_P$. 
 We denote the corresponding place of $F^{(m)}$ by $P^{(m)}=(\O^{(m)}, \m^{(m)})$, which is an $\F_{q^m}$-rational place.
 As $v_P'$ extends $v_P$, we have $\O\subseteq \O^{(m)}$ and $\m\subseteq \m^{(m)}$.
 So $f(P^{(m)})=f(P)$ for any $f\in \O\subseteq \O^{(m)}$.
And a uniformizer $u\in \m$ of $P$ is also a uniformizer of $P^{(m)}$.
 
Let $D=\sum_{P} n_P P$ be a divisor of $F$ such that every $P\in\supp(D)$ is rational. 
Then we define 
\[
D^{(m)}:=\sum_{P\in \supp(D)} n_P P^{(m)},
\]
which is a divisor of $F^{(m)}$. 
The Riemann-Roch space $L(D^{(m)})$ and its dimension $\ell(D^{(m)})$ are defined as before, except that the base field is changed to $\F_{q^m}$.
 That is, 
  \[
  L(D^{(m)})=\{f\in (F^{(m)})^\times: \div(f)+D^{(m)}\geq 0\}\cup \{0\} 
  \]
  and $\ell(D^{(m)})=\dim_{\F_{q^m}} L(D^{(m)})$.

The following lemma is a special case of \cite[Theorem~3.6.3\,(d)]{Sti09}.
  
\begin{lem}\label{lem_dimbc}
Let $D$ be as above. If $f_1, \dots, f_k\in F$ form a basis of $L(D)$ over $\F_q$, then they form a basis of $L(D^{(m)})$ over $\F_{q^m}$. In particular, $\ell(D^{(m)})=\ell(D)$.
\end{lem}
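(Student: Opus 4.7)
The plan is to prove both inclusions: that each $f_i$ lies in $L(D^{(m)})$ and remains linearly independent over $\F_{q^m}$, and that every element of $L(D^{(m)})$ is an $\F_{q^m}$-linear combination of $f_1,\dots,f_k$. The first direction is the easy one. Since $v_{P^{(m)}}$ extends $v_P$ for every rational place $P$ of $F$, any $f \in L(D)$ satisfies $v_{P^{(m)}}(f) = v_P(f) \geq -n_P$ for $P \in \supp(D)$, and $v_{Q}(f) \geq 0$ for every other place $Q$ of $F^{(m)}$ (as $f \in F$ has no poles outside places lying over places of $F$ where it has poles). Hence $L(D) \subseteq L(D^{(m)})$. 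Linear independence over $\F_{q^m}$ follows from the fact that $F$ and $\F_{q^m}$ are linearly disjoint over $\F_q$, a consequence of $\F_q$ being the field of constants of $F$; any $\F_{q^m}$-linear dependence among the $f_i$ could be rewritten via an $\F_q$-basis of $\F_{q^m}$ as several $\F_q$-linear dependences in $F$, contradicting independence.

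For the spanning direction, I would fix an $\F_q$-basis $\alpha_1, \dots, \alpha_m$ of $\F_{q^m}$ and use the linear disjointness again to write any $f \in F^{(m)}$ uniquely as $f = \sum_{j=1}^m \alpha_j g_j$ with $g_j \in F$. The goal reduces to showing that if $f \in L(D^{(m)})$, then each $g_j$ lies in $L(D)$; expanding each $g_j$ in the $\F_q$-basis $f_1,\dots,f_k$ of $L(D)$ then realizes $f$ as an $\F_{q^m}$-combination of the $f_i$, and at the same time gives $\ell(D^{(m)}) \leq \ell(D)$ (combined with the opposite inequality from the first paragraph, this yields $\ell(D^{(m)}) = \ell(D)$).

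The main obstacle is showing that each $g_j$ lies in $L(D)$, for which I would exploit the crucial fact stated in Section~\ref{sec_constantext}: for a rational place $P$ of $F$, the extension $v_{P^{(m)}}$ to $F^{(m)}$ is \emph{unique}. Uniqueness implies that $v_{P^{(m)}}$ is invariant under $\gal(F^{(m)}/F) \cong \gal(\F_{q^m}/\F_q)$. Letting this Galois group act on $f = \sum_j \alpha_j g_j$, each conjugate has the form $\sum_j \tau(\alpha_j)\, g_j$, and the matrix $(\tau^i(\alpha_j))_{i,j}$ is invertible (its determinant is a nonzero Vandermonde-type determinant arising from linear independence of characters). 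Inverting this linear system expresses each $g_j$ as an $\F_{q^m}$-linear combination of the Galois conjugates of $f$. Since all conjugates share the same valuation $v_{P^{(m)}}(f)$, the ultrametric inequality gives $v_P(g_j) = v_{P^{(m)}}(g_j) \geq v_{P^{(m)}}(f) \geq -n_P$ for $P \in \supp(D)$, and similarly $v_Q(g_j) \geq 0$ at all other places $Q$ of $F$ (since any pole of $g_j$ at a place $Q$ of $F$ would lift to a pole of $g_j$, hence of $f$, at the unique extension $Q^{(m)}$). Thus $g_j \in L(D)$, completing the spanning step.
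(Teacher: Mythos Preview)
The paper does not give its own proof of this lemma; it simply cites Stichtenoth's textbook (Theorem~3.6.3(d)). So there is no ``paper's approach'' to compare against, and your write-up is supplying what the paper omits.

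Your Galois-based argument is essentially correct and is the standard route. One imprecision worth fixing: in the final parenthetical you invoke ``the unique extension $Q^{(m)}$'' for an arbitrary place $Q$ of $F$ and argue that a pole of $g_j$ there forces a pole of $f$. Uniqueness of the extension is only asserted (and only holds) for \emph{rational} places; a place $Q$ of degree $d>1$ can split into $\gcd(d,m)$ places in $F^{(m)}$. Moreover, ``pole of $g_j$ at $Q'$ implies pole of $f=\sum_j\alpha_j g_j$ at $Q'$'' is not automatic, since poles can cancel in a sum. Fortunately, the Galois machinery you already set up handles this case without change: for any place $Q'$ of $F^{(m)}$ lying over $Q\notin\supp(D)$, each conjugate $\tau(f)$ satisfies $v_{Q'}(\tau(f))=v_{\tau^{-1}(Q')}(f)\ge 0$ (because $\tau^{-1}(Q')$ also lies over $Q$ and $f\in L(D^{(m)})$ has no pole there). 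The ultrametric inequality then gives $v_{Q'}(g_j)\ge 0$, hence $v_Q(g_j)\ge 0$ since the constant field extension is unramified. So just replace the parenthetical by ``the same Galois argument applies at every place $Q'\mid Q$.''
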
 

\begin{rem}
We only need the definition of $D^{(m)}$ for the special case that every $P\in\supp(D)$ is rational, but it can also be defined for a general divisor $D$. See \cite[Definition~3.1.8]{Sti09} for the general definition, where it is called the \emph{conorm} of $D$ and denoted by $\mathrm{Con}_{F^{(m)}/F}(D)$.

We also note that $P^{(m)}$ and $D^{(m)}$ above are simply denoted by $P$ and $D$ respectively  in \cite{GX13} by a slight abuse of notation. 
\end{rem}
 
 \paragraph{The Frobenius automorphism.}

  Let $\sigma$ be the Frobenius automorphism $a\mapsto a^q$ of $\F_{q^m}$ over $\F_q$.
  As $\F_q$ is the field of constants of $F$, we have $F\cap \F_{q^m}=\F_q$.
  This implies that $\gal(F^{(m)}/F)$ is isomorphic to  $\gal(\F_{q^m}/\F_q)$ via the restriction map $\tau\mapsto \tau|_{\F_{q^m}}$ (see \cite[Theorem~1.12]{Lan02}).
  So $\sigma\in \gal(\F_{q^m}/\F_q)$ uniquely extends to an automorphism of $F^{(m)}$ that fixes $F$, which we also call $\sigma$ by an abuse of notation.
  
  As $\sigma$ is an automorphism of $F^{(m)}$, it permutes the places of  $F^{(m)}$.
Let $P$ be any rational place of $F$. As $\sigma$  fixes $F$, it also fixes $P$. So  $\sigma$ also fixes $P^{(m)}$.
  

 \subsection{The Garcia--Stichtenoth tower}\label{sec_gstower}
 
 We need the following tower of function fields introduced by Garcia and Stichtenoth in \cite{GS96}.
 
\begin{defi}[Garcia--Stichtenoth tower \cite{GS96}]
 Let $r>1$ be a prime power and $q=r^2$.
 For $i=1,2,\dots$, let $K_i=\F_q(x_1,x_2,\dots,x_i)$, where $x_1$ is transcendental over $\F_q$ and $x_i$ satisfies the following recursive equation for $i>1$.
\[
x_{i}^r+x_{i}=\frac{x_{i-1}^r}{x_{i-1}^{r-1}+1}.
\]
The \emph{Garcia--Stichtenoth tower} over $\F_q$ is the infinite tower of function fields $K_1\subseteq K_2\subseteq \cdots$. 
\end{defi}

For each $e\in\N^+$, we have $[K_e: K_1]=r^{e-1}$ and the field of constants of $K_e$ is $\F_q$.

\paragraph{Rational places.}

Let $e\in\N^+$.
The field $K_e$ has at least $r^e(r-1)+1$ rational places. One of them is the place $P_\infty$ ``at the infinity,'' which is totally ramified over $K_1$ and is the unique pole of $x_1$, i.e., $v_{P_\infty}(x_1)=-[K_e: K_1]=-r^{e-1}$. More generally, we have $v_{P_\infty}(x_i)=-r^{e-i}$ for $i\in [e]$. In particular, $v_{P_\infty}(x_e)=-1$ and hence $x_e^{-1}$ is a uniformizer of $P_\infty$.

In addition, define $S_e$ to be the set of all tuples $\alpha=(\alpha_1,\alpha_2,\dots,\alpha_e)\in\F_q^e$ such that $\alpha_1^r+\alpha_1\neq 0$ and $\alpha_{i}^r+\alpha_{i}=\frac{\alpha_{i-1}^r}{\alpha_{i-1}^{r-1}+1}$ for $i=2, 3,\dots, e$.
For each $\alpha\in S_e$, there exists a corresponding rational place $P_\alpha$ of $K_e$. 
It is the unique rational place $P$ satisfying $v_{P}(x_i)\geq 0$ and $x_i(P)=\alpha_i$ for $i=1,2,\dots,e$.
 
 There are precisely $r^e(r-1)$ elements in $S_e$, corresponding to $r^e(r-1)$ rational places $P_\alpha$ of $K_e$.  

%

\paragraph{Genus.} 

For $e\in \N^+$, the genus $g(K_e)$ of $K_e$ is given by
\[
g(K_e)=\begin{cases}
(r^{e/2}-1)^2 & e \text{ is even,}\\
(r^{(e-1)/2}-1)(r^{(e+1)/2}-1) & e \text{ is odd.}
\end{cases}
\]
In particular, we have $g(K_e)\leq r^e$.

\paragraph{Explicitness.}

To construct $\ag$ codes using the Garcia--Stichtenoth tower, we need to construct bases for Riemann--Roch spaces  of   $K_e$.
An efficient algorithm of computing such bases  was given in \cite{SAKSD01} for one-point divisors $k P_\infty$.

\begin{thm}[\cite{SAKSD01}]\label{thm_explicitbase}
For $k\in \N$ and $e\in \N^+$, a basis $B$ of the Riemann--Roch space $L(k P_\infty)$ of $K_e$ over $\F_q=\F_{r^2} $ can be found in time  $\poly(k, r^e)$.
Moreover, given $\alpha\in S_e$ and $f\in L(k P_\infty)$ (represented  in the basis $B$), the evaluation $f(P_\alpha)$  can also be found in time $\poly(k, r^e)$. 
\end{thm}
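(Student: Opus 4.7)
The plan is to prove Theorem~\ref{thm_explicitbase} by giving an explicit generating set of $L(kP_\infty)$ whose elements are indexed by a combinatorial object (tuples of bounded exponents) so that both the list and the evaluation map are computable by straightforward polynomial arithmetic in $K_e$. The starting point is that for each $i\in[e]$ one knows $v_{P_\infty}(x_i)=-r^{e-i}$, so any monomial $x_1^{a_1}x_2^{a_2}\cdots x_e^{a_e}$ has pole order at $P_\infty$ equal to $\sum_i a_i r^{e-i}$ and no other poles (once one checks that the $x_i$ are regular at every other rational place from $S_e$, and that the singularities introduced by denominators $x_{i-1}^{r-1}+1$ in the defining equations can be cleared using the recursion). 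Thus the monomials with weighted degree at most $k$ already give many elements of $L(kP_\infty)$.

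The first technical step is to identify the right restricted family. Because the recursive relation $x_i^r+x_i = x_{i-1}^r/(x_{i-1}^{r-1}+1)$ makes some of these monomials linearly dependent, one restricts the exponent of $x_i$ (for $i\geq 2$) to the range $0\leq a_i < r$, so that $x_i^r$ is always rewritten using the recursion in favor of lower-index generators. I would then prove the following two claims: (i) the restricted family $\mathcal{B}_k:=\{\,\prod_i x_i^{a_i}\,:\, a_i\geq 0 \text{ for } i=1,\ a_1 r^{e-1}+\sum_{i\geq 2} a_i r^{e-i}\leq k,\ 0\leq a_i<r \text{ for } i\geq 2\,\}$ (adjusted to account for poles of $1/(x_{i-1}^{r-1}+1)$, possibly by multiplying each monomial by appropriate powers of such denominators cleared along the tower) is contained in $L(kP_\infty)$; and (ii) the elements of $\mathcal{B}_k$ are linearly independent over $\F_q$. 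Independence will follow from the fact that different monomials in the restricted family have distinct pole orders at $P_\infty$ once one also inspects their valuations at the other ramified places in the tower; the weighted-degree count then matches the Riemann--Roch lower bound $\ell(kP_\infty)\geq k-g(K_e)+1$ asymptotically, and a more careful counting will show equality (matching $\ell(kP_\infty)$ exactly via the Weierstrass semigroup of $P_\infty$ computed in the Garcia--Stichtenoth tower).

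For the efficiency side, constructing the list $\mathcal{B}_k$ is immediate: one enumerates tuples $(a_1,\ldots,a_e)$ with the weighted-degree constraint, and there are at most $\poly(k,r^e)$ of them. For evaluation at $P_\alpha$, observe that each element of $\mathcal{B}_k$ is a rational function in $x_1,\ldots,x_e$ whose numerator and denominator are polynomials of bit-size $\poly(k,r^e)$; since $x_i(P_\alpha)=\alpha_i$ and $P_\alpha$ is not a pole of any basis element (because the denominator factors $x_{i-1}^{r-1}+1$ are nonzero at $P_\alpha$ by definition of $S_e$), the evaluation $f(P_\alpha)$ reduces to plugging $\alpha_i$ into the rational expression for $f$ written in the basis, which is a $\poly(k,r^e)$ time operation in $\F_q$.

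The main obstacle will be step (ii) --- proving that the restricted monomials (with the correct denominator decorations) really are linearly independent and that their count exactly matches $\ell(kP_\infty)$, rather than merely landing in $L(kP_\infty)$. This requires a precise description of the Weierstrass semigroup at $P_\infty$ in the Garcia--Stichtenoth tower, i.e., the set of integers $n$ such that $\ell(nP_\infty)>\ell((n-1)P_\infty)$. Once one has this combinatorial description (showing that the weights $a_1 r^{e-1}+\sum_{i\geq 2}a_i r^{e-i}$ with $0\leq a_i<r$ for $i\geq 2$ traverse exactly the non-gaps up to $k$), the pole-order argument immediately yields both independence and exact spanning, completing the proof.
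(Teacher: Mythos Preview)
The paper does not prove this theorem at all: it is quoted verbatim as a result from \cite{SAKSD01} and used as a black box. There is therefore nothing in the paper to compare your proposal against.

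That said, your sketch is broadly aligned with how such results are proved in the literature (including in \cite{SAKSD01}): one builds an explicit basis of $L(kP_\infty)$ out of monomials in the tower generators $x_1,\dots,x_e$, uses the known pole orders $v_{P_\infty}(x_i)=-r^{e-i}$ to control membership in $L(kP_\infty)$, and evaluates at $P_\alpha$ by substituting $\alpha_i$ for $x_i$. Two points deserve caution. First, the functions $x_i$ for $i\geq 2$ are \emph{not} guaranteed to have their only pole at $P_\infty$; there are additional (totally ramified) places in the tower lying over the zeros of $x_{i-1}^{r-1}+1$, and the basis elements in \cite{SAKSD01} are not simply the monomials $\prod x_i^{a_i}$ but carefully corrected products. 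Your parenthetical ``adjusted to account for poles\dots possibly by multiplying\dots'' is exactly the nontrivial content of the cited paper, not a detail to be filled in later. Second, as you yourself flag, the exact determination of the Weierstrass semigroup at $P_\infty$ (so that the count of basis elements matches $\ell(kP_\infty)$ exactly, not just asymptotically) is the technical heart of the argument; without it you have only a lower bound on the span, not a basis. Both issues are resolved in \cite{SAKSD01}, which is why the present paper simply cites it.
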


In addition, it was shown in \cite{GX12} that the Laurent series expansion of $f\in L(k P_\infty)$ at the place $P_\infty$ in the uniformizer $x_e^{-1}$ can be computed efficiently.

\begin{lem}[\cite{GX12}]\label{lem_laurent}
Given $f\in L(k P_\infty)$ and $N\in\N^+$, the first $N$ coefficients $c_0, c_1, \dots, c_{N-1}\in \F_q$ of the Laurent series expansion
\[
f=c_0 T^{-k} + c_1 T^{-k+1} + c_2 T^{-k+2} + \cdots
\]
at the place $P_\infty$ in the uniformizer $T=x_e^{-1}$ can be found in time $\poly(k, r^e, N)$.
\end{lem}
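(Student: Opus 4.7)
The plan is to compute the Laurent expansion of each generator $x_i$ at $P_\infty$ in the uniformizer $T = x_e^{-1}$ to sufficient precision, and then substitute these expansions into the expression of $f$ in the explicit basis $B$ of $L(kP_\infty)$ guaranteed by Theorem~\ref{thm_explicitbase}.

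First, $x_e = T^{-1}$ has its Laurent series trivially. I would proceed by downward induction on $i$. Suppose the Laurent expansion of $x_{i+1}$ (and hence of $y := x_{i+1}^r + x_{i+1} \in \F_q((T))$) is known to some working precision $M$. Rewriting the defining recursion $x_{i+1}^r + x_{i+1} = x_i^r / (x_i^{r-1}+1)$, the element $x_i$ satisfies the algebraic equation
\[
P(X) := X^r - y\, X^{r-1} - y = 0.
\]
Since $v_{P_\infty}(x_i) = -r^{e-i}$, I would write $x_i = a_0 T^{-r^{e-i}} + a_1 T^{-r^{e-i}+1} + \cdots$ and determine the coefficients $a_0, a_1, \dots$ one at a time. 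Matching leading terms in $P(x_i) = 0$ recovers $a_0$ algebraically from the leading coefficient of $y$, and each subsequent matching equation is a linear equation in the next unknown coefficient. The nonvanishing of the ``leading'' scalar in these linear equations follows from the fact that $P'(X) = -(r-1)\,y\,X^{r-2}$ (using $r \equiv 0$ in $\F_q$, whose characteristic divides $r$) is nonzero along the solution branch, with $v_{P_\infty}(P'(x_i)) = -(r-1)r^{e-i} < \infty$; the invertibility uses the crucial observation that $r-1 \not\equiv 0$ in $\F_q$. Thus each coefficient of $x_i$ costs $\poly(r, M)$ field operations, and the full descent over $e-1$ levels costs $\poly(r^e, M)$.

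Once the Laurent expansions of $x_1, \dots, x_e$ are in hand, the second step is substitution. By Theorem~\ref{thm_explicitbase}, I can compute the basis $B$ of $L(kP_\infty)$ in time $\poly(k, r^e)$, and $f$ is presented as an $\F_q$-linear combination of elements of $B$, each of which is an explicit algebraic expression in $x_1, \dots, x_e$ of bounded complexity. Substituting the truncated Laurent series of the $x_j$'s into these expressions and performing truncated Laurent arithmetic yields the Laurent expansion of each basis element, and summing with the given coefficients produces the first $N$ coefficients of $f$'s expansion. The overall runtime is $\poly(k, r^e, N)$.

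The main obstacle I expect is twofold: (i) choosing the intermediate precision $M$ large enough that, after cancellations caused by the large negative valuations $-r^{e-i}$ of the inner $x_i$'s, the final substitution still produces the first $N$ coefficients of $f$ correctly --- a careful tracking of pole orders shows that $M = \poly(k, N)$ suffices; and (ii) rigorously justifying the Hensel-type iterative solution for $x_i$ from $x_{i+1}$, in particular that the characteristic-$p$ degeneracy of the naive derivative is harmless because the effective derivative $-(r-1)yX^{r-2}$ remains a unit times a known power of $T$, allowing every step of the coefficient extraction to be solved uniquely over $\F_q$.
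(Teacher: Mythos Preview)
The paper does not prove this lemma; it is quoted from \cite{GX12} as a black box. Your approach is essentially the natural one (and almost certainly the one taken in \cite{GX12}): compute the Laurent expansions of the tower generators $x_e, x_{e-1}, \dots, x_1$ at $P_\infty$ by a Hensel-type recursion on the defining equation, then substitute into the explicit basis of $L(kP_\infty)$.

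Two small points worth tightening. First, Theorem~\ref{thm_explicitbase} as stated here only asserts that \emph{some} basis is computable; for your substitution step you are implicitly using the stronger fact from \cite{SAKSD01} that the basis elements are monomials $x_1^{a_1}\cdots x_e^{a_e}$ with $\sum_i a_i r^{e-i}\le k$, so that plugging in truncated Laurent series is well-defined and the intermediate pole orders never exceed $k$. Second, when you solve $X^r - yX^{r-1} - y = 0$ for $x_i$, you should note explicitly that the root of valuation $-r^{e-i}$ is \emph{unique}: matching the lowest-order term gives $a_0^r = b_0 a_0^{r-1}$, hence $a_0 = b_0$ (the leading coefficient of $y$) since $a_0\neq 0$, and thereafter each coefficient is determined linearly as you describe. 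With these clarifications your sketch is correct and the precision bookkeeping (working with the normalized units $T^{r^{e-i}}x_i$ modulo $T^N$) confirms that $M = O(N)$ precision at each level suffices, giving the claimed $\poly(k, r^e, N)$ running time.
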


 \section{Algebraic-geometric codes with subfield evaluation points}\label{sec:AG}
 
In this section, we present the proof of Theorem \ref{thm:intro_ag_btt}, which is restated below.

\agbtt*

 It is based on $\ag$ codes with subfield evaluation points and closely follows \cite{GX13}.
Specifically, we use   the Garcia--Stichtenoth tower of function fields discussed in Subsection~\ref{sec_gstower}.
We note that this framework is generic and can be adapted to work for other families of function fields as well. For more details, see Remark~\ref{rem_generic} at the end of this section.


Our construction is given below as Definition~\ref{defn:AG-subfield}. It uses the constant field extension $K_e^{(m)}$ of $K_e$, where $K_e$ is the $e$th field in the  Garcia--Stichtenoth tower over $\F_q=\F_{r^2}$.
Recall that $K_e$  has a rational place $P_\infty$ and  $r^e(r-1)$ rational places $P_\alpha$ for $\alpha\in S_e$. 
And for each rational place $P$ of $K_e$, there is a corresponding $\F_{q^m}$-rational place $P^{(m)}$ of $K_e^{(m)}$.
 
 \begin{defi}[$\ag$ codes with subfield evaluation points from the Garcia--Stichtenoth tower]\label{defn:AG-subfield}
  Let $r>1$ be a prime power and $q=r^2$.
  Let $n, k, m, e\in\N^+$ be such that $k\leq n$ and $n\leq r^e(r-1)$.  
  Let $\alpha_1,\dots, \alpha_n\in S_e\subseteq\F_q^e$ be distinct and let $P_i=P_{\alpha_i}^{(m)}$ for $i\in [n]$.
The code \emph{$\mathsf{GS}_{q, m, e}(n,k)$ over $\F_{q^m}$ with evaluation points $P_1, \dots, P_n$} maps
 $f\in L((k-1)P_\infty^{(m)})\subseteq K_e^{(m)}$  to the codeword
$C_f:=(f(P_1), f(P_2), \dots, f(P_n)) \in (\F_{q^m})^n$.
\end{defi}

\paragraph{Explicitness.} By Theorem~\ref{thm_explicitbase}, a basis $B$ of $L((k-1)P_\infty)$ over $\F_q$ can be computed in time $\poly(k, r^e)=\poly(r^e)$. Suppose $B=\{\beta_1, \beta_2, \dots, \beta_b\}$, where $b=\ell((k-1)P_\infty)$. By Lemma~\ref{lem_dimbc}, $B$ is also a basis of $L((k-1)P_\infty^{(m)})$ over $\F_{q^m}$. So we may write $f\in L((k-1)P_\infty^{(m)})$ uniquely as a linear combination of $\beta_i$ over $\F_{q^m}$:
\begin{equation}\label{eq_functionrep}
f=\sum_{i=1}^b c_i \beta_i, \quad \text{where}~c_i\in \F_{q^m}.
\end{equation}
We represent $f$ by the coefficients $c_1,\dots, c_b$ in the basis $B$.
Note 
\[
f(P^{(m)})=\left(\sum_{i=1}^b c_i \beta_i\right)(P^{(m)})=\sum_{i=1}^b c_i \beta_i(P)
\]
 for any rational place $P$ of $K_e$.
So by Theorem~\ref{thm_explicitbase}, the encoding map  $\mathsf{Enc}: L((k-1)P_\infty^{(m)}) \to (\F_{q^m})^n$ sending $f$ to $C_f=(f(P_1), f(P_2), \dots, f(P_n))$ can be computed in time $\poly(k, r^e, n, m\log q)=\poly(r^e, m)$.

 
\paragraph{Rate and minimum distance.}
 
 Denote by $g$ the genus of $K_e$. The following theorem bounds the rate and the minimum distance of the code $\mathsf{GS}_{q, m, e}(n,k)$.
 
 \begin{thm}\label{thm_gs_rate}
 $\mathsf{GS}_{q, m, e}(n,k)$ is a linear code over the alphabet $\F_{q^m}$  with block length $n$.
 Its rate is at least $(k-g)/n$ and its minimum distance is  at least $n-k+1$. 
\end{thm}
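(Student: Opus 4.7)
The plan is to recognize $\mathsf{GS}_{q,m,e}(n,k)$ as a special case of the generic $\ag$ code construction $C(S,D)$ from \eqref{eq_AG}, instantiated in the constant field extension $K_e^{(m)}$ over $\F_{q^m}$, with divisor $D = (k-1) P_\infty^{(m)}$ and evaluation set $S = \{P_1, \dots, P_n\}$. Once this identification is made, the four claims follow almost mechanically from Theorem \ref{thm_generalAG}, Riemann's inequality (Theorem \ref{thm_riemann}), and Lemma \ref{lem_dimbc}, which lets us transfer the dimension computation from $K_e^{(m)}$ down to $K_e$.

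First I would verify the hypotheses of Theorem \ref{thm_generalAG}. Linearity over $\F_{q^m}$ and block length $n$ are immediate from the definition, since $L((k-1)P_\infty^{(m)})$ is an $\F_{q^m}$-vector space and evaluation at each $P_i$ is $\F_{q^m}$-linear. The condition $\supp(D) \cap S = \emptyset$ holds because in $K_e$ the place $P_\infty$ is the unique pole of $x_1$, whereas each $P_{\alpha_i}$ satisfies $v_{P_{\alpha_i}}(x_1) \geq 0$, so $P_\infty \neq P_{\alpha_i}$; passing to $K_e^{(m)}$ preserves distinctness since each rational place of $K_e$ has a unique extension to $K_e^{(m)}$. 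Moreover, since $P_\infty^{(m)}$ is $\F_{q^m}$-rational, we have $\deg(P_\infty^{(m)}) = 1$ in $\Div(K_e^{(m)})$, so $\deg(D) = k-1$.

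For the minimum distance, the hypothesis $k \leq n$ gives $\deg(D) = k-1 < n = \deg(D_S)$, so Theorem \ref{thm_generalAG} immediately yields a minimum distance of at least $n - \deg(D) = n - k + 1$. For the dimension, the same degree inequality together with Theorem \ref{thm_generalAG} gives $\dim_{\F_{q^m}} C(S,D) = \ell(D) = \ell((k-1) P_\infty^{(m)})$. By Lemma \ref{lem_dimbc}, applied to the divisor $(k-1)P_\infty$ of $K_e$ whose support consists of the rational place $P_\infty$, this equals $\ell((k-1)P_\infty)$ computed over $\F_q$ in $K_e$. Riemann's inequality then gives $\ell((k-1)P_\infty) \geq (k-1) - g + 1 = k - g$, so the rate is at least $(k-g)/n$.

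There is no real obstacle here; the theorem is a direct packaging of known facts about $\ag$ codes under constant field extension. The only points requiring care are the bookkeeping items above, namely checking that $P_\infty^{(m)} \notin \{P_1, \ldots, P_n\}$, that $\deg(P_\infty^{(m)}) = 1$ in $K_e^{(m)}$, and that Lemma \ref{lem_dimbc} applies so that the base change from $\F_q$ to $\F_{q^m}$ leaves $\ell((k-1)P_\infty)$ unchanged and lets us invoke Riemann's inequality in $K_e$ where the genus $g = g(K_e)$ is the quantity appearing in the statement.
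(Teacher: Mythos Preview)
Your proposal is correct and follows essentially the same approach as the paper: identify $\mathsf{GS}_{q,m,e}(n,k)$ with $C(S,D)$ for $D=(k-1)P_\infty^{(m)}$, apply Theorem~\ref{thm_generalAG} for the distance and dimension, then use Lemma~\ref{lem_dimbc} and Riemann's inequality to bound $\ell((k-1)P_\infty^{(m)})=\ell((k-1)P_\infty)\geq k-g$. You are simply more explicit than the paper about verifying the hypotheses ($\supp(D)\cap S=\emptyset$, $\deg(P_\infty^{(m)})=1$), which is fine.
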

\begin{proof}
Let $S=\{P_1, \dots, P_n\}$ and $D=(k-1)P_\infty^{(m)}$. 
Then $\mathsf{GS}_{q, m, e}(n,k)$ is simply the linear code $C(S, D)$ defined in \eqref{eq_AG} with the base field replaced by $\F_{q^m}$.
By Theorem~\ref{thm_generalAG}, its dimension is $\ell((k-1)P_\infty^{(m)})$ and  its  minimum distance is at least $n-\deg((k-1)P_\infty^{(m)})=n-k+1$. 
By Lemma~\ref{lem_dimbc} and Riemann's inequality (Theorem~\ref{thm_riemann}), we have
$\ell((k-1)P_\infty^{(m)})=\ell((k-1)P_\infty)  \geq  k-g$.
So the rate of $\mathsf{GS}_{q, m, e}(n,k)$ is at least $(k-g)/n$.
 \end{proof}

\paragraph{The embedding $\phi$.} 
To list-decode the code  $\mathsf{GS}_{q, m, e}(n,k)$, we need an embedding (i.e., injective  linear  map)
\[
\phi: L((k-1)P_\infty^{(m)})\to \F_{q^m}^k.
\] 
It is defined to be the $\F_{q^m}$-linear map that outputs the first $k$ coefficients of the Laurent series expansion at the place $P_\infty^{(m)}$  in the uniformizer $T:=x_e^{-1}$.
That is, if the Laurent series expansion of $f\in L((k-1)P_\infty^{(m)})$ at $P_\infty^{(m)}$ in $T$ is
\[
f=f_0 T^{-(k-1)} + f_1 T^{-(k-1)+1} +  f_2 T^{-(k-1)+2} + \cdots,
\]
with the coefficients $f_i\in\F_{q^m}$, then $\phi(f)=(f_0, f_1, \dots, f_{k-1})$.

The kernel of $\phi$ is $L(-P_\infty^{(m)})=\{0\}$. So $\phi$ is indeed an embedding.
Representing a function $f\in L((k-1)P_\infty^{(m)})$ in the form \eqref{eq_functionrep}, we can compute $\phi(f)$ from $f$ in time $\poly(r^e, m)$  by Lemma~\ref{lem_laurent}.

\paragraph{List decoding.}
 
Next, we show that for properly chosen parameters, the code $\mathsf{GS}_{q, m, e}(n,k)$ is  list decodable up to the relative distance $1-R-\epsilon$ and that the image of the output list under the embedding $\phi$ is contained in an affine shift of a low-dimensional BTT subspace.

\begin{thm}\label{thm_ag}
  Let $\epsilon >0$ and $R \in (0,1-\epsilon)$.
  Let $e\in \N^+$ be a growing parameter.
  Let $r\geq 4/\epsilon+1$ be a prime power  and $q=r^2$.
  Choose $n,m,k\in\N^+$ such that $m \geq 4/\epsilon^2$, $4r^e/\epsilon \leq n\leq  (r-1)r^e$ and $k= \lceil R n + r^e\rceil\leq n$.
  Then $\mathsf{GS}_{q, m, e}(n,k)$ has rate at least $R$. And it can be list decoded from up to a $(1-R-\epsilon)$-fraction of errors with a list  of candidate messages whose images under $\phi$ (viewed as length $k m$ vectors over $\F_q$)  are contained in an affine shift of a $(k,m,\epsilon m)$-BTT subspace $V$ over $\F_q$.
Moreover, a basis for $V$ and the affine shift can be found in time $\poly(n, m)$ given the received word.
\end{thm}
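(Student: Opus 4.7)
\medskip

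\noindent\textbf{Proof plan for Theorem \ref{thm_ag}.} The proof will closely mirror the Reed--Solomon warm-up (Theorem~\ref{thm_rs}), with Riemann--Roch replacing polynomial degree counting, Laurent series at $P_\infty^{(m)}$ in the uniformizer $T := x_e^{-1}$ replacing polynomial coefficient expansion, and with the crucial bookkeeping changes induced by the genus $g\leq r^e$. Concretely, I will set $s = \lceil 1/\epsilon\rceil + 1$ (so that $s-1 \leq \epsilon m$ whenever $m\geq 4/\epsilon^2$), pick $t = d+k$ to guarantee the ``too many zeros'' contradiction as in Lemma~\ref{lem_rs}, and choose $d$ so that on the one hand $d+k = t \leq (R+\epsilon)n$ (which, using $k = \lceil Rn + r^e\rceil$ and $n \geq 4r^e/\epsilon$, forces the fraction of errors corrected to be at least $1-R-\epsilon$), and on the other hand $(s+1)(d+1-g) + (k-g) > n$, so that Riemann--Roch supplies more degrees of freedom than the $n$ interpolation constraints. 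The hypothesis $n\geq 4r^e/\epsilon$ together with $g\leq r^e$ is exactly what buys the slack needed to simultaneously satisfy both inequalities. The rate bound $R$ then follows immediately from Theorem~\ref{thm_gs_rate} and $k - g \geq Rn$.

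\medskip

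\noindent With these parameters fixed, I would construct a nonzero $Q = A_0 + A_1 Y_1 + \cdots + A_s Y_s$ with $A_0 \in L((d+k-1)P_\infty^{(m)})$ and $A_\ell \in L(d P_\infty^{(m)})$ for $\ell \geq 1$, subject to the $n$ homogeneous $\F_{q^m}$-linear constraints
\[
A_0(P_i) + A_1(P_i)y_i + A_2(P_i)y_i^\sigma + \cdots + A_s(P_i)y_i^{\sigma^{s-1}} = 0, \qquad i=1,\dots,n,
\]
where $P_i = P_{\alpha_i}^{(m)}$. Existence and $\poly(\log q, m, n)$ construction of $Q$ follow by solving this linear system (using the basis of $L(jP_\infty)$ from Theorem~\ref{thm_explicitbase} and Lemma~\ref{lem_dimbc}). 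The analogue of Claim~\ref{clm_rs_funceqn} then says: for any $f \in L((k-1)P_\infty^{(m)})$ whose codeword $C_f$ agrees with the received word at $t$ places, the function $Q^* := A_0 + A_1 f + \cdots + A_s f^{\sigma^{s-1}}$ lies in $L((d+k-1)P_\infty^{(m)})$ and vanishes at those $t$ places $P_i$ (using that each $P_i$ is fixed by $\sigma$, since it is defined over $\F_q$, so $f^{\sigma^{\ell-1}}(P_i) = (f(P_i))^{\sigma^{\ell-1}}$); since $t > d+k-1$ exceeds the maximum number of zeros of a nonzero element of that Riemann--Roch space, we conclude $Q^* = 0$.

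\medskip

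\noindent The final step is to convert $Q^* = 0$ into a linear system in $(f_0,\dots,f_{k-1}) = \phi(f)$ with BTT structure. Writing the Laurent expansions at $P_\infty^{(m)}$ in $T$,
\[
f = \sum_{j\geq 0} f_j T^{-(k-1)+j}, \qquad A_0 = \sum_{i\geq 0} a_{0,i} T^{-(d+k-1)+i}, \qquad A_\ell = \sum_{i\geq 0} a_{\ell,i} T^{-d+i},
\]
and using that $T \in K_e$ is fixed by $\sigma$ so that $f^{\sigma^{\ell-1}} = \sum_j f_j^{\sigma^{\ell-1}} T^{-(k-1)+j}$, the coefficient of $T^{-(d+k-1)+i}$ in $Q^* = 0$ is precisely
\[
a_{0,i} + \sum_{\ell=1}^s \sum_{j=0}^{i} a_{\ell, i-j} f_j^{\sigma^{\ell-1}} = 0.
\]
Reading this off for $i = 0,1,\dots, k-1$ gives exactly the Reed--Solomon system \eqref{eq_rs_eqns}, and these first $k$ equations only involve $f_0,\dots,f_{k-1}$. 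After normalizing $Q$ by dividing out the largest power of $T$ dividing every $A_\ell$, we arrange $a_{\ell,0}\neq 0$ for some $\ell \in [s]$ (exactly as in Section~\ref{subsec_rs_V}). Then Claim~\ref{clm_rs_btt_ker} together with Lemma~\ref{lem_btt_ker_equiv} shows $\phi(f)$ (as an $\F_q$-vector of length $km$) lies in an affine shift of a $(k, m, m - r)$-$\btt$ subspace with $m-r \leq s-1 \leq \epsilon m$, which is a fortiori a $(k,m,\epsilon m)$-$\btt$ subspace. Efficiency ($\poly(n, m)$) follows because the Laurent coefficients $a_{\ell, i}$ are computable by Lemma~\ref{lem_laurent}, and the affine shift and a basis for $V$ are then obtained by standard linear algebra, exactly as in the proof of Lemma~\ref{lem_rs}.

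\medskip

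\noindent The main obstacle is the parameter bookkeeping in the first step: unlike the RS setting where degrees match exactly, here Riemann--Roch introduces a $-g$ loss per space and we must ensure $(s+1)(d+1-g) + (k-g) > n$ while keeping $d+k \leq (R+\epsilon)n$. The hypothesized lower bound $n \geq 4r^e/\epsilon$, together with $g \leq r^e$ and the choice $s = \lceil 1/\epsilon\rceil + 1$, is what makes both inequalities compatible. Everything downstream is essentially a transcription of the RS argument, exploiting that $T = x_e^{-1}$ is a uniformizer of $P_\infty^{(m)}$ lying in the base field $K_e$ and hence is $\sigma$-fixed.
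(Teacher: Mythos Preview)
Your plan matches the paper's proof almost exactly: reduce to an AG analogue of Lemma~\ref{lem_rs} (this is Lemma~\ref{lem_ag} in the paper), interpolate $Q = A_0 + \sum_\ell A_\ell Y_\ell$ with $A_0 \in L((d+k-1)P_\infty^{(m)})$ and $A_\ell \in L(dP_\infty^{(m)})$, derive $Q(f,f^\sigma,\ldots,f^{\sigma^{s-1}})=0$ for any $f$ with agreement $\geq t$, and extract the BTT system from the Laurent coefficients at $P_\infty^{(m)}$ in $T = x_e^{-1}$.

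The one real problem is your choice $s = \lceil 1/\epsilon\rceil + 1$: it is too small for the two inequalities to be simultaneously feasible. Taking $d$ minimal for the interpolation constraint and $t=d+k$, one finds
\[
n-t \;\approx\; (1-R)n\Bigl(1-\tfrac{1}{s+1}\Bigr) - 2r^e,
\]
and with $r^e\leq \epsilon n/4$ and $s+1\approx 1/\epsilon$ this is only about $(1-R)n - (1-R)\epsilon n - \tfrac{\epsilon}{2}n$, which falls short of $(1-R-\epsilon)n$ whenever $R<1/2$. The paper fixes this by setting $\epsilon'=\epsilon/4$ and $s=1/\epsilon'+1=4/\epsilon+1$, so that $\tfrac{1}{s+1}\leq \epsilon'=\epsilon/4$ absorbs both the $(n-k)/(s+1)$ loss and the genus loss $g\leq r^e\leq \epsilon n/4$; one still has $s-1=4/\epsilon\leq \epsilon m$ exactly because $m\geq 4/\epsilon^2$. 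Two minor points: your Riemann--Roch count should be $(s+1)(d-g+1)+(k-1)$ rather than $(s+1)(d+1-g)+(k-g)$ (you subtract $g$ once too often, since $\ell((d+k-1)P_\infty^{(m)})\geq d+k-g$ already accounts for the genus in the $A_0$ space); and ``dividing out the largest power of $T$'' is not literally available here since $T=x_e^{-1}$ has its own zeros elsewhere on the curve --- the paper instead simply shifts the Laurent indices, setting $\hat a_{\ell,i}=a_{\ell,i+u}$ for the minimal $u$ with some $a_{\ell_0,u}\neq 0$.
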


The above theorem is a consequence of the following lemma.

\begin{lem}\label{lem_ag}
  Let $n, k,m,e,r,q\in\N^+$ and $\mathsf{GS}_{q, m, e}(n,k)$ be as in Definition~\ref{defn:AG-subfield}. Let $s\in [m]$ and $t, d\in\N^+$ be parameters, 
 satisfying that
 \begin{equation}\label{eq_ag_cond1}
(s+1)(d-g+1)+k-1>n
\end{equation}
and
\begin{equation}\label{eq_ag_cond2}
 t>d+k-1
\end{equation}
where $g$ is the genus of $K_e$.
  Then $\mathsf{GS}_{q, m, e}(n,k)$ can be list decoded from agreement at least $t$ with a list of candidate messages whose images under $\phi$ (viewed as length $k m$ vectors over $\F_q$) are contained in an affine shift of a $(k,m,s-1)$-BTT subspace $V$ over $\F_q$. Moreover, a basis for $V$ 
   and the affine shift
  can be found in time $\poly(r^e, m)$.
\end{lem}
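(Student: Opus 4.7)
The plan is to mimic the RS argument of Lemma \ref{lem_rs}, replacing polynomials with functions in appropriate Riemann--Roch spaces of $K_e^{(m)}$, and using the local Laurent series expansion at $P_\infty^{(m)}$ in the uniformizer $T=x_e^{-1}$ to extract ``coefficients.'' Specifically, let $\mathbf{y}=(y_1,\dots,y_n)\in\F_{q^m}^n$ be the received word. First I would construct a nonzero functional
\[
Q = A_0 + A_1 Y_1 + A_2 Y_2 + \cdots + A_s Y_s,
\]
where $A_0\in L((d+k-1)P_\infty^{(m)})$ and $A_1,\dots,A_s\in L(d\,P_\infty^{(m)})$, subject to the $n$ homogeneous linear constraints
\[
A_0(P_i) + A_1(P_i)\,y_i + A_2(P_i)\,y_i^\sigma + \cdots + A_s(P_i)\,y_i^{\sigma^{s-1}} = 0, \qquad i=1,\dots,n.
\]
By Lemma~\ref{lem_dimbc} and Riemann's inequality (Theorem~\ref{thm_riemann}), the total $\F_{q^m}$-dimension of the coefficient space is at least $(d+k-g) + s(d-g+1)=(s+1)(d-g+1)+k-1$, which exceeds $n$ by \eqref{eq_ag_cond1}, so a nonzero $Q$ exists and can be found by solving a linear system in time $\poly(r^e,m)$ using Theorem~\ref{thm_explicitbase}.

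Next I would show, just as in Claim~\ref{clm_rs_funceqn}, that any $f\in L((k-1)P_\infty^{(m)})$ whose codeword agrees with $\mathbf{y}$ in at least $t$ positions satisfies the functional equation
\[
Q^* := A_0 + A_1 f + A_2 f^\sigma + \cdots + A_s f^{\sigma^{s-1}} = 0
\]
in $K_e^{(m)}$. Indeed, $Q^*\in L((d+k-1)P_\infty^{(m)})$ (since $\sigma$ fixes $P_\infty^{(m)}$ and hence $f^{\sigma^{\ell}}\in L((k-1)P_\infty^{(m)})$), so $Q^*$ has at most $d+k-1$ zeros; but the $t$ agreement positions supply at least $t>d+k-1$ zeros by \eqref{eq_ag_cond2}, forcing $Q^*=0$.

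The main step is then to translate the functional equation $Q^*=0$ into a linear system on the Laurent coefficients $(f_0,f_1,\dots,f_{k-1})=\phi(f)$. Since $\sigma$ fixes the uniformizer $T=x_e^{-1}$ (as $x_e\in F\subseteq K_e^{(m)}$ is fixed by $\sigma$), expanding $f = \sum_{j\geq 0} f_j T^{j-(k-1)}$ gives $f^{\sigma^\ell}=\sum_{j\geq 0} f_j^{\sigma^\ell} T^{j-(k-1)}$. Expanding each $A_\ell=\sum_i a_{\ell,i}T^{i-d}$ (times $T^{-(k-1)}$ for $\ell=0$), the coefficient of $T^{-(d+k-1)+i}$ in $Q^*$ takes the convolution form
\[
a_{0,i} + \sum_{j=0}^{i}\sum_{\ell=1}^{s} a_{\ell,i-j}\,f_j^{\sigma^{\ell-1}} = 0, \qquad i=0,1,\dots,k-1,
\]
exactly as in \eqref{eq_rs_eqns}. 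Here I may assume, by dividing $Q$ by the largest possible power of $T$ (equivalently, reducing by $P_\infty^{(m)}$), that $a_{\ell,0}\neq 0$ for some $\ell\in[s]$. Applying the $\F_q$-linear bijection $\phi:\F_{q^m}\to\F_q^m$ as in Claim~\ref{clm_rs_btt_ker}, these equations yield $M\cdot \bar f = -\bar a$ for a $(k,r,m)$-block-lower-triangular-Toeplitz matrix $M$ over $\F_q$ whose diagonal block $M_0$ has rank $r\geq m-s+1$ (since the $\F_q$-linear map $b\mapsto\sum_\ell a_{\ell,0}b^{\sigma^{\ell-1}}$ is a $q$-linearized polynomial of $q$-degree $\leq s-1$, with kernel of dimension $\leq s-1$). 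Pruning rows to make $M_0$ have full rank, I obtain a genuine $(k,r,m)$-BTT matrix; applying Lemma~\ref{lem_btt_ker_equiv} shows $\ker(M)$ is a $(k,m,m-r)$-BTT subspace, hence $(k,m,s-1)$-BTT since $m-r\leq s-1$. The candidate coefficient vectors then lie in an affine shift of this subspace, and a basis plus the affine shift can be obtained in time $\poly(r^e,m)$ by Lemma~\ref{lem_laurent} to compute all Laurent expansions and standard linear algebra.

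The main obstacle I expect is the careful handling of the Laurent expansion of the functional equation: one must verify that $\sigma$ commutes with expansion in $T$ (true because $T=x_e^{-1}\in F$), that the normalization $a_{\ell,0}\neq 0$ can be achieved by dividing by a power of $T$ without leaving the prescribed Riemann--Roch spaces, and that the block-Toeplitz structure survives the passage through $\phi$. Once these bookkeeping points are verified, the rest of the proof proceeds exactly as in the RS case.
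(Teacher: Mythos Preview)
Your proposal is correct and follows essentially the same approach as the paper: construct $Q$ via linear algebra using Riemann's inequality, derive the functional equation $Q^*=0$ for any candidate $f$, Laurent-expand at $P_\infty^{(m)}$ in $T=x_e^{-1}$ to obtain the same convolution-type system as in the RS case, and then invoke Claim~\ref{clm_rs_btt_ker} and Lemma~\ref{lem_btt_ker_equiv}. The only cosmetic difference is that the paper handles the normalization by shifting indices ($\hat a_{\ell,i}=a_{\ell,i+u}$) rather than literally dividing $Q$ by a power of $T$, which sidesteps your stated concern about staying inside the Riemann--Roch spaces; and the vanishing of $Q^*$ is phrased via $\deg((d+k-1)P_\infty^{(m)}-D)<0$ rather than zero-counting, though the two are equivalent here.
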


Before we prove the above lemma, we show how it implies Theorem \ref{thm_ag}.

 \begin{proof}[Proof of Theorem \ref{thm_ag}]
 We know $g\leq r^e$. So the rate of  $\mathsf{GS}_{q, m, e}(n,k)$ is at least $(k-g)/n\geq (k-r^e)/n\geq R$ by Theorem~\ref{thm_gs_rate}.

Let $\epsilon'=\epsilon/4$. By assumption, we have $r^e/n\leq \epsilon'$.
Let $s= \frac {1}{\epsilon'} +1$, let $d=  \frac{n-k+2}{s+1}+g-1$ so that \eqref{eq_ag_cond1} is satisfied, and 
let $t=d+k$ so that \eqref{eq_ag_cond2} is satisfied. Then with this setting of parameters, we know from Lemma~\ref{lem_ag} that $\mathsf{GS}_{q, m, e}(n,k)$ can be list decoded from agreement  $t$, or equivalently, from up to
\begin{align*}
n-t &= n-d-k = n- k +1 -g- \frac{n-k+2}{s+1} \geq \left(1-\frac{1}{s+1}\right)(n- k+1)  -g- \frac{1}{s+1} \\
&\geq (1-\epsilon')(n-Rn-r^e) - r^e - \epsilon'\\
&\geq (1-R-4\epsilon') n\\
&= (1-R-\epsilon) n
\end{align*}
errors. 
Moreover, as $m \geq 4/\epsilon^2=1/(\epsilon\cdot \epsilon')$, we have that $V$ is a $(k,m,s-1)$-BTT subspace for $s-1 = \frac {1}{\epsilon'} \leq \epsilon m$.  And a basis for $V$ as well as the affine shift can be found in time $\poly(r^e, m)=\poly(n, m)$ by  Lemma~\ref{lem_ag}.
\end{proof}

Theorem~\ref{thm:intro_ag_btt} follows easily from Theorem~\ref{thm_ag}.

\begin{proof}[Proof of Theorem~\ref{thm:intro_ag_btt}]
 Fix a prime power $r=O(1/\epsilon)$ such that $r\geq 4/\epsilon+1$. Let $q=r^2$ and  $m=\lceil 4/\epsilon^2\rceil$.
 Choose the family of codes to be
 \[
 \{  \mathsf{GS}_{q, m, e}(n,k): e\in \N^+,    4r^e/\epsilon \leq n\leq  (r-1)r^e, k= \lceil R n + r^e\rceil \}.
 \]
 Then Theorem~\ref{thm:intro_ag_btt} follows from Theorem~\ref{thm_ag}. 
\end{proof}

So it remains to prove Lemma~\ref{lem_ag}. We prove this lemma in the next two subsections.

 \subsection{The polynomial $Q$}\label{subsec_ag_Q}

In what follows, let $\sigma$ be the Frobenius automorphism $a\mapsto a^q$ of $\F_{q^m}$ over $\F_q$. 
It uniquely extends to an automorphism of $K_e^{(m)}$ that fixes $K_e$, which we also call $\sigma$ by an abuse of notation.
The automorphism $\sigma$ fixes $P^{(m)}$ for any rational place $P$ of $K_e$.
For $f\in K_e^{(m)}$, denote by $f^\sigma$ the element $\sigma(f)$.


Suppose that $\mathbf{y}=(y_1,y_2,\dots,y_n)\in (\F_{q^m})^n$ is a received word. 
We let $Q$ be a nonzero multivariate polynomial in
$K_e^{(m)}[Y_1, Y_2, \dots, Y_s]$ of the form
\[
Q=A_0 + A_1 Y_1 + A_2 Y_2 + \dots + A_s Y_s
\]
where $A_0, A_1, \dots, A_s\in  K_e^{(m)}$, $A_0\in L((d+k-1)P_\infty^{(m)})$, and $A_i\in  L(d P_\infty^{(m)})$ for $i=1, 2, \dots, s$.
We also require the coefficients $A_i$ to satisfy the constraint
\begin{equation}\label{eq_ag_constraint}
A_0(P_i) + A_1(P_i) y_i + A_2(P_i) y_i^\sigma + \dots + A_s(P_i) y_i^{\sigma^{s-1}}=0
\end{equation}
for all $i = 1,\ldots,n$, where $P_1,\dots, P_n$ are the evaluation points.

We first claim that such a nonzero polynomial $Q$ exists and can be computed efficiently. To see this, write $A_0$ as a vector over $\F_{q^m}$  with $\ell((d+k-1)P_\infty^{(m)})$ coordinates, and write $A_i$ as a vector over $\F_{q^m}$  with $\ell(d P_\infty^{(m)})$ coordinates for $i=1,\dots, n$.
Think of the coordinates of these vectors as unknowns. 
This gives
\begin{align*}
\ell((d+k-1)P_\infty^{(m)}) + s\cdot \ell(d P_\infty^{(m)})
&\geq (d+k-1)-g+1 + s (d-g+1)\\
&= (s+1)(d-g+1)+(k-1)
\end{align*}
 unknowns in total, where the first inequality above follows from Riemann's inequality (Theorem~\ref{thm_riemann}).
On the other hand, \eqref{eq_ag_constraint} gives $n$ homogeneous linear constraints in these unknowns over $\F_{q^m}$.
 By \eqref{eq_ag_cond1}, the number of unknowns is greater than the number of linear constraints which guarantees the existence of a nonzero solution $Q$. Moreover, we can  find $Q$ in time $\poly(r^e, m)$
 by constructing and then solving the system of linear equations represented by \eqref{eq_ag_constraint}.
 (Note $d$ is polynomial in $r^e$ since  $d<t$ by \eqref{eq_ag_cond2} and the agreement $t$ is bounded by $n\leq (r-1)r^e$.)

Next, we show that $Q$ gives a functional equation that any $f$ that has sufficiently large agreement with the received word $\mathbf{y}$ needs to satisfy.

\begin{claim}\label{clm_ag_funceqn}
Let $f\in  L((k-1)P_\infty^{(m)})$. Suppose $\mathbf{y}$ agrees with the codeword $C_f=(f(P_1), f(P_2), \dots, f(P_n))$ in at least $t$ coordinates.
Then $f$ satisfies the functional equation
\begin{equation}\label{eq_ag_funceqn}
Q(f, f^{\sigma},\dots, f^{\sigma^{s-1}})=A_0+A_1 f+ A_2 f^{\sigma}+\dots+A_s f^{\sigma^{s-1}}=0.
\end{equation}
\end{claim}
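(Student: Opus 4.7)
The plan is to follow the template of the RS proof (Claim~\ref{clm_rs_funceqn}), replacing polynomial-degree bounds with their function-field analogues phrased through pole divisors at $P_\infty^{(m)}$ and zero-counting inside the Riemann--Roch space $L((d+k-1)P_\infty^{(m)})$. I will set
$$Q^* := A_0 + A_1 f + A_2 f^{\sigma} + \cdots + A_s f^{\sigma^{s-1}} \in K_e^{(m)},$$
and show $Q^* = 0$ by (i) proving $Q^* \in L((d+k-1)P_\infty^{(m)})$, and then (ii) exhibiting strictly more zeros of $Q^*$ than this Riemann--Roch space can accommodate for a nonzero function.

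For (i), I will use the fact recalled in Subsection~\ref{sec_constantext} that $\sigma$ extends to an automorphism of $K_e^{(m)}$ fixing $K_e$ pointwise, and in particular fixes every place of $K_e^{(m)}$ lying over a rational place of $K_e$. Hence $\sigma$ fixes $P_\infty^{(m)}$, which implies $v_{P_\infty^{(m)}}(f^{\sigma^j}) = v_{P_\infty^{(m)}}(f) \geq -(k-1)$ and that $f^{\sigma^j}$ has no other poles, for every $j \geq 0$. Combined with $A_0 \in L((d+k-1)P_\infty^{(m)})$ and $A_i \in L(dP_\infty^{(m)})$ for $i \geq 1$, the strong triangle inequality for valuations yields $Q^* \in L((d+k-1)P_\infty^{(m)})$.

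For (ii), fix an index $i \in [n]$ for which $y_i = f(P_i)$. Since $\sigma$ also fixes $P_i = P_{\alpha_i}^{(m)}$, it restricts to an automorphism of the local ring at $P_i$ preserving the maximal ideal, and the induced action on the residue field $\F_{q^m}$ is the original Frobenius $a \mapsto a^q$. Consequently $f^{\sigma^j}(P_i) = (f(P_i))^{\sigma^j} = y_i^{\sigma^j}$ for all $j$, so evaluating $Q^*$ at $P_i$ and invoking~\eqref{eq_ag_constraint} gives $Q^*(P_i) = 0$. Hence $Q^*$ vanishes at at least $t$ distinct rational places of $K_e^{(m)}$, all different from $P_\infty^{(m)}$.

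To conclude, if $Q^* \neq 0$, then $\div(Q^*) + (d+k-1)P_\infty^{(m)}$ is an effective divisor of degree $d+k-1$, so $Q^*$ can vanish at at most $d+k-1$ places distinct from $P_\infty^{(m)}$ (counted with multiplicity). Combined with (ii) and~\eqref{eq_ag_cond2}, this contradicts $t > d+k-1$, forcing $Q^* = 0$. The only mildly subtle point will be the compatibility between the action of $\sigma$ on $K_e^{(m)}$ and the Frobenius action on each residue field at a $\sigma$-fixed place; once this identification is made explicit, the remainder is a direct function-field transcription of the RS argument.
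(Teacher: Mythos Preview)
Your proposal is correct and follows essentially the same approach as the paper: define $Q^*$, show $Q^*\in L((d+k-1)P_\infty^{(m)})$ using that $\sigma$ fixes $P_\infty^{(m)}$, verify $Q^*(P_i)=0$ at each agreement point via $(f(P_i))^{\sigma^j}=f^{\sigma^j}(P_i)$ and~\eqref{eq_ag_constraint}, and conclude by a degree/zero-count argument. The only cosmetic difference is that the paper phrases the final step as $Q^*\in L((d+k-1)P_\infty^{(m)}-D)$ with $\deg((d+k-1)P_\infty^{(m)}-D)<0$, while you phrase it via the effective divisor $\div(Q^*)+(d+k-1)P_\infty^{(m)}$ having degree $d+k-1<t$; these are equivalent.
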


\begin{proof}
Define
\[
Q^*=A_0+A_1 f+ A_2 f^{\sigma}+\dots+A_s f^{\sigma^{s-1}}\in K_e^{(m)}.
\]
We want to prove that $Q^*=0$.
As $f\in  L((k-1)P_\infty^{(m)})$,  $A_0\in L((d+k-1)P_\infty^{(m)})$, $A_i\in  L(d P_\infty^{(m)})$ for $i=1, 2, \dots, s$, and $\sigma$ fixes $P_\infty^{(m)}$, we know $Q^*\in L((d+k-1)P_\infty^{(m)})$.

Suppose that $\mathbf{y}$ agrees with $C_f$ in the $i$-th symbol for some $i\in [n]$,   i.e., $y_i=f(P_i)$.
By \eqref{eq_ag_constraint}, we have
\begin{align*}
0&=A_0(P_i) + A_1(P_i) y_i + A_2(P_i) y_i^\sigma + \dots + A_s(P_i) y_i^{\sigma^{s-1}}\\
&=A_0(P_i) + A_1(P_i) f(P_i) + A_2(P_i) (f(P_i))^\sigma + \dots + A_s(P_i) (f(P_i))^{\sigma^{s-1}}\\
&=A_0(P_i) + A_1(P_i) f(P_i) + A_2(P_i) f^\sigma(P_i) + \dots + A_s(P_i) f^{\sigma^{s-1}}(P_i)\\
&=(A_0+A_1 f+ A_2 f^{\sigma}+\dots+A_s f^{\sigma^{s-1}})(P_i)\\
&=Q^*(P_i).
\end{align*}
The third equality uses the fact that $(f(P_i))^\sigma=f^\sigma(P_i)$, which holds since $P_i=P_{\alpha_i}^{(m)}$ is fixed by $\sigma$.

As  $\mathbf{y}$ and $C_f$ agree in at least $t$ symbols, the above argument shows that there exist $i_1,\dots, i_t\in [n]$ such that $Q^*$ vanishes at $P_{i_1},\dots, P_{i_t}$.
Let $D=\sum_{j=1}^t P_{i_j}$.
Then $Q^*\in L((d+k-1)P_\infty^{(m)}-D)$.
On the other hand, the degree of the divisor $(d+k-1)P_\infty^{(m)}-D$ is $d+k-1-t$, which is less than zero by \eqref{eq_ag_cond2}. So $L((d+k-1)P_\infty^{(m)}-D)=\{0\}$. This implies $Q^*=0$.
\end{proof}

\subsection{The BTT subspace $V$}\label{subsec_ag_V}
Next, we show that the functional equation (\ref{eq_ag_funceqn}), given by Claim \ref{clm_ag_funceqn} above, implies that the image of the list of candidate messages under the embedding $\phi$ is contained in an affine shift of a low-dimensional BTT subspace. We start by expanding the functional equation (\ref{eq_ag_funceqn}) in terms of the coefficients of the polynomials $f$ and $A_0,A_1,\ldots,A_s$.

Suppose that $f\in  L((k-1)P_\infty^{(m)})$ agrees with 
$\mathbf{y}$ in at least $t$ coordinates.
Consider the Laurent series expansion of $f$ at $P_\infty^{(m)}$ in the uniformizer $T=x_e^{-1}$:
\[
f=\sum_{i=0}^{\infty} f_i T^{-(k-1)+i}
\] 
where the coefficients $f_i$ are in $\F_{q^m}$. 
As $T\in K_e$ is fixed by $\sigma$, we have
$f^{\sigma^j}=\sum_{i=0}^{\infty} f_i^{\sigma^j} T^{-(k-1)+i}$ for any integer $j$.
By definition, $\phi(f)=(f_0, f_1, \dots, f_{k-1})$.

Similarly, expand $A_0\in L((d+k-1)P_\infty^{(m)})$ and $A_1,\dots, A_s\in  L(d P_\infty^{(m)})$ 
as Laurent series at $P_\infty^{(m)}$ in the uniformizer $T$:
\begin{equation}\label{eq_laurent}
A_0=\sum_{i=0}^{\infty} a_{0,i} T^{-(d+k-1)+ i} \quad\text{and}\quad A_\ell=\sum_{i=0}^{\infty} a_{\ell,i} T^{-d+i}, \quad \ell=1,\dots,s
\end{equation}
where the coefficients $a_{\ell,i}$ are in $\F_{q^m}$ for $\ell=0,1,\dots,s$ and $i\in\N$.
Choose the largest integer $u\geq 0$ such that 
there exists $\ell_0\in \{0,1,\dots,s\}$ satisfying $a_{\ell_0,u}\neq 0$.
By \eqref{eq_ag_funceqn}, we may assume $\ell_0 \in [s]$.
 (Otherwise, we have $a_{0,u}\neq 0$ and $a_{1,u}=\dots=a_{s,u}=0$. Then the LHS of \eqref{eq_ag_funceqn}, which we denote by $Q^*$, satisfies $v_{P_\infty^{(m)}}(Q^*)=-(d-k+1)+u<+\infty$, contradicting \eqref{eq_ag_funceqn}.)
 Then we have $0\neq A_{\ell_0}\in L((d-u)P_\infty^{(m)})$, which implies $u\leq d$.
 

Let $\hat{a}_{\ell, i}=a_{\ell, i+u}$ for $\ell=0,1,\dots, s$ and $i\in\N$. So $\hat{a}_{\ell_0, 0}=a_{\ell_0, u}\neq 0$. We may rewrite \eqref{eq_laurent} as
\[
A_0=\sum_{i=0}^{\infty} \hat{a}_{0,i} T^{-(d+k-1)+u+ i} \quad\text{and}\quad A_\ell=\sum_{i=0}^{\infty} \hat{a}_{\ell,i} T^{-d+u+i}, \quad \ell=1,\dots,s.
\] 

With the notations above, \eqref{eq_ag_funceqn} becomes
\begin{align*}
0&=\sum_{i=0}^{\infty} \hat{a}_{0,i} T^{-(d+k-1)+u+i} +\sum_{\ell=1}^s \left( \sum_{i=0}^{\infty} \hat{a}_{\ell,i} T^{-d+u+i} \right) \left(  \sum_{i=0}^{\infty} f_i^{\sigma^{\ell-1}} T^{-(k-1)+i}\right) \\
&=  \sum_{i=0}^{\infty}\left(\hat{a}_{0,i}+\sum_{\ell=1}^s \sum_{j=0}^{i} \hat{a}_{\ell,i-j} f_{j}^{\sigma^{\ell-1}} \right) T^{-(d+k-1)+u+i}.
\end{align*}
So we obtain the equations
\begin{equation}\label{eq_ag_eqns}
 \sum_{j=0}^{i} \sum_{\ell=1}^s \hat{a}_{\ell,i-j} f_{j}^{\sigma^{\ell-1}}  = - \hat{a}_{0,i}, \qquad i=0,1, \dots, k-1
\end{equation}
where $\hat{a}_{\ell,0}\neq 0$ for some $\ell\in [s]$.

By Claim~\ref{clm_rs_btt_ker}, the solution set of all $\phi(f)=(f_0,f_1,\ldots,f_{k-1})$ satisfying \eqref{eq_ag_eqns} is contained in an affine shift of the kernel of a $(k,r,m)$-BTT matrix $M$ over $\F_q$  for some $r \geq m-s+1$, and $M$ can be constructed in time $\poly(\log q,m,n)$ given the coefficients $\hat{a}_{\ell, i}$. 
By Lemma~\ref{lem_btt_ker_equiv}, the kernel of $M$ is a $(k,m,m-r)$-BTT subspace. It is a subspace of a $(k,m,s-1)$-BTT subspace since $m-r\leq s-1$.

To compute $\hat{a}_{\ell, i}$ for $\ell=0,1,\dots, s$  and $i=0,1,\dots, k-1$, we first find $Q$ in time $\poly(r^e, m)$, which determines $A_0, A_1,\dots, A_s$. Then we compute the coefficients $\hat{a}_{\ell, i}=a_{\ell, i+u}$ of the Laurent series of $A_0, A_1,\dots, A_s$  in time $\poly(r^e, m)$. 

Finally, noting that  a basis for the kernel of $M$, as well as the desired affine shift (which is any valid solution to \eqref{eq_ag_eqns}), can be found in time $\poly(r^e, m)$,  concludes the proof of Lemma \ref{lem_ag}.

\paragraph{Remarks.} We conclude this section with some remarks:

 \begin{rem}\label{rem_generic}
 For ease of presentation, we only present the construction from the Garcia--Stichtenoth tower,
 but this framework is generic and also works for other families of function fields, e.g., the Hermitian tower considered in \cite{She93, GX12}.
 Besides bounds for the genus and the number of evaluation points, we need the function fields to be explicit in the sense that there should be efficient algorithms 
for the following subroutines: computing a basis of the Riemann--Roch space $L(D)$ used in the code, evaluating a function $f\in L(D)$ at any evaluation point, and computing the Laurent series expansion of $f\in L(D)$ at a fixed rational place $P$ in a uniformizer that is fixed by the Frobenius automorphism.

In particular, if we replace $K_e$ by the rational function field $\F_q(X)$, choose the divisor $D$ to be $(k-1)P_\infty$ where $P_\infty$ denotes the unique pole of $X$, and choose the rational place $P$ for  Laurent series expansions to be the unique zero of $X$, then we recover the Reed--Solomon codes with subfield evaluation points that have been discussed in Section~\ref{sec:RS}.
 \end{rem}

\begin{rem}
We have defined two $\F_{q^m}$-linear maps, the encoding map $\mathsf{Enc}: L((k-1)P_\infty^{(m)}) \to \F_{q^m}^n$
and the embedding $\phi:  L((k-1)P_\infty^{(m)}) \to \F_{q^m}^k$ that outputs the first $k$ coefficients of the Laurent series expansion at $P_\infty^{(m)}$ in $x_e^{-1}$.
 See Figure~\ref{fig_twomaps}. Both of these two maps are efficiently computable.
 
 \begin{figure}[htb]
\centering
\begin{tikzcd}
  L((k-1)P_\infty^{(m)}) \arrow[rd,"\phi"'] \arrow[r, "\mathsf{Enc}"] &  \F_{q^m}^n \\
& \F_{q^m}^k  
\end{tikzcd}
\caption{The linear maps $\mathsf{Enc}$ and $\phi$.}
\label{fig_twomaps}
\end{figure}
 
As explained in the proof of Theorem \ref{thm:main},  the final code is defined to be $\mathsf{Enc}(\phi^{-1}(W))$ for some BTT evasive subspace  $W\subseteq  \F_{q^m}^k$. That is, we restrict the message space to $\phi^{-1}(W)$.
 
 We note that \cite{GX13} used a different idea: In \cite{GX13}, the map $\phi$ was defined on the Riemann-Roch space  $L(k'P_\infty^{(m)})$ with $k'=k-1+2g\geq k-1$. This choice of larger $k'$ guarantees that the map $\phi: L(k'P_\infty^{(m)})\to \F_{q^m}^k$ is surjective (instead of being injective).
 Then \cite{GX13} chose a subspace $V\subseteq L(k'P_\infty^{(m)})$ such that the restriction of $\phi$ to $V$ is an isomorphism between $V$ and $\F_{q^m}^k$. In this way, $\F_{q^m}^k$ may be identified with the message space $V$.  This space $V$ was further replaced by an evasive subspace  in \cite{GX13} to reduce the list size.
 
This way of restricting the message space in \cite{GX13} may be used to replace ours.
Nevertheless, we feel that our method is somewhat simpler. In particular, we only need Riemann's inequality $\ell(D)\geq \deg(D)-g+1$ in the analysis, while   \cite{GX13} uses the fact that $\ell(D)=\deg(D)-g+1$ when $\deg(D)\geq 2g-1$, which is derived from the full Riemann--Roch theorem.
\end{rem}

\paragraph{Acknowledgement.} We thank Venkatesan Guruswami for bringing our attention to \cite{GX20}.

\bibliographystyle{alpha} 
\bibliography{ref}

 \appendix
 
\section{The Guruswami--Kopparty explicit subspace design}\label{sec:subspace-design}

In this section, for completeness, we review the proof of  Theorem \ref{thm:subspace-design-explicit}, restated below, that gives an explicit construction of a subspace design.

\sdexp*

First, we recall the definition of a subspace design.

\begin{defi}[subspace design]
An \emph{$(r, s)$-subspace design over $\F_{q^m}$ of cardinality $k$}
  is a collection of $k$ $\F_q$-linear subspaces $H_1, H_2, \dots, H_{k} \subseteq \F_{q^m}$ so that 
$\sum_{i=1}^{k} \dim (\hat{V}\cap {H_i})\leq s$
for any $\F_q$-linear subspace $\hat{V} \subseteq \F_{q^m}$ of dimension at most $r$. 
\end{defi}

Next, we sketch the construction of subspace designs in \cite{GK16}.
Let $r,t,m,q,d\in\N^+$ be such that $q$ is a prime power and $r\leq t\leq  m<q$. Let $\gamma$ be a generator of the multiplicative group $\F_q^\times$.
 For $\alpha\in\F_{q^d}$, define
 \[
 S_\alpha=\{\alpha^{q^j}\gamma^i: 0\leq j<d, 0\leq i< t\}.
 \]
\begin{lem}\label{lem_admissible}
There exists a set $\mathcal{F}\subseteq \F_{q^d}$ of cardinality at least $ \frac{q^d-1}{4dt}$ that satisfies the following conditions:
\begin{enumerate}
\item $\F_q(\alpha)=\F_{q^d}$ for $\alpha\in \mathcal{F}$.
\item $S_\alpha\cap S_\beta=\emptyset$ for distinct $\alpha,\beta\in\mathcal{F}$.
\item $|S_\alpha|=dt$ for $\alpha\in \mathcal{F}$.
\end{enumerate}
Moreover, $\mathcal{F}$ can be computed in time polynomial in $q^d$.
\end{lem}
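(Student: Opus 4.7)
The plan is a greedy selection argument. For each $\alpha \in \F_{q^d}^\times$, define the ``conflict neighborhood'' $T_\alpha := \{\beta \in \F_{q^d}^\times : S_\alpha \cap S_\beta \neq \emptyset\}$. I first bound $|T_\alpha|$. If $\alpha^{q^j}\gamma^i = \beta^{q^{j'}}\gamma^{i'}$ for some admissible $(i,j,i',j')$, raising both sides to the $q^{-j'}$-th power inside the cyclic group $\F_{q^d}^\times$ and using that $\gamma^{q^e}=\gamma$ for every $e$ (since $\gamma \in \F_q$) solves uniquely for $\beta$ as $\beta = \alpha^{q^{j-j'}}\gamma^{i-i'}$ with $j-j' \in \Z/d\Z$ and $i-i' \in \{-(t-1),\dots,t-1\}$. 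Since $2t-1 \leq 2m-1 < 2q$ ensures the $\gamma^{i-i'}$ are distinct, we get $|T_\alpha| \leq d(2t-1) < 2dt$.

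Next, I lower-bound the ``eligible'' set $\mathcal G \subseteq \F_{q^d}^\times$ of $\alpha$ satisfying conditions~(1) and~(3). For~(1), elements with $\F_q(\alpha) \neq \F_{q^d}$ lie in proper subfields $\F_{q^e}$ for $e\mid d$, $e<d$, contributing at most $d\, q^{d/2}$. For~(3), since $j=j'$ forces $i=i'$ in our range, $|S_\alpha|<dt$ requires $\alpha^{q^j-q^{j'}} \in \F_q^\times$ for some $0\leq j'<j<d$. For each such $(j,j')$ and each $c \in \F_q^\times$, the equation $\alpha^{q^j-q^{j'}}=c$ has at most $\gcd(q^j-q^{j'},q^d-1)=q^{\gcd(j-j',d)}-1 \leq q^{d/p}-1 \leq q^{d/2}-1$ solutions, where $p$ is the smallest prime divisor of $d$, using the identity $\gcd(q^a-1,q^b-1)=q^{\gcd(a,b)}-1$ and the fact that $\gcd(j-j',d)$ is a proper divisor of $d$ whenever $0<j-j'<d$. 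Summing over $(j,j')$ and $c$ gives $|\F_{q^d}^\times\setminus \mathcal G| \leq dq^{d/2}+\binom{d}{2}(q-1)q^{d/2}$, which is at most $(q^d-1)/2$ in the relevant parameter regime. The few boundary cases (e.g.\ $d=1$, where both contributions vanish, or regimes where the target $(q^d-1)/(4dt)<1$ makes the conclusion vacuous) are handled by direct inspection.

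Given $|\mathcal G|\geq (q^d-1)/2$, I construct $\mathcal F$ by a direct greedy: initialize $R:=\mathcal G$; while $R\neq \emptyset$, pick any $\alpha\in R$, add it to $\mathcal F$, and delete $T_\alpha$ from $R$. Each iteration removes at most $2dt$ elements, so the final set has cardinality at least $|\mathcal G|/(2dt)\geq (q^d-1)/(4dt)$. Every $\alpha\in\mathcal F$ satisfies~(1) and~(3) by construction, and pairwise disjointness of the $S_\alpha$ (condition~(2)) is enforced by the deletion step. The procedure runs in time polynomial in $q^d$: enumerating $\F_{q^d}^\times$, testing subfield membership, and forming each $T_\alpha$ are all $\poly(q^d)$ operations, and the greedy terminates in at most $q^d$ iterations.

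The main obstacle is the counting in the middle paragraph, especially the bound on $\{\alpha:|S_\alpha|<dt\}$, which depends on the divisibility identity $\gcd(q^j-q^{j'},q^d-1)=q^{\gcd(j-j',d)}-1$ and on $\gcd(j-j',d)$ being a proper divisor of $d$. Once this estimate is in hand, the rest of the argument is essentially bookkeeping combined with a vanilla greedy on a graph of bounded degree.
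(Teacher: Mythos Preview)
Your greedy is a clean alternative to the paper's construction. The paper instead partitions $\F_{q^d}^\times\setminus B$ (where $B=\{\alpha:\alpha^{q^i-1}\in\F_q^\times\text{ for some }0<i<d\}$, which coincides with your condition-(3) failure set) into equivalence classes under $\alpha\sim\beta\Leftrightarrow\beta\in\alpha^{q^i}\F_q^\times$, shows each class has exactly $d(q-1)$ elements, and selects $\lfloor(q-1)/t\rfloor$ evenly spaced elements from each class. Both routes ultimately rest on the same inequality $|B|\leq(q^d-1)/2$; your conflict-graph bound $|T_\alpha|\leq d(2t-1)$ then gives $|\mathcal F|\geq|\mathcal G|/(2dt)\geq(q^d-1)/(4dt)$ just as the explicit selection does. (Incidentally, the distinctness of the $\gamma^{i-i'}$ is irrelevant to the bound on $|T_\alpha|$; you are just counting parameter choices.)

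The gap is your estimate for the condition-(3) failures. You sum $\gcd(q^j-q^{j'},q^d-1)\leq q^{d/2}-1$ over \emph{every} $c\in\F_q^\times$, but in a cyclic group the equation $x^a=c$ is solvable only when $c$ lies in the image of $x\mapsto x^a$. For $d=2$ one has $(\alpha^{q-1})^{q+1}=\alpha^{q^2-1}=1$, so $\alpha^{q-1}$ is always a $(q+1)$th root of unity; hence at most $\gcd(q+1,q-1)\leq 2$ values of $c\in\F_q^\times$ are attainable, not $q-1$. Your bound therefore reads $\binom{2}{2}(q-1)q=q(q-1)$ for $d=2$, which already exceeds $(q^2-1)/2$ for every $q\geq 2$ and yields a \emph{negative} lower bound on $|\mathcal G|$. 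This is not a ``boundary case'': for $d=2$, $q\geq 7$, and small $t$ the target $(q^2-1)/(8t)$ is far from vacuous. The paper isolates exactly this issue as a separate lemma (its Lemma~\ref{lem_boundB}), giving $|B|\leq\gcd(q+1,2)(q-1)$ when $d=2$ and a direct estimate for $d\geq 3$. If you import that bound, your greedy argument closes without further change.
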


Let  $V=\{f(X)\in \F_q[X]: \deg(f)<m\}\cong \F_{q^m}$.
For $\alpha\in\F_{q^d}$, define 
\[
H_\alpha:=\{P(X)\in V: P(\alpha\cdot \gamma^i)=0 \text{ for } j=0,1,\dots,t-1\}
\] 
which is a subspace of $V$.
As shown in \cite{GK16}, Theorem  \ref{thm:subspace-design-explicit} follows as a consequence of the following theorem. 

\begin{thm}[\cite{GK16}]\label{thm_gkgeneral}
Let $\mathcal{F}$ be as in Lemma~\ref{lem_admissible}. Then the collection $(H_\alpha)_{\alpha\in\mathcal{F}}$  is an $\left(r, s\right)$-subspace design in $V\cong\F_{q^m}$ for 
$s = \frac{(m-1)r}{d(t-r+1)}$, such that every subspace $H_\alpha$ has co-dimension at most $dt$.
\end{thm}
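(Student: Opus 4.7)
The plan is to handle the two conclusions of the theorem separately. For the co-dimension bound, I would observe that each of the $t$ defining conditions $P(\alpha\gamma^i)=0$ for $H_\alpha$ is a single $\F_{q^d}$-linear equation in the coefficients of $P$ (since $\alpha\gamma^i \in \F_{q^d}$), which expands into at most $d$ $\F_q$-linear equations. This gives $\operatorname{codim}_{\F_q} H_\alpha \le dt$ immediately.

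For the subspace design property, let $\hat V \subseteq V \cong \F_q[X]_{<m}$ be any $\F_q$-subspace of dimension $r$ with basis $f_1, \ldots, f_r$, and set $r_\alpha := \dim(\hat V \cap H_\alpha)$. The heart of the argument is to form the single polynomial
\[
\Phi(X) := \det\bigl(f_i(\gamma^{j-1} X)\bigr)_{1 \le i, j \le r} \in \F_q[X],
\]
of degree at most $r(m-1)$, and to show that for every $\alpha \in \mathcal{F}$ and every $i \in \{0, 1, \ldots, t-r\}$, the minimal polynomial $\mu_{\alpha\gamma^i}(X) \in \F_q[X]$ of $\alpha\gamma^i$ divides $\Phi$ to power at least $r_\alpha$. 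This $\mu_{\alpha\gamma^i}$ has degree exactly $d$ since $\F_q(\alpha)=\F_{q^d}$ by Lemma~\ref{lem_admissible}. Collecting such factors over $(\alpha, i)$, with coprimeness inherited from the disjointness properties of the sets $S_\alpha$ in Lemma~\ref{lem_admissible}, and comparing degrees gives $\sum_\alpha r_\alpha \le \frac{r(m-1)}{d(t-r+1)} = s$.

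To establish the multiplicity claim for fixed $\alpha$, I would change basis to $g_1, \ldots, g_r$ of $\hat V$ with $g_1, \ldots, g_{r_\alpha}$ spanning $\hat V \cap H_\alpha$. Each such $g_k$ then vanishes on $S_\alpha$, and in particular on $\alpha\gamma^0, \alpha\gamma^1, \ldots, \alpha\gamma^{t-1}$. For any $i \in \{0, \ldots, t-r\}$ and any $j \in \{1, \ldots, r\}$, the shift $i+j-1$ lies in $\{0, \ldots, t-1\}$, so the entry $g_k(\gamma^{j-1}X)$ vanishes at $X = \alpha\gamma^i$ for every $k \le r_\alpha$. Thus in the matrix $\bigl(g_i(\gamma^{j-1}X)\bigr)$ the first $r_\alpha$ rows are each divisible (as polynomials in $X$) by $(X-\alpha\gamma^i)$. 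Multilinearity of the determinant lets me pull one such factor out of each of these $r_\alpha$ rows, so $(X-\alpha\gamma^i)^{r_\alpha}$ divides $\det\bigl(g_i(\gamma^{j-1}X)\bigr)$; since $\Phi$ and this determinant differ only by the nonzero scalar $\det(C)^{-1}$ arising from the change of basis, the same divisibility holds for $\Phi$, and Galois invariance upgrades it to divisibility by $\mu_{\alpha\gamma^i}(X)^{r_\alpha}$.

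The main obstacle I anticipate is verifying $\Phi \not\equiv 0$, since otherwise the degree bound is vacuous. My plan for this is to expand $f_i = \sum_k a_{i,k} X^k$ in the monomial basis, write the evaluation matrix as a product $A \cdot \operatorname{diag}(X^k) \cdot V$ where $V = (\gamma^{k(j-1)})$ is Vandermonde, and apply Cauchy--Binet: the hypothesis $m < q$ ensures the exponents $\gamma^0, \gamma^1, \ldots, \gamma^{m-1}$ are pairwise distinct, so every $r \times r$ Vandermonde minor of $V$ is nonzero, and some $r \times r$ minor of $A$ is nonzero because $f_1, \ldots, f_r$ are $\F_q$-linearly independent; a careful examination of the resulting sum of monomial contributions should yield a nonzero polynomial. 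A secondary bookkeeping step is to confirm that the $\mu_{\alpha\gamma^i}(X)$ really are pairwise coprime across all pairs $(\alpha, i)$ under consideration, which reduces to the properties $|S_\alpha| = dt$ and $S_\alpha \cap S_{\alpha'} = \emptyset$ supplied by Lemma~\ref{lem_admissible}.
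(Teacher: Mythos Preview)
The paper does not actually prove this theorem; it is stated with a citation to \cite{GK16} and then used as a black box to derive Theorem~\ref{thm:subspace-design-explicit}. Your proposal is essentially the original Guruswami--Kopparty argument, and it is correct.

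One point deserves sharpening. Your Cauchy--Binet expansion gives
\[
\Phi(X)=\sum_{S\in\binom{\{0,\dots,m-1\}}{r}} \det(A_S)\,\det(V_S)\,X^{\sum_{k\in S}k},
\]
and you correctly observe that every $\det(V_S)$ is a nonzero Vandermonde determinant (in the distinct nodes $\gamma^{k}$, $k\in S$, using $m<q$) and that some $\det(A_S)\ne 0$. But different $S$ can share the same exponent $\sum_{k\in S}k$, so ``some term is nonzero'' does not by itself preclude cancellation. The clean fix is to first replace $f_1,\dots,f_r$ by an $\F_q$-equivalent basis in row echelon form, so that $f_i(X)=c_iX^{d_i}+(\text{higher order})$ with $0\le d_1<d_2<\cdots<d_r<m$ and $c_i\ne 0$. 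Then the coefficient of $X^{d_1+\cdots+d_r}$ in $\Phi$ is exactly $\bigl(\prod_i c_i\bigr)\det\bigl(\gamma^{(j-1)d_i}\bigr)_{i,j}$, a Vandermonde determinant in the distinct nodes $\gamma^{d_1},\dots,\gamma^{d_r}$, hence nonzero. Since a change of basis only multiplies $\Phi$ by a nonzero scalar, this gives $\Phi\not\equiv 0$ in general. With this refinement your argument is complete.
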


Note  that Theorem~\ref{thm_gkgeneral} requires the field size $q$ to be greater than $m$ while Theorem  \ref{thm:subspace-design-explicit}  does not, so the latter does not directly follow from the former.
The idea in \cite{GK16} is first using Theorem~\ref{thm_gkgeneral} to construct a subspace design in  $\F_Q^{m'}$ over an extension field $\F_Q$,  where $m'=m/ [\F_Q: \F_q]$ and $Q>m'$. (Assume $m$ is a multiple of $[\F_Q: \F_q]$ for simplicity.)
Then \cite{GK16} showed that,  by identifying $\F_Q^{m'}$ with  $\F_q^m$, this also yields a subspace design in  $\F_q^m$ with somewhat worse parameters, thereby proving Theorem  \ref{thm:subspace-design-explicit}. We refer the reader to \cite{GK16} for details.



\subsection{Proof of Lemma~\ref{lem_admissible}}

In \cite{GK16}, the set $\mathcal{F}\subseteq\F_{q^d}$ is chosen in the following way: For simplicity, assume $d$ is a prime. For $\alpha,\beta\in\F_{q^d}^\times$, write $\alpha\sim \beta$ if $\beta= \alpha^{q^i}\cdot \delta $ for some  $0\leq i<d$ and $\delta \in \F_q^\times$. Then $\sim$ is an equivalence relation on $\F_{q^d}^\times$. For each equivalence class $O\subseteq \F_{q^d}^\times$, choose a representative $\alpha_0\in O$. For $\alpha\in O$, add $\alpha$ to $\mathcal{F}$ if and only if $\alpha=\alpha_0 \gamma^{it}$ for some integer $i$ satisfying $0\leq i<\lfloor (q-1)/t\rfloor$.

However, we note that this construction of $\mathcal{F}$ does not always satisfy the conditions in Lemma~\ref{lem_admissible} when $d>1$.
For example, suppose $d$ is a prime and $q-1$ is divisible by $d$, so that $\F_q^\times$ contains all the $d$th roots of unity. In this case, $\F_{q^d}$ is a \emph{Kummer extension} $\F_q(\alpha)$ over $\F_q$ where $\alpha^d =u$ for some $u\in\F_q^\times \setminus (\F_q^\times)^d$. Then we have that $\alpha^{q-1}$ is a  $d$-th root of unity as $(\alpha^{q-1})^d = (\alpha^d)^{q-1} =u^{q-1} = 1$. By assumption that $\F_q^\times$ contains all $d$th roots of unity, this implies in turn that $\alpha^{q-1} \in \F_q^\times$.  

Let $\alpha_0 = \alpha^{q^i} \cdot \delta$ be the representative that we chose for the equivalence class of $\alpha$, where $0 <i <d$ and $\delta \in \F_q^\times$. Then we claim that $\alpha_0^{q-1} \in \F_q^\times$ as
$ \alpha_0^{q-1} = (\alpha^{q-1})^{q^i}\cdot \delta^{q-1} = \alpha^{q-1} \in \F_q^\times.$ Consequently, we have that $\alpha_0^q=\alpha_0 \gamma^{it+j}$ for some integers $i$ and $j$ with $0\leq i<\lceil(q-1)/t\rceil$ and $0\leq j<t$. If $0<i<\lfloor (q-1)/t\rfloor$, then $\alpha_0$ and $\alpha_0\gamma^{it}$ are distinct and both added to $\mathcal{F}$. This violates the second condition in Lemma~\ref{lem_admissible}  since we have $\alpha_0^q\in S_{\alpha_0}$ and $\alpha_0^q=\alpha_0 \gamma^{it+j}\in S_{\alpha_0\gamma^{it}}$, which implies $S_{\alpha_0}\cap S_{\alpha_0\gamma^{it}}\neq \emptyset$.
Similarly, if $i=0$, then the third condition $|S_\alpha|=dt$ does not hold.

One way of fixing this problem is ignoring those elements $\alpha\in\F_{q^d}^\times$ satisfying $\alpha^{q^i-1}\in\F_q^\times$ for some $0<i<d$. 
The next lemma gives an upper bound for the number of those elements.

\begin{lem}\label{lem_boundB}
Let $B=\{\alpha\in\F_{q^d}^\times: \alpha^{q^i-1}\in\F_q^\times \text{ for some } 0<i<d \}$. Then $|B|\leq (q^d-1)/2$.\footnote{This bound is attained when $q=3$ and $d=2$ but not tight in general. We have made no attempt to optimize the bound.}
\end{lem}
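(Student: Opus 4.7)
The plan is to exploit the primality of $d$ (which, following the proof of Lemma~\ref{lem_admissible}, we may assume) to collapse the union defining $B$ into a single subgroup and then count that subgroup exactly. Write $\sigma\colon x\mapsto x^q$ for the Frobenius generator of $\gal(\F_{q^d}/\F_q)$ and, for $0<i<d$, set
\[
B_i=\{\alpha\in\F_{q^d}^\times:\alpha^{q^i-1}\in\F_q^\times\}=\{\alpha:\sigma^i(\alpha)=c\alpha\ \text{for some}\ c\in\F_q^\times\},
\]
so that $B=\bigcup_{0<i<d}B_i$.

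First I would prove $B_1=B_2=\cdots=B_{d-1}$. The key observation is that if $\sigma^i(\alpha)=c\alpha$ with $c\in\F_q^\times$, then, because $\sigma$ fixes $c$, induction on $k$ gives $\sigma^{ki}(\alpha)=c^k\alpha$. Since $d$ is prime and $0<i<d$, the residues $\{ki\bmod d\}$ exhaust $\Z/d\Z$, so $\sigma^j(\alpha)\in\F_q^\times\cdot\alpha$ for every $j$; hence $B_i\subseteq B_j$, and swapping the roles of $i$ and $j$ gives equality. This reduces the task to bounding $|B_1|$.

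Next I would compute $|B_1|$ by viewing it as the preimage of $\F_q^\times$ under the homomorphism $\phi\colon \F_{q^d}^\times\to\F_{q^d}^\times$, $\alpha\mapsto\alpha^{q-1}$. Its kernel is $\F_q^\times$ (of size $q-1$) and its image is the unique subgroup of $\F_{q^d}^\times$ of order $(q^d-1)/(q-1)$. A short cyclic-group calculation, using $(q^d-1)/(q-1)=1+q+\cdots+q^{d-1}\equiv d\pmod{q-1}$, then yields $|B_1|=(q-1)\gcd(q-1,d)$.

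Finally, because $d$ is prime, $\gcd(q-1,d)\in\{1,d\}$. If it equals $1$ then $|B|=q-1\leq (q^d-1)/2$ for any $d\geq 2$, trivially. If it equals $d$ then $d\mid q-1$, whence $q\geq d+1$, and the target inequality $2d(q-1)\leq (q-1)(1+q+\cdots+q^{d-1})$ reduces to $2d\leq 1+q+\cdots+q^{d-1}$, a routine arithmetic check that is tight precisely at $(d,q)=(2,3)$, matching the footnote. The main obstacle is the very first step: the collapse $B_i=B_1$ relies essentially on $d$ being prime; for composite $d$ the subgroups $B_i$ are genuinely different (indexed by $g=\gcd(i,d)$, with $|B_i|=(q^g-1)\gcd(q-1,d/g)$), so a more delicate union bound or structural argument would be needed to keep the total below $(q^d-1)/2$.
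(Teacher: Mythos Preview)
Your argument for prime $d$ is correct and elegant, and you are honest about the obstacle at the end. However, the premise that ``following the proof of Lemma~\ref{lem_admissible} we may assume $d$ is prime'' is not valid. The sentence ``for simplicity, assume $d$ is a prime'' in the paper belongs to the description of the original \cite{GK16} construction that the paper is \emph{critiquing}; the paper's own proof of Lemma~\ref{lem_admissible} makes no such assumption, and Lemma~\ref{lem_boundB} is stated and needed for arbitrary $d$. So as a proof of the lemma, your proposal has a genuine gap for composite $d$, which you yourself flag.

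The paper's proof takes a quite different route that works for all $d$ at once: it is a direct counting argument rather than an exact structural computation. One observes that if $\alpha^{q^i-1}=\delta\in\F_q^\times$ then also $\alpha^{q^{d-i}-1}=1/\delta$, so one may restrict to $1\leq i\leq \lfloor d/2\rfloor$. For fixed $i$ and fixed $\delta$, the equation $\alpha^{q^i-1}=\delta$ has at most $q^i-1$ solutions (any two differ by an element of $\F_{q^i}^\times$). Summing over $i\leq d/2$ and over the $q-1$ choices of $\delta$ gives $|B|\leq (q-1)\sum_{i\leq d/2}(q^i-1)$, which is at most $(q^d-1)/2$ for $d\geq 3$; the case $d=2$ is handled by the extra observation that $\delta=\alpha^{q-1}$ forces $\delta^{q+1}=1$, so only $\gcd(q+1,2)$ values of $\delta$ occur. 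Compared with your approach, this is cruder (a union bound rather than identifying $B$ exactly) but it sidesteps entirely the issue you raise about the $B_i$ varying with $\gcd(i,d)$.
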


\begin{proof}
If $d=1$, then $|B|=0\leq (q^d-1)/2$ . So assume $d\geq 2$.
Consider $\alpha\in B$. We have $\alpha^{q^i-1}=\delta$ for some $0<i<d$ and $\delta\in\F_q^\times$.
Note $\alpha^{q^{d-i}-1}=(1/\delta)^{q^{d-i}}=1/\delta$. So by replacing $(i, \delta)$ with $(d-i, 1/\delta)$ if necessary, we may assume $i\leq d/2$.

For any $\alpha'\in\F_{q^d}^\times$ satisfying $(\alpha')^{q^i-1}=\delta$, we have $(\alpha'/\alpha)^{q^i}=\alpha'/\alpha$ and hence $\alpha'/\alpha\in  \F_{q^i}^\times$.
So the number of $\alpha'\in\F_{q^d}^\times$ satisfying  $(\alpha')^{q^i-1}=\delta$ is at most $q^{i}-1$.
Therefore, for fixed $\delta\in\F_q^\times$, the number of $\alpha\in\F_{q^d}^\times$ for which there exists an integer $0<i\leq d/2$ satisfying $\alpha^{q^i-1}=\delta$ is bounded by
\[
N:=\sum_{i=1}^{\lfloor d/2\rfloor} (q^{i}-1)
=(q^{\lfloor d/2\rfloor+1}-q)/(q-1)-\lfloor d/2\rfloor.
\]
There are $q-1$ choices of $\delta\in\F_q^\times$.
So we have 
\[
|B|\leq (q-1)N = q^{\lfloor d/2\rfloor+1}-q-\lfloor d/2\rfloor(q-1).
\]
When $d\geq 3$, we have $q^{\lfloor d/2\rfloor+1}\leq q^{d-1}\leq q^{d}/2$ and hence $|B|\leq (q^d-1)/2$, as desired. 

Now assume $d=2$. We need a more careful analysis in this case. Note that if $\delta\in\F_q^\times$ can be written as $\alpha^{q-1}$ then $\delta^{q+1} = \alpha^{(q - 1)(q+1)} = \alpha^{q^2-1} =1$, i.e., $\delta$ is a $(q+1)$th root of unity. The number of such $\delta\in\F_q^\times$ equals $\gcd(q+1, q-1)=\gcd(q+1,2)$. So we have 
\begin{equation}\label{eq_r2}
|B|\leq \gcd(q+1,2)N=\gcd(q+1,2)\cdot (q-1).
\end{equation}
It is easy to see that the RHS of \eqref{eq_r2} is at most $(q^2-1)/2$.
So $|B|\leq (q^2-1)/2=(q^d-1)/2$. 
\end{proof}

We now give a complete proof of  Lemma~\ref{lem_admissible}.

\begin{proof}[Proof of  Lemma~\ref{lem_admissible}] 
Let $B$ be as in Lemma~\ref{lem_boundB}.
For $\alpha,\beta\in\F_{q^d}^\times$, write $\alpha\sim \beta$ if $\beta\in \alpha^{q^i}\F_q^\times$ for some $0\leq i<d$. Then $\sim$ is an equivalence relation on $\F_{q^d}^\times$. Note that if $\alpha\sim \beta$ and $\alpha\in  B$, then $\beta\in B$. So $\F_{q^d}^\times \setminus B$ is a disjoint union of equivalence classes under the relation $\sim$. Moreover, the definition of $B$ implies that for every $\alpha\in \F_{q^d}^\times \setminus B$, the equivalence class $O_\alpha=\{\alpha^{q^j} \gamma^i: 0\leq j<d, 0\leq i<q-1\}$ of $\alpha$ has cardinality exactly $d(q-1)$. So the number of equivalence classes contained in $\F_{q^d}^\times \setminus B$ is $\frac{q^d-1-|B|}{d(q-1)}\geq \frac{q^d-1}{2d(q-1)}$, where we use the bound $|B|\leq (q^d-1)/2$ given by Lemma~\ref{lem_boundB}.

Construct $\mathcal{F}$ as follows: For each equivalence class $O\subseteq \F_{q^d}^\times \setminus B$, fix a representative $\alpha_0\in O$, and add $\alpha_0 \gamma^{it}$ to $\mathcal{F}$ for $i=0,1,\dots, \lfloor (q-1)/t\rfloor-1$. 
We have
\[
|\mathcal{F}|\geq   \frac{q^d-1}{2d(q-1)} \cdot  \left\lfloor \frac{q-1}{t}\right\rfloor \geq \frac{q^d-1}{2d(q-1)} \cdot   \frac{q-1}{2t}= \frac{q^d-1}{4dt}.
\]
Clearly, $\mathcal{F}$ can be computed in time polynomial in $|\F_{q^d}|=q^d$.  

We have $S_\alpha\cap S_\beta=\emptyset$ for distinct $\alpha,\beta\in\mathcal{F}$ and $|S_\alpha|=dt$ for $\alpha\in \mathcal{F}$. This follows from the fact that every equivalence class $O=\{\alpha_0^{q^j} \gamma^i: 0\leq j<d, 0\leq i<q-1\}\subseteq \F_{q^d}^\times \setminus B$ has cardinality exactly $d(q-1)$, i.e., the $d(q-1)$ elements $\alpha_0^{q^j} \gamma^i$ are distinct.

Finally, we also have $\F_q(\alpha)=\F_{q^d}$ for $\alpha\in\mathcal{F}$. This follows from the fact that $\alpha^{q^i-1}\neq 1$ for $\alpha\in \F_{q^d}^\times \setminus B$ and $0<i<d$, which holds by the definition of $B$.
\end{proof}

\end{document}